\newlength\myindent
\newlength\mycolwid
\newtheorem{theorem}{Theorem}
\newtheorem{definition}[theorem]{Definition}
\newtheorem{example}[theorem]{Example}
\title{Epidemics in Multipartite Networks: Emergent Dynamics}
\author{Augusto Santos$^{*}$, Jos\'e M.~F.~Moura$^{\natural}$, and Jo\~{a}o M. F. Xavier$^\dagger$
\thanks{$*$ A.~Santos is with the Dep.~of Electrical and Computer Engineering, Carnegie Mellon University, USA, and Instituto de Sistemas e Robotica (ISR), Instituto Superior T\'{e}cnico (IST), Av.~Rovisco Pais, Lisboa, Portugal (augustos@andrew.cmu.edu).}
\thanks{$^\natural$J.~M.~F.~Moura is with the Dep.~of Electrical and Computer Engineering,
Carnegie Mellon University, Pittsburgh, PA, USA 15213, ph:(412)268-6341, fax: (412)268-3890 (moura@ece.cmu.edu).}
\thanks{$\dagger$ J.~M.~F.~Xavier is with ISR, IST, Av.~Rovisco Pais, Lisboa, Portugal (jxavier@isr.ist.utl.pt).}
\thanks{The work of Jos\'e M.~F.~Moura and Augusto Santos was supported in part by the National Science Foundation under Grant \#~CCF--1011903 and in part by the Air Force Office of Scientific Research under Grant \#~FA--95501010291.

The work of J.~M.~F.~Xavier and A.~Santos was also supported by the Funda\c{c}\~{a}o para a Ci\^{e}ncia e a Tecnologia (FCT) (Portuguese Foundation for Science and Technology) through the Carnegie Mellon$|$Portugal Program under Grant SFRH/BD/33516/2008, CMU-PT/SIA/0026/2009 and SFRH/BD/33518/2008, and by ISR/IST pluriannual funding (POSC program, FEDER).}}
\begin{document}
\maketitle \thispagestyle{empty} \maketitle



\begin{abstract}

Single virus epidemics over \textit{complete} networks are widely explored in the literature as the fraction of infected nodes is, under appropriate microscopic modeling of the virus infection, a Markov process. With non-complete networks, this macroscopic variable is no longer Markov. In this paper, we study virus diffusion, in particular, multi-virus epidemics, over \textit{non}-complete stochastic networks. We focus on multipartite networks. In companying work~\cite{augusto_moura_emergent}, we show that the peer-to-peer local random rules of virus infection lead, in the limit of large multipartite networks, to the emergence of structured dynamics at the \emph{macroscale}. The \textbf{exact} fluid limit evolution of the fraction of nodes infected by each virus strain across \emph{islands} obeys a set of nonlinear coupled differential equations, see~\cite{augusto_moura_emergent}. In this paper, we develop methods to analyze the qualitative behavior of these limiting dynamics, establishing conditions on the virus micro characteristics and network structure under which a virus persists or a natural selection phenomenon is observed.
\end{abstract}

\textbf{Keywords}: Virus diffusion, epidemics, multipartite network, qualitative behavior. 








\section{Introduction}\label{sec:introducao}
This paper studies the \textit{macroscopic scale} dynamics of a multi-virus epidemics or diffusion over large stochastic \textit{non-complete} networks of agents. Questions of interest include when a virus persists, when among multiple strains of virus we observe survival of the fittest, or what is the distribution of the fraction of infected agents over the various strains of virus in the network. These are well studied when the network is \textit{complete}, i.e., any agent interacts directly with any other agent, and a vast body of literature describes the dynamics of the fraction of infected nodes by nonlinear ordinary differential equations~(ODEs) that are arrived at through conservation or full mixing arguments, \cite{Daley}. These nonlinear ODEs can also be rigorously derived because the fraction of infected nodes in the complete network is a Markov process under the standard independence assumptions on the peer-to-peer (\textit{microscopic}) infection process, and the resulting \textit{macroscopic} or \textit{global behavior} of the epidemics is the fluid limit of this Markov process as the size of the complete network grows to infinity, see \cite{Antunes,Antunes2}. When the network is not complete, the fraction of network infected nodes is no longer Markov and studying the network global or macroscopic behavior is a major challenge. Attempts to overcome these difficulties make unsupported or unrealistic assumptions like the independence of the (random) states of infection of neighboring agents, \cite{Mieghem}. In~\cite{augusto_moura_emergent}, we derive, from a basic microscopic SIS -- susceptible-infected-susceptible -- infection model and without making unrealistic simplifying assumptions, the mean field ODEs describing the global behavior of epidemics for a class of non-complete stochastic networks, namely, multipartite networks. The resulting mean field equations are nonlinear coupled ODEs. This paper studies the qualitative behavior of these mean field ODEs, i.e., the stability of their equilibria dynamics, to establish the emergent network macroscopic behaviors. Their coupled nonlinear behavior defies the use of Lyapunov methods. We develop a new methodology that upper- and lower-bounds the limiting dynamics of the stochastic network by the much simpler to analyze dynamics of first order nonlinear systems. We consider single- and multi-virus epidemics and arbitrary regular multipartite networks. This paper, together with~\cite{augusto_moura_emergent}, derives rigorously from basic peer-to-peer principles of diffusion the characterization of the global diffusion or infection behavior in multipartite networked systems in the limit of large systems. We believe this to be the first \emph{microscopic-to-macroscopic} study that goes beyond complete networked systems to obtain the \textbf{exact} impact of a non-complete topology on global infection and diffusion dynamics.


\textbf{Summary of the paper.} Section~\ref{sec:problemsetup} sets-up the model of microscopic epidemics, describes the multipartite network topology, and recalls the mean field equations in~\cite{augusto_moura_emergent} governing the limiting dynamics. Section~\ref{sec:qualitativeanalysis} establishes the qualitative behavior of the limiting dynamics for single and bi-viral epidemics in a bipartite network. Section~\ref{sec:multipartite} extends these results to arbitrary general regular multipartite networks. Concluding remarks are in Section~\ref{sec:asymmetry}.


\section{Problem Setup}
\label{sec:problemsetup}
%

This section presents the underlying stochastic network model for the peer-to-peer virus infection and the mean field equations describing the macroscopic epidemics dynamics in the limit of large networks established in~\cite{augusto_moura_emergent}.

 The environment where \emph{actions} take place is a network modeled as an undirected simple graph (no self-loops) $G=\left(V,E\right)$, where $V\subset \mathbb{N}$ is the set of nodes and $E=\left\{\left\{i,j\right\}\,:\,i,\,j\in V\mbox{ and }i\neq j\right\}$ is the set of edges. We write $i\sim j$ if~$i$ and~$j$ are neighbors, i.e., $\left\{i,j\right\}\in E$. The number of nodes is $|V|=N$.
On this network, the infection or diffusion process $\left(\mathbf{X}^N(t)\right)$ is the microstate of the network that collects the \emph{state} of each node $i\in V$ for every $t$, $t\geq 0$.
%
%
%
\subsection{Microscopic infection model: Susceptible-infected-susceptible (SIS)}
\label{subsec:microscopicinfection}
We assume that all stochastic processes are supported in a single probability space $\left(\Omega,\mathcal{F},\mathbb{P}\right)$.

\textit{State}: With single virus, the microstate of the network is an~$N$-dimensional vector state~$\mathbf{X}^N(t)$ where its $i$th-component $X^N_i(t,\omega)$ at time $t\geq 0$ and for the realization $\omega\in\Omega$ can be in one of two states, i.e., it is binary valued: $X^N_i(t,\omega)=1$ if node~$i$ is infected (or contaminated), and $X^N_i(t,\omega)=0$ if it is healthy. These are the only two possible states. For $K$~multiple strains of virus, the microstate is a $N\times K$ matrix $\mathbf{X}^N(t)\in\left\{0,1\right\}^{N\times K}$ where the rows index the nodes and the columns index the strains. A node $i\in V$ is infected with strain $k\in \left\{1,\ldots,K\right\}$ at time $t$, $t\geq 0$, if $X^N_{ik}(t)=1$, and we say that node $i$ is $k$-infected. Node $i\in V$ is healthy at time $t$, $t\geq 0$, if $X^N_{ik}(t)=0$ for all $k\in \left\{1,\ldots,K\right\}$. The microstate is the mapping $\mathbf{X}^N$ summarized as:
\begin{align*}
\mathbf{X}^N:\:\mathbb{R}_{+}\times \Omega & \longrightarrow \left\{0,1\right\}^{N\times K}, \:\:
\left(t,\omega\right)   \longmapsto  \mathbf{X}^N(t,\omega).
\end{align*}

\textit{Local exclusion principle}: At any $t\geq 0$, a node may only be infected by a single strain. If a node~$i$ infected by strain~$k_1$ heals at $t=t^*\geq 0$, it may be infected by strain~$k_2$ at $t^\dag>t^*$. The rows in the microstate $\mathbf{X}^N(t)$ are either zero rows or have a single nonzero entry, which is a~$1$.

\textit{Actions}: We assume a \textit{susceptible-infected-susceptible} (SIS) model. When a node is infected, it is a matter of time to either contaminate its one-hop peers or to heal. We describe both the time for infection and for healing as independent exponentially distributed random variables. More specifically, each node has $1+K$ independent clocks, one for healing and the other~$K$ clocks for the corresponding $k$-infection. Once a node is $k$-infected, all clocks, for healing and for $k$-infection, are activated and will ring after exponentially distributed random times. If the healing clock of an infected node rings first, say the healing clock of infected node~$i$ rings, node~$i$ heals. If the clock for the $k$-infection of any of the infected nodes rings first, say for infected node~$i$, node~$i$ infects a uniformly randomly chosen neighbor with virus strain~$k$. If the chosen peer is already infected (with any strain), by the local exclusion principle, the network microstate $\left(\mathbf{X}^N(t)\right)$ stays unchanged.  Thus, our building block is a sequence of independent, identical, exponentially distributed random variables $T^{c}_n\sim {\sf Exp}\left(\gamma_k\right)$ (superindex~$c$ for contamination) and $T^{h}_n\sim {\sf Exp}\left(\mu_k\right)$ (superindex~$h$ for healing). The parameters $\gamma_k$ and $\mu_k$ are the rates of $k$-infection and healing, indexed by the underlying strains. A strain of virus is characterized by a pair $\left(\gamma_k, \mu_k\right)$ and two strains~$1$ and~$2$ are different if $\left(\gamma_1, \mu_1\right)\neq\left(\gamma_2, \mu_2\right)$.
\begin{figure} [hbt]
\begin{center}
\includegraphics[scale= 0.3]{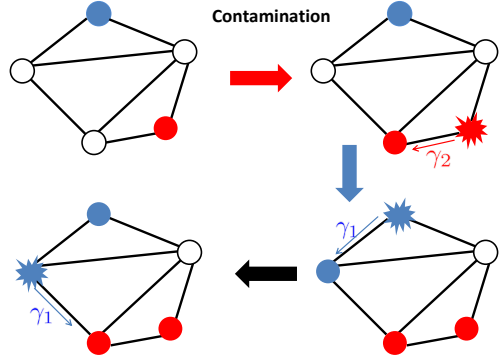}
\caption{Viral strains are characterized by transmission and healing rates. Once the clock for infection rings, a neighbor is randomly picked to be infected, unless it is already infected. The bottom left picture emphasizes the local exclusion principle.}\label{fig:Contamination22}
\end{center}
\end{figure}
Figure~\ref{fig:Contamination22} illustrates the dynamics for a two virus infection and the local exclusion principle--an infected node is not infected by another virus before healing first.

From the microscopic description of the law of evolution, the microstate $\left(\mathbf{X}^N(t)\right)$ is a Markov process.
 For a single virus with parameters $(\gamma,\mu)$, the generic entry $Q\left(\mathbf{X}^N(t),\mathbf{X}^N(t)+\mathbf{v}\right)$ of its transition rate matrix $\mathbf{Q}^N\in\mathbb{R}^{2^N\times2^N}$ is:
{\allowdisplaybreaks
\begin{align*}
Q\left(\mathbf{X}^N(t),\mathbf{X}^N(t)+\mathbf{e}_i\right) & = \gamma\sum_{j\sim i}X^N_j(t)\frac{1}{d_j},\phantom{0} \mbox{$i$ healthy}
\\
Q\left(\mathbf{X}^N(t),\mathbf{X}^N(t)-\mathbf{e}_i\right) & = 0, \phantom{\gamma\sum_{j\sim i}X^N_j(t)\frac{1}{d_j}} \mbox{$i$ healthy}\\
Q\left(\mathbf{X}^N(t),\mathbf{X}^N(t)+\mathbf{e}_i\right) & = 0, \phantom{\gamma\sum_{j\sim i}X^N_j(t)\frac{1}{d_j}} \mbox{$i$ infected}\\
Q\left(\mathbf{X}^N(t),\mathbf{X}^N(t)-\mathbf{e}_i\right) & = \mu, \phantom{\gamma\sum_{j\sim i}X^N_j(t)\frac{1}{d_j}} \mbox{$i$ infected}\\
Q\left(\mathbf{X}^N(t),\mathbf{X}^N(t)-\mathbf{v}\right) & = 0, \phantom{\gamma\sum_{j\sim i}X^N_j(t)\frac{1}{d_j}} \mbox{$||\mathbf{v}||_1>1$,}
\end{align*}
}
\hspace{-2mm}
where $d_j$ is the degree or number of neighbors of node~$j$ and $\mathbf{e}_i\in\mathbb{R}^{N}$ is the canonical vector with all entries equal to zero except the $i$th entry that is~$1$. For $K$-virus, the rate $\mathbf{Q}^N$ is a tensor; its generic element is a straightforward generalization of the generic entry of the rate matrix~$\mathbf{Q}^N$ for a single virus. In the sequel, we usually consider explicitly the single virus epidemics, but still refer to~$\mathbf{Q}^N$ as the rate or rate matrix, even if we study a $K$-virus epidemics.

\textit{Network macrostate}: The rate matrix $\mathbf{Q}^N$ is too large even for moderate size networks. To address this curse of dimensionality, we rely on low-dimensional network state statistics $\mathbf{Y}=f(\mathbf{X})$, where $f:\mathbb{R}^{N\times K}\rightarrow \mathbb{R}^M$ is a measurable function and $M<<N$. The stochastic process $\mathbf{Y}=\left(\mathbf{Y}^N(t)\right)$ is referred to as a \textit{macrostate} of the network. One macrostate of particular interest throughout this paper is the \textit{fraction of infected nodes} of strain~$k$:
\begin{align*}
Y_{k}^{N}(t)=\sum_{i=1}^{N} X^{N}_{ik}(t), &\,\,\,
\overline{Y}_{k}^{N}(t)=\frac{Y_{k}^{N}(t)}{N},  \nonumber
\end{align*}
where $Y_k^N(t)$ and $\overline{Y}_k^N(t)$ represent the number and fraction, respectively, of nodes infected by strain~$k$ in the $N$-network at time $t$, $t\geq 0$. For single virus, we write $Y^N=\left(Y^N(t)\right)$ and $\overline{Y}^N=\left(\overline{Y}^N(t)\right)$, dropping the superindex~$N$ when clear from the context.



\begin{example}[Complete network]\label{exp:completenetwork}
For a \textit{complete} network, each node can infect any other node. For single virus, the transition rate of $\left(Y^N(t)\right)$ depends solely on itself, e.g., \cite{paper:CDC}, and this macrostate is Markov. To study its dynamics, we need its one-dimensional transition rate instead of the transition rates for the full $2^N$ microstate. We have:
\begin{eqnarray}
Q\left(Y^N(t),Y^N(t)+1\right) & = & \gamma Y^N(t)\left(N-Y^N(t)\right)\nonumber\\
Q\left(Y^N(t),Y^N(t)-1\right) & = & \mu Y^N(t).\nonumber
\end{eqnarray}
 Complete networks are well studied in the micro-to-macro network diffusion. For instance, Reference~\cite{Antunes2} considers a multiclass flow of packets on a complete network. Starting from its microscopic statistics, it shows that the empirical distribution of nodes across the possible configurations of packets at each node $\left(\mathbf{\overline{Y}}^N(t)\right)$ is Markov, then it proves that the process converges weakly to the solution of an ordinary differential equation (ODE) as the number of nodes grows large, and provides the qualitative analysis of the resulting ODE.
\hfill$\small \blacksquare$
\end{example}

To handle arbitrary topologies is much more challenging because, for a general network topology, the macrostate process $\left(\mathbf{Y}^N(t)\right)$ is not Markov as we show with the following example.

\begin{example}[Arbitrary network] \label{exp:arbitrarynetwork} Consider the two microstate configurations in Figure~\ref{fig:contaminationclu} for a single virus cycle network $C_6$, where the darkened (colored) nodes represent infected nodes.
\begin{figure} [hbt]
\begin{center}
\includegraphics[scale= 0.3]{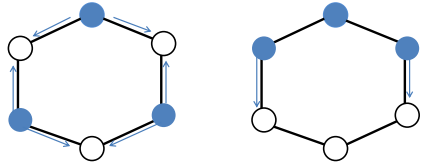}
\caption{Two distinct microstates for a single virus cycle network $C_6$ with the same number of infected nodes, but leading to different cross-transition rates.}\label{fig:contaminationclu}
\end{center}
\end{figure}
We show that the rates to increase the number of infected nodes process $\left(Y^N(t)\right)$ are coupled with the microstate $\left(\mathbf{X}^N(t)\right)$. Indeed, the clustered configuration on the right yields a lower rate as the potential infections can come only from its boundary nodes, whereas in the configuration on the left, any neighbor of an infected node can be contaminated. In words, the rates at time $t$ are not uniquely determined by $Y^N(t)$ and depend on the microstate, or, more formally, they are not adapted to the natural filtration, the $\sigma$-algebra $\sigma\left(Y^N(s),s\leq t\right)$.
\hfill$\small \blacksquare$
\end{example}

Example~\ref{exp:arbitrarynetwork} shows that the dynamics in arbitrary networks are much more challenging than in complete networks. Reference~\cite{Mieghem} considers non complete topologies, bypassing the coupling problem illustrated in Example~\ref{exp:arbitrarynetwork} by replacing the exact rates of transition of the microstate $\left(\mathbf{X}^N(t)\right)$ by their average. If the states of the nodes, i.e., the scalar entries of the microstate, were independent processes, the approximation would be accurate for large networks. But this is not the case as the authors themselves point out. Similar approaches replacing rates by their averages are standard with non complete networks. Another example, representative of many epidemics and diffusion macroscopic studies, is~\cite{Pastor-Satorras-Vespignani-2001} that adopts it by neglecting the correlation among infected nodes when studying SIS epidemics in scale free networks.

In summary, the curse of dimensionality has been studied under one of the following settings:
\begin{enumerate}
\item \textit{Bottom up over a complete network}. From the microstatistics of the diffusion, the low-dimensional process $\left(Y^N(t)\right)$ is shown to be Markov--see \cite{Antunes2,Antunes}.
\item \textit{Bottom up with relaxation}. Since the network topology is arbitrary, $\left(Y^N(t)\right)$ is no longer Markov. The non-Markovianity is bypassed by relaxing the micro model--neglecting correlations among microstates, or estimating bounds on the rates, see for example~\cite{Mieghem,particle}.
\item \textit{Prescribed mean field models}. The dynamics of virus diffusion as a function of global topological features are designed at the macroscale assuming average rates and neglecting correlations among nodes, see \cite{Pastor-Satorras-Vespignani-2001,Daley,Jackson} for several  models common in the literature.
\end{enumerate}

To go beyond complete networks, we introduce multipartite networks in the next Subsection~\ref{subsec:dynamicsmultipartite} and, in Subsection~\ref{subsec:meanfield}, the mean field virus dynamics derived in~\cite{augusto_moura_emergent}.

\subsection{Multipartite Networks}\label{subsec:dynamicsmultipartite}
 Multipartite networks may model networks of cities or local area networks connected by gateways.
\begin{definition}[Multipartite network]\label{def:multipartitenetwork}
A network $G=\left(V,E\right)$ is multipartite if there exists a partition $\overline{V}=\left\{V_1,\ldots,V_M\right\}$ of $V$ such that $\left\{a,b\right\}\notin E$ for any $a,\,b\in V_i$ for any $i\in\left\{1,\ldots,M\right\}$. Moreover, the following condition holds true. With $i\neq j$:
\begin{align*}
u\in V_i,\,v\in V_j,\,u\sim v\Rightarrow w\sim r,\,\,\,\, \forall\, {w\in V_i,\,r\in V_j}.
\end{align*}
 When $M=2$, the multipartite network has only two islands and is called \textit{bipartite}.
\hfill$\small \blacksquare$
\end{definition}
In the sequel, $V$ is partitioned as $\overline{V}=\left\{V_1,\ldots,V_M\right\}$. The elements $V_i$ of the partition are \textit{islands} or \textit{supernodes}. The \textit{size} of each island~$V_i$ is its cardinality $N_i=\left|V_i\right|$. The vector $\mathbf{N}=\left(N_1,\ldots,N_M\right)$ stacks the sizes $N_i$ of all~$M$ islands. If the islands are evenly sized, $N=\left|V_i\right|$, $1\leq i\leq M$, the multipartite network is \textit{symmetric}.
 By definition~\ref{def:multipartitenetwork}, if two nodes of different islands $U$ and $V$ are connected, then any node of~$U$ is connected with any node of~$V$. In this case, we say~$U$ and~$V$ are \emph{connected}, writing $U\sim V$. This abstracts the supernetwork or supergraph topological structure of the islands. Figure~\ref{fig:multipartite} depicts a symmetric multipartite network.
\begin{figure} [hbt]
\begin{center}
\includegraphics[scale= 0.3]{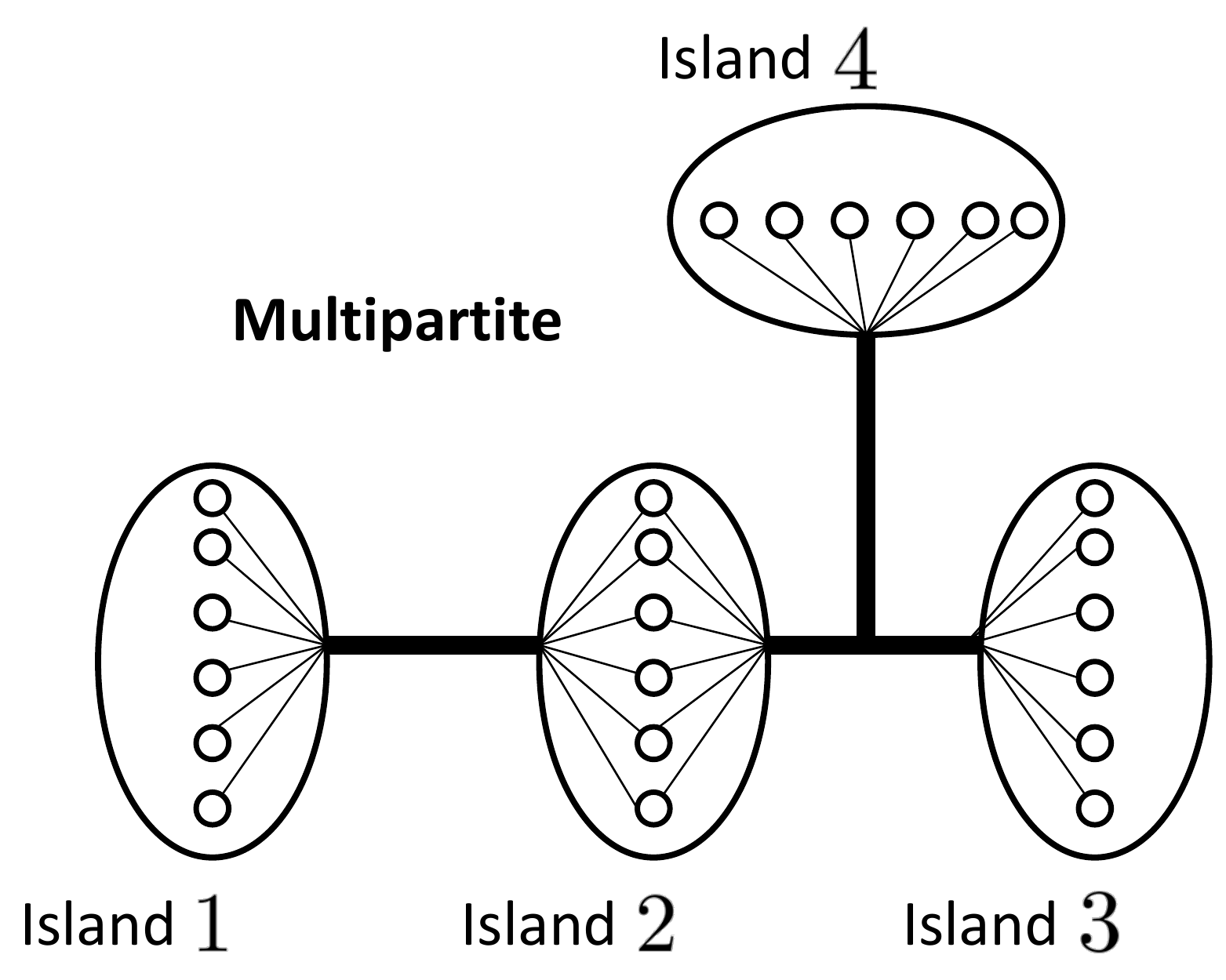}
\caption{Multipartite network: supernetwork of inter-islands; no intra-island communication.}\label{fig:multipartite}
\end{center}
\end{figure}
\begin{definition}[Superneighborhood]  The superneighborhood of $U$ is
\label{def:superneighborhood}
\begin{align*}
\mathcal{N}(U)=\left\{V\in \overline{V}\,:\,V\sim U\right\}.
\end{align*}
The degree of island~$U$ in the supernetwork, or \textit{superdegree} of~$U$, is $d_U=\left|\mathcal{N}\left(U\right)\right|$.
\hfill$\small \blacksquare$
\end{definition}
A multipartite network is \textit{regular} if all islands have the same  superdegree.

We adapt the SIS microscopic model of diffusion described in the previous Subsection~\ref{subsec:microscopicinfection} to multipartite networks. We define the binary \textit{tensor} or \textit{hypermatrix} microprocess $\left(\mathbf{X}^N(t)\right)$ as collecting the state of each node over time in the multipartite network. The entry $X^N_{ijk}(t)=1$ if node~$i$ at island~$j$ is infected at time~$t$, $t\geq 0$, with virus strain $k\in\left\{1,2,\cdots,K\right\}$, and $X^N_{ijk}(t)=0$ if the node~$i$ of island~$j$ is healthy or infected with a different strain. If only one strain of virus is present in the network, then, for notational simplicity, we suppress the extra index~$k$ and write simply $X^N_{ij}(t)$. Our SIS microscopic infection model of diffusion is set at the node level and goes as follows. Once a node~$i$ in island~$U$ is $y$-infected, it transmits the infection to a randomly chosen node in a randomly chosen neighbor island~$V$ after an exponentially distributed random time $T_{UV}^c\sim{\sf Exp}\left(\gamma^y_{UV}\right)$, if at that time the node is still infected. If the chosen node at island~$V$ is already infected, then nothing happens. Also, an $y$-infected node heals after a random time $T^h\sim{\sf Exp}\left(\mu^y\right)$. All time service random variables are assumed to be independent and have support in a single probability space $\left(\Omega,\mathcal{F},\mathbb{P}\right)$.

%
%
\begin{example}[Bipartite network]\label{exp:bipartitenetwork}
We compute the rate to increase the process $\left(Y^\mathbf{N}(t)\right)$ of the {\bf total} number of infected nodes for each of the two microstate configurations in Figure~\ref{fig:contaminabipartite}. The darkened (colored) nodes represent the infected nodes.
\begin{figure} [hbt]
\begin{center}
\includegraphics[scale= 0.3]{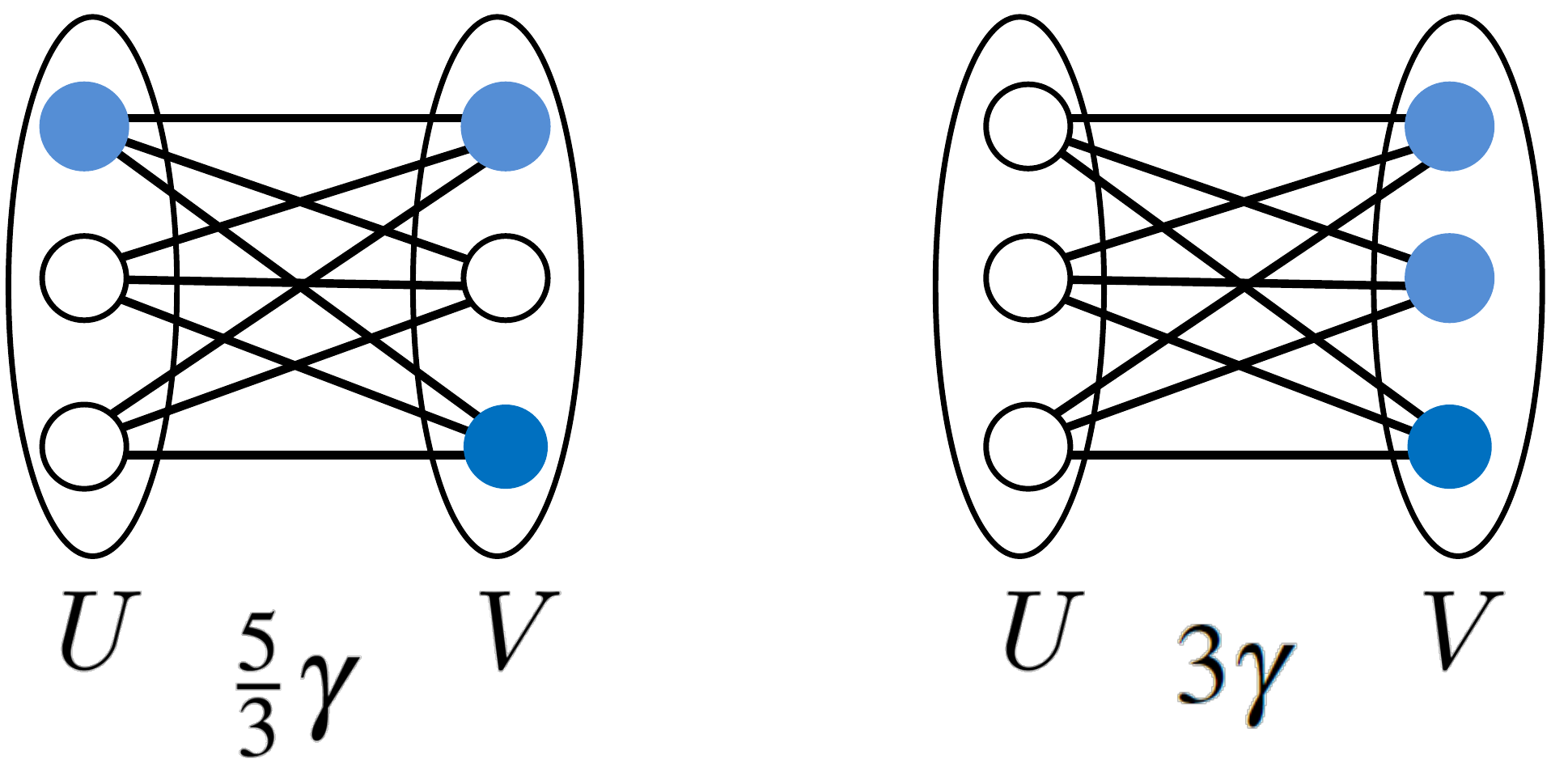}
\caption{Two microstate configurations with same number of infected nodes but different rates.}\label{fig:contaminabipartite}
\end{center}
\end{figure}
 Let $\left(Y^{\mathbf{N}}_i(t)\right)$, $i=1,2$, be the stochastic process\footnote{The components of~$\left(\mathbf{Y}^\mathbf{N}(t)\right)$ are now subindexed by the islands and not by the virus strains as in the previous subsection.} counting the number of infected individuals in each island $i\in\left\{1,2\right\}$ and $\left(\mathbf{Y}^\mathbf{N}(t)\right)=\left(Y^{\mathbf{N}}_1(t),Y^{\mathbf{N}}_2(t)\right)$. We compute the rate $\mathbf{Q}^\mathbf{N}\left(\mathbf{Y}^\mathbf{N}(t), \mathbf{Y}^\mathbf{N}(t)+\mathbf{e}_i\right)$, $i=1,2$, at which the population of infected nodes increases by one unit. For the left configuration in Figure~\ref{fig:contaminabipartite}:
\begin{align}
\mathbf{Q}^\mathbf{N}\left(\mathbf{Y}^\mathbf{N}(t), \mathbf{Y}^\mathbf{N}(t)+\mathbf{e}_1\right) {}& = \gamma Y_2^{\mathbf{N}}(t)\frac{\left(N_1-Y^{\mathbf{N}}_1(t)\right)}{N_1}=\frac{4}{3}\gamma,\label{eq:rate1}\\
\mathbf{Q}^\mathbf{N}\left(\mathbf{Y}^\mathbf{N}(t), \mathbf{Y}^\mathbf{N}(t)+\mathbf{e}_2\right) {}& = \gamma Y_1^{\mathbf{N}}(t)\frac{\left(N_2-Y^{\mathbf{N}}_2(t)\right)}{N_2}=\frac{1}{3}\gamma, \label{eq:rate2}
\end{align}
where $\mathbf{e}_i$ is the canonical vector--$i$th entry equal to~$1$ and zero at the remaining entries. The rate at which the total population of infected notes $Y^{\mathbf{N}}$ increases is the sum of these two rates:
 \begin{align*}
 \frac{5}{3}\gamma{}&=\frac{4}{3}\gamma+\frac{1}{3}\gamma.
 \end{align*}
 This is the value indicated on the left of Figure~\ref{fig:contaminabipartite}. For the configuration on the right of Figure~\ref{fig:contaminabipartite}, a similar calculation shows that the population of infected nodes increases at the rate of $3\gamma$. The two rates are different, and so, like for Example~\ref{exp:arbitrarynetwork}, $Y^{\mathbf{N}}$ is not Markov.
 \hfill$\small\blacksquare$
\end{example}
 This Example shows that the number of infected nodes $\left(Y^{\mathbf{N}}(t)\right)=\left(Y^{\mathbf{N}}_1(t)+\ldots+Y^{\mathbf{N}}_M(t)\right)$ that is a Markov process for a complete network fails to be Markov in the multipartite network case. But Example~\ref{exp:bipartitenetwork} has more structure than Example~\ref{exp:arbitrarynetwork}. The two rates in~\eqref{eq:rate1} and~\eqref{eq:rate2} do NOT depend explicitly on the microstate $\left(\mathbf{X}^\mathbf{N}(t)\right)$; they depend only on the macrostate $\left(\mathbf{Y}^\mathbf{N}(t)\right)=\left(Y^{\mathbf{N}}_1(t), Y^{\mathbf{N}}_2(t)\right)$. If we computed the rate to reduce the infected population by one (healing only at time~$t$), we would arrive at a similar conclusion--the rates depend only on the process $\left(\mathbf{Y}^\mathbf{N}(t)\right)$. That is, the rate process $\left(\mathbf{Y}^\mathbf{N}(t)\right)$ is adapted to its natural filtration, and the vector process $\left(\mathbf{Y}^\mathbf{N}(t)\right)=\left(Y_1^{\mathbf{N}}(t),Y_2^{\mathbf{N}}(t)\right)$ is now Markov. This example illustrates intuitively that we can expect to derive a low-dimensional macrostate that is Markov for the bipartite network or further multipartite networks.

\subsection{Mean Field Dynamics}
\label{subsec:meanfield}
Consider a single virus spread in a multipartite network with~$M$ islands, with $N_i$ being the size (number of nodes) of island~$i$. Let:
\begin{inparaenum}[1)]
\item $\left(Y^{\mathbf{N}}_i(t)\right)$ be the stochastic process counting the number of infected individuals in island $i$ for $i=1,\ldots,M$, and
     $\left(\mathbf{Y}^{\mathbf{N}}(t)\right)=\left(Y^{\mathbf{N}}_1(t)\,\,\cdots Y^{\mathbf{N}}_M(t)\right)$ be the corresponding macrostate vector; and
\item $\overline{Y}^{\mathbf{N}}_i(t)=\frac{Y^{\mathbf{N}}(t)}{N_i}$ and $\left(\overline{\mathbf{Y}}^{\mathbf{N}}(t)=\left(\overline{Y}^{\mathbf{N}}_1(t)\,\,\cdots \overline{Y}^{\mathbf{N}}_M(t)\right)\right)$ be the corresponding normalized macrostates and vector of normalized macrostates. The $M$-dimensional vectors $\left(\mathbf{Y}^{\mathbf{N}}(t)\right)$ and $\left(\mathbf{\overline{Y}}^{\mathbf{N}}(t)\right)$ collect the quantities of interest regarding the global behavior of the stochastic network. Since the number of islands in the network $M<<N$, these vectors are low dimensional, being potential candidates to be the macrostate of the network.
\end{inparaenum}

Reference~\cite{augusto_moura_emergent} shows that $\left(\mathbf{Y}^{\mathbf{N}}(t)\right)$ and $\left(\mathbf{\overline{Y}}^{\mathbf{N}}(t)\right)$ are Markov and that  $\left(\overline{\mathbf{Y}}^{\mathbf{N}}(t)\right)$ in the limit of large networks converges weakly to the solution of the following coupled differential equations, $i=1,\cdots,M,$:
\begin{align}
\label{eq:single1}
\frac{d}{dt}y_i(t) {}&=  \left(\sum_{j\sim i}\overline{\gamma}_{ji}y_j(t)\right)\left(1-y_i(t)\right)-y_i(t)
\end{align}
where the effective infection rate from island~$j$ to island~$i$ is $\overline{\gamma}_{ji}=\gamma_{ji}\times\alpha_{ji}$, with $\alpha_{ji}$ being the asymptotic size ratio between islands $j$ and $i$, i.e., $N_j/N_i\rightarrow \alpha_{ji}$. The parameter $\overline{\gamma}_{ji}$ captures the microscopic information through the rate $\gamma_{ji}$ and the relative size parameter~$\alpha_{ji}$. Without loss of generality (wlog), we take $\mu=1$. We drop the bar in the rate parameter, referring to $\overline{\gamma}_{ji}$ as $\gamma_{ji}$.

The solution to the~$M$ equations~\eqref{eq:single1} is in vector form $\mathbf{y}(t)=\left(y_1(t)\cdots y_M(t)\right)$ and the path solution is $\left(\mathbf{y}(t)\right)$. The path solution from initial condition $\mathbf{y}(0)=\mathbf{y}_0$ is $\left(\mathbf{y}\left(t,\mathbf{y}_0\right)\right)$. The function $\mathbf{y}\,:\,\left.\left[0,+\infty\right.\right)\times\left[0,1\right]^M\rightarrow \left[0,1\right]^M$ is also referred to as the flow of ODEs~\eqref{eq:single1}.

For the general bi-viral epidemics over a multipartite network, Reference~\cite{augusto_moura_emergent} further shows that the limiting dynamics for the fraction of infected nodes for the two strains of virus converges weakly, under the Skorokhod topology in the space of \emph{c\`{a}dl\`{a}g} sample paths, to the solution of the coupled vector of~$2M$ differential equations, $i=1,\cdots,M$:
\begin{align}
\frac{d}{dt}y_i(t) & = \left(\sum_{j\sim i}\overline{\gamma}^{y}_{ji} y_j(t)\right)\left(1-x_i(t)-y_i(t)\right)-y_i(t)\label{eq:multi}\\
\frac{d}{dt}x_i(t) & =\left(\sum_{j\sim i}\overline{\gamma}^{x}_{ji} x_j(t)\right)\left(1-x_i(t)-y_i(t)\right)-x_i(t),\label{eq:multi2}
\end{align}
where $y_i(t)$ and $x_i(t)$ are the limiting fractions of infected nodes by virus strains~$y$ and~$x$ in island~$i$; and $\overline{\gamma}^{y}_{ji}=\gamma^{y}_{ji}\times\alpha_{ji}$ and $\overline{\gamma}^{x}_{ji}=\gamma^{x}_{ji}\times\alpha_{ji}$ are the effective infection rates from island $j$ to island $i$. In~\eqref{eq:multi} and~\eqref{eq:multi2}, wlog~$\mu_1=\mu_2=1$. Similarly to the single virus epidemics, the path solution to~\eqref{eq:multi}-\eqref{eq:multi2}, for the two viral strains, is $\left(\mathbf{y}(t),\mathbf{x}(t)\right)$. Solutions parameterized by the initial conditions $\left(\mathbf{y}_0, \mathbf{x}_0\right)$ are represented by $\left(\mathbf{y}\left(t,\left(\mathbf{y}_0,\mathbf{x}_0\right)\right),\mathbf{x}\left(t,\left(\mathbf{y}_0,\mathbf{x}_0\right)\right)\right)$. We drop the over bars on the rates.

These limiting dynamics are derived in~\cite{augusto_moura_emergent}. The next Section studies the qualitative behavior of the~$M$ ODEs~\eqref{eq:single1} for single virus and of the~$2M$ ODEs~\eqref{eq:multi} and~\eqref{eq:multi2} for the bi-virus case.
%
%


\section{Macroscopic Model -- Bipartite Networks}\label{sec:qualitativeanalysis}
\hspace{0.43cm}
 We investigate the macroscopic behavior of epidemics by studying qualitatively the dynamics of the limiting vector process $\left(\mathbf{y}(t)\right)$. We build our results in steps. This section focus on \emph{bipartite} networks considering single virus epidemics in Subsection~\ref{subsec:bipartitenet-singlevirus} and multi-virus epidemics in Subsection~\ref{subsec:bipartitenet-bi-virus}; preliminary results were presented in~\cite{Paper:CDC-2012}. We then extend the analysis of single and bi-virus epidemics to \textit{regular} multipartite networks under multi-virus epidemics in Section~\ref{sec:multipartite}.

We derive conditions on the parameters of the microscopic SIS virus model and on the network structure for a macroscopic behavior to emerge in the stochastic large network--a strain perpetuates, or a survival of the fittest is observed. For a bipartite network, these questions translate into the dynamics of the density of infected nodes in each island \emph{per} strain. The mean field dynamics are characterized by a nonlinear system of coupled ODEs, see for example~\eqref{eq:single1}, that are derived in~\cite{augusto_moura_emergent} as fluid limit dynamics from the local peer-to-peer diffusion model, as we discussed in Subsection~\ref{subsec:meanfield}.

 The qualitative analysis of dynamical systems comprises characterizing their attractors and basins of attraction. In general, this is achieved by either Lyapunov theory or numerical simulations. For the coupled nonlinear equations~\eqref{eq:single1} or~\eqref{eq:multi}--\eqref{eq:multi2}, a Lyapunov function is not readily available. Instead, we explore the structure of the mean field dynamical system. We rely on the following observation that captures a special monotonous property of our system of coupled nonlinear ODEs. We state it for the single virus bipartite network and the set of ODEs~\eqref{eq:single1}.

 Consider two isomorphic copies~$B_1$ and~$B_2$ of the same bipartite network infected by the same virus. If at time~$t_1$, the bipartite network~$B_1$ presents a higher degree of infection $\mathbf{z}(t_1)=\left(z_1(t_1),z_2(t_1)\right)$ on both islands when compared to the infection level  $\mathbf{y}(t_1)=\left(y_1(t_1),y_2(t_1)\right)$ in~$B_2$, then the epidemics state of~$B_1$ will dominate the state of~$B_2$ for all future times, i.e., $\mathbf{z}(t)\geq \mathbf{y}(t)$ for all $t\geq t_1$. In particular, if the initial states of islands~$1$ and~$2$ are given by $\mathbf{z}(0)=\left(z_1(0),z_2(0)\right)$ and $\mathbf{y}(0)=\left(y_1(0),y_2(0)\right)$ with $z_1(0)\geq y_1(0)$ and $z_2(0)\geq y_2(0)$ then, the infection rate $\left(\mathbf{y}(t)\right)$ for~$B_2$ is upperbounded by the infection rate $\left(\mathbf{z}(t)\right)$ for~$B_1$ for all $t\geq 0$. More generally, this property holds for regular multipartite networks and will be particularly explored to establish survival of the fittest: at most the strongest strain persists in the network and the remaining weaker ones necessarily die out. This turns out to be a crucial observation since for the symmetric bipartite network ($N_1=N_2$) with symmetric initial conditions, $y_1(0)=y_2(0)=y_0$, the induced solution $\left(\mathbf{y}\left(t,\mathbf{y}_0\right)\right)= \left(y\left(t,\left(y_0,y_0\right)\right), y\left(t,\left(y_0,y_0\right)\right)\right)$ can be easily characterized, and it can be used to bound the solutions of more general infection regimens.

\textbf{Preliminary Notation.}
We summarize the main notation used throughout this section: $\stackrel{(n)}{y}_{\!\!i}\!\!(t)$:
 $n$th derivative of the fraction of $y$-infected nodes at island~$i$ at time~$t$, $t\geq 0$;
 $\mathcal{N}(i)$: represents the $1$-hop neighborhood of island~$i$;
 $\mathcal{N}^{2}(i)$: represents the $2$nd order neighborhood of~$i$, that is, $j\in \mathcal{N}^{2}(i)$ if and only if the shortest path connecting~$i$ and~$j$ (a.k.a.~geodesic) has a length of $2$ hops;
 $j\in\mathcal{N}^{n}(i)$: if and only if there exists $k\in\mathcal{N}^{n-1}(i)$ with $j\sim k$, i.e., the geodesic connecting $i$ and $j$ comprises $n$ hops;
 $\mathbf{x}\leq \mathbf{y}\in\mathbb{R}^{n}$: means $\mathbf{y}-\mathbf{x}\in\mathbb{R}^n_{+}$;
 $x\wedge y$: equal to $x\in\mathbb{R}$ if $x<y$ or equal to $y\in\mathbb{R}$, otherwise;
 $\phi\left(t,\phi_0\right)$:  represents the solution of ordinary differential equation $\dot{y}=F(y)$ as a flow $\phi\,:\,\mathbb{R}_{+}\times D\rightarrow D$, representing the state of the system at time $t$ with initial state $\phi_0$;
 $\mathbf{1}_n\in\mathbb{R}^n$: vector with all entries equal to one. The  subindex may be omitted whenever there is no room for ambiguity; and
 $\Delta_n$: simplex in $\mathbb{R}^n$ defined as $\Delta_n=\left\{\mathbf{v}\in \mathbb{R}^n_{+}\,:\,\left\langle \mathbf{v},\mathbf{1}_n\right\rangle\leq 1\right\}$, where $\langle\cdot,\cdot\rangle$ is the standard Euclidean inner product of vectors.
\subsection{Bipartite network: Single virus}\label{subsec:bipartitenet-singlevirus}
This Section considers single virus epidemics in a bipartite network. A graph is bipartite when the number of islands in the multipartite network is two. We first define a bipartite symmetric configuration that will be explored through the rest of this section.

 \begin{definition}[Symmetric configuration]\label{def:symmetricconfiguration}
 The single virus epidemics in a bipartite network has a symmetric configuration if and only if:
 \begin{inparaenum}[1)]
 \item Symmetric network, $\alpha_{12}=\alpha_{21}$ (islands have asymptotically the same size,) that is, $\gamma_{12}=\gamma_{21}$;
 \item Normalized healing rate $\mu=1$.
 \end{inparaenum}
 \hfill$\small\blacksquare$
 \end{definition}

 We rewrite~\eqref{eq:single1} for the symmetric configuration. The limiting rates of infection $\left(y_1(t)\right)$ and $\left(y_2(t)\right)$ of occupancy in islands $1$ and $2$, respectively, are given by:
\begin{eqnarray}
\label{eq:bipartite1}
\frac{d}{dt}y_1(t)& = &\gamma y_2(t)\left(1-y_1(t)\right)-y_1(t)\\
\label{eq:bipartite1b}
\frac{d}{dt}y_2(t) & = &\gamma y_1(t)\left(1-y_2(t)\right)-y_2(t).
\end{eqnarray}
The solution $\left(\mathbf{y}(t)\right)=\left(y_1(t),y_2(t)\right)$ to~\eqref{eq:bipartite1}-\eqref{eq:bipartite1b} with initial condition $\mathbf{y}_0\in \left[0,1\right]^2$ exists and is unique since the dynamics are (globally) Lipschitz over the domain $D=\left[0,1\right]\times \left[0,1\right]$. Note that the set~$D$ is invariant with respect to the dynamics, that is, if $\mathbf{y}(0)=\left(y_1(0),y_2(0)\right)\in D$ then, $\mathbf{y}(t,y(0))\in D$, $\forall{t\geq 0}$. This follows of course from the underlying physical system, and it is easily established from the (ODE) limiting dynamics. The fact that $D$ is compact further implies that the solutions are defined for all $t$, $t\geq 0$.

 We determine the qualitative behavior of the coupled system of two nonlinear ODEs~\eqref{eq:bipartite1}-\eqref{eq:bipartite1b}, i.e., their critical points and corresponding basins of attraction. There are two critical points: $y^{\left(\mbox{\scriptsize eq$_1$}\right)}=\left(1-\frac{1}{\gamma}, 1-\frac{1}{\gamma} \right)$ and $y^{\left(\mbox{\scriptsize eq$_2$}\right)}=0$. We will show $y^{\left(\mbox{\scriptsize eq$_1$}\right)}$ is a global attractor if $\gamma>1$, otherwise $y_{1,2}(t)\rightarrow 0$. In words, the $y$-virus survives if $\gamma>1$, otherwise, it eventually dies out.

 The next Theorem reveals a monotone aspect of the dynamical system~\eqref{eq:bipartite1}-\eqref{eq:bipartite1b} that will be further explored in a more general setting -- an upper-bound on the initial conditions is preserved by the flow of the dynamical system~\eqref{eq:bipartite1}-\eqref{eq:bipartite1b} through all time $t\geq 0$.
\begin{theorem}
\label{th:invariant1}
Let $\left(\mathbf{y}\left(t,\mathbf{y}(0)\right)\right)_{t\geq 0}$ be the solution of~\eqref{eq:bipartite1}-\eqref{eq:bipartite1b} with initial condition $\mathbf{y}(0)\in D$. Then,
\begin{align*}
\mathbf{y}(0)\leq \mathbf{y}_0\in D \Rightarrow \mathbf{y}(t,\mathbf{y}(0))\leq \mathbf{y}(t,\mathbf{y}_0),\,\,\:\:\forall\,\,\,{t\geq 0}.
\blacksquare
\end{align*}
\end{theorem}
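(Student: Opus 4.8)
The plan is to exploit the \emph{cooperative} (quasimonotone) structure of~\eqref{eq:bipartite1}-\eqref{eq:bipartite1b} and reduce the comparison claim to a positivity statement for a linear time-varying system. Write $\mathbf{u}(t)=\mathbf{y}(t,\mathbf{y}_0)$ and $\mathbf{v}(t)=\mathbf{y}(t,\mathbf{y}(0))$ for the two solutions, and set $\mathbf{w}(t)=\mathbf{u}(t)-\mathbf{v}(t)$. By hypothesis $\mathbf{w}(0)=\mathbf{y}_0-\mathbf{y}(0)\geq 0$, and the goal is to show $\mathbf{w}(t)\geq 0$ for all $t\geq 0$, which is exactly the assertion $\mathbf{y}(t,\mathbf{y}(0))\leq \mathbf{y}(t,\mathbf{y}_0)$. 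Existence, uniqueness, and the forward invariance of $D=[0,1]^2$ are already available from the discussion preceding the theorem, so $\mathbf{u}(t),\mathbf{v}(t)\in D$ throughout.

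First I would subtract the two copies of~\eqref{eq:bipartite1}-\eqref{eq:bipartite1b} and factor the differences of the bilinear terms. Adding and subtracting a cross term in $\gamma u_2u_1-\gamma v_2v_1$ (and symmetrically in the second equation) gives $u_2(1-u_1)-v_2(1-v_1)=(1-v_1)w_2-u_2w_1$, and hence the linear, time-varying system
\begin{align*}
\dot{w}_1 &= -(\gamma u_2 + 1)\,w_1 + \gamma(1-v_1)\,w_2,\\
\dot{w}_2 &= \gamma(1-v_2)\,w_1 - (\gamma u_1 + 1)\,w_2,
\end{align*}
i.e. $\dot{\mathbf{w}}=A(t)\mathbf{w}$, where $A(t)$ depends only on the (already determined) solutions $\mathbf{u},\mathbf{v}$. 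The crucial structural observation is that $A(t)$ is a \emph{Metzler} matrix for every $t$: its off-diagonal entries $\gamma(1-v_1)$ and $\gamma(1-v_2)$ are nonnegative because, by the invariance of $D$, the components $v_1(t),v_2(t)$ stay in $[0,1]$ while $\gamma>0$. For Metzler systems the nonnegative orthant $\mathbb{R}^2_{+}$ is forward invariant, so $\mathbf{w}(0)\geq 0$ forces $\mathbf{w}(t)\geq 0$ for all $t\geq 0$, which is the desired conclusion.

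The one point requiring care, and the main obstacle, is the positivity step itself, where the degenerate boundary case must be excluded: when a component $w_i$ touches zero while the other also vanishes, its derivative can be zero, so a naive first-crossing argument is inconclusive. I would close this gap in the standard way for Metzler systems. Choose $c=\gamma+1$ so that $B(t):=A(t)+cI$ has all entries nonnegative on $D$ (the diagonal entries of $A$ are bounded below by $-(\gamma+1)$ since $u_i\leq 1$), and set $\mathbf{z}=e^{ct}\mathbf{w}$, which solves $\dot{\mathbf{z}}=B(t)\mathbf{z}$ with $\mathbf{z}(0)=\mathbf{w}(0)\geq 0$. For the strictly perturbed problem $\dot{\mathbf{z}}^{\varepsilon}=B(t)\mathbf{z}^{\varepsilon}+\varepsilon\mathbf{1}$ with $\mathbf{z}^{\varepsilon}(0)=\mathbf{w}(0)+\varepsilon\mathbf{1}>0$, a first-touching-time argument now succeeds: at a hypothetical first instant $t^{\star}$ with $z_i^{\varepsilon}(t^{\star})=0$ one has $\dot{z}_i^{\varepsilon}(t^{\star})=\sum_j B_{ij}(t^{\star})z_j^{\varepsilon}(t^{\star})+\varepsilon\geq\varepsilon>0$, contradicting that $z_i^{\varepsilon}$ reaches zero from positive values.

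Hence $\mathbf{z}^{\varepsilon}(t)>0$ for all $t\geq 0$, and letting $\varepsilon\to 0$ and using continuous dependence on the data yields $\mathbf{z}(t)\geq 0$, so $\mathbf{w}(t)\geq 0$ as claimed. I expect the algebra reducing the difference to $\dot{\mathbf{w}}=A(t)\mathbf{w}$ to be routine; the real content is the Metzler observation together with the boundary-case handling above. As an alternative to the explicit $\varepsilon$-perturbation, one may simply invoke the Kamke--M\"uller comparison theorem for quasimonotone vector fields, whose hypothesis is precisely the nonnegativity of the off-diagonal couplings $\partial F_1/\partial y_2=\gamma(1-y_1)\geq 0$ and $\partial F_2/\partial y_1=\gamma(1-y_2)\geq 0$ on $D$ verified above; this route will also be the natural one to reuse when the result is generalized to regular multipartite networks later in the paper.
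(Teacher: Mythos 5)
Your proof is correct, and it takes a genuinely different route from the paper's. The paper argues directly on the nonlinear system by contradiction: it sets $T=\inf\{t\geq 0: \mathbf{y}(t,\mathbf{y}(0))\nleq\mathbf{y}(t,\mathbf{y}_0)\}$, observes that at $T$ one component must be equal while the other is strictly smaller, reads off from~\eqref{eq:bipartite1}--\eqref{eq:bipartite1b} a \emph{strict} inequality between the derivatives of the equal components, and concludes the ordering persists past $T$. You instead subtract the two solutions, obtain the linear time-varying system $\dot{\mathbf{w}}=A(t)\mathbf{w}$ with $A(t)$ Metzler on $D$ (your factorization $u_2(1-u_1)-v_2(1-v_1)=(1-v_1)w_2-u_2w_1$ checks out), and invoke forward invariance of the nonnegative orthant, closing the degenerate boundary case with the exponential shift and $\varepsilon$-perturbation. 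Your approach buys two things: it is airtight exactly where the paper's argument is fragile --- if $y_1(T)=1$ the off-diagonal coupling $\gamma(1-y_1)$ vanishes and the paper's strict derivative inequality $\dot{y}_1(T,\mathbf{y}(0))<\dot{y}_1(T,\mathbf{y}_0)$ fails, a corner the paper does not address (it only treats the analogous saturation case later, in Theorem~\ref{th:invariant2}) --- and, as you note, the quasimonotonicity criterion transfers verbatim to the multipartite and bi-virus systems, whereas the paper must develop the auxiliary higher-derivative machinery of Theorems~\ref{th:ordem}, \ref{th:order2} and~\ref{th:analytic} to push its first-crossing argument through in those settings. The cost is importing the Metzler/Kamke--M\"uller framework, which is heavier than the paper's elementary two-line derivative comparison.
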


\begin{proof}
If $\mathbf{y}(0)= \mathbf{y}_0$ then, by uniqueness $\mathbf{y}(t,\mathbf{y}(0))=\mathbf{y}(t,\mathbf{y}_0)$ for all $t\geq 0$, and the result holds. Now, let $\mathbf{y}(0)\leq \mathbf{y}_0$ with $\mathbf{y}(0)\neq \mathbf{y}_0$. Define $T=\inf\left\{t\,:\,t\geq 0\,,  \mathbf{y}(t,\mathbf{y}(0))\nleq \mathbf{y}\left(t,\mathbf{y}_0\right)\right\}$ and assume that $T< +\infty$. Since the flow is continuous and uniqueness is preserved for all $t\geq 0$, then, $y_1(T,\mathbf{y}(0))=y_1\left(T,\mathbf{y}_0\right)$ and $y_2(T,\mathbf{y}(0))<y_2\left(T,\mathbf{y}_0\right)$  (up to a relabeling.) Observe from equations~\eqref{eq:bipartite1}-\eqref{eq:bipartite1b} that $\dot{y}_1(T,\mathbf{y}(0))<\dot{y}_1\left(T,\mathbf{y}_0\right)$. Therefore,
\begin{equation}
\exists\,\,{\epsilon_1>0}\,:\,y_1(t,\mathbf{y}(0))<y_1\left(t,\mathbf{y}_0\right),\,\,\,\: \forall\,\,\,{T<t<T+\epsilon_1}.\nonumber
\end{equation}
Moreover,
\begin{align*}
y_2(T,\mathbf{y}(0))<y_2\left(T,\mathbf{y}_0\right)\Rightarrow \exists\,\,{\epsilon_2>0}:
y_2(t,\mathbf{y}(0))<y_2\left(t,\mathbf{y}_0\right),\, \forall\,\,\,{T<t<T+\epsilon_2}.
\end{align*}
Thus, we conclude that $\mathbf{y}(T+\epsilon,\mathbf{y}(0))\leq \mathbf{y}\left(T+\epsilon,\mathbf{y}_0\right)$, where $\epsilon=\epsilon_1\wedge \epsilon_2$. This contradicts the definition of $T$ and the assumption that it is finite.
\end{proof}

Before completing the analysis for the bipartite single virus case, we consider the simple case where the initial infection rates are the same, i.e., $y_1(0)=y_2(0)=y(0)=y_0$. Then, we claim, $y_1\left(t,\left(y_0,y_0\right)\right)=y_2\left(t,\left(y_0,y_0\right)\right),\,\forall\,\,\,{t\geq 0}$. Indeed, if $\left(z(t)\right)$ is solution of
\begin{equation}
\label{eq:simplificado}
\frac{d}{dt}z(t)\left[\begin{array}{c} 1\\ 1\end{array}\right]=\gamma z(t)\left(\left(1-z(t)\right)-z(t)\right)\left[\begin{array}{c} 1\\ 1\end{array}\right],
\end{equation}
it is easy to check that $z(t)\rightarrow 0$ if $\gamma\leq1$, and that $z(t)\rightarrow 1-\frac{1}{\gamma}$ if $\gamma>1$, regardless of the initial conditions.

The next Theorem builds on Theorem~\ref{th:invariant1} to complete the analysis for the bipartite single virus case, namely, it implies that, if $\gamma>1$, then the virus survives, otherwise, it dies out.
\begin{theorem}
\label{th:attractor}
Let $\left(\mathbf{y}\left(t,\mathbf{y}_0\right)\right)$ be the solution of~\eqref{eq:bipartite1}-\eqref{eq:bipartite1b} with $\mathbf{y}_0\neq \mathbf{0}$. Then,
\begin{eqnarray}
\gamma>1 & \Rightarrow & \mathbf{y}(t) \rightarrow \left(1-\frac{1}{\gamma}, 1-\frac{1}{\gamma}\right)\nonumber \\
\gamma\leq1 & \Rightarrow & \mathbf{y}(t)\rightarrow 0.\nonumber
\hspace{3cm}\blacksquare
\end{eqnarray}
\end{theorem}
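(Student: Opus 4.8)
The plan is to prove the theorem by a monotone sandwiching argument, trapping the solution from an arbitrary initial condition between two solutions that start on the diagonal $\{y_1=y_2\}$ and whose behavior is already understood through the scalar reduction~\eqref{eq:simplificado}. The diagonal is invariant: if $y_1(0)=y_2(0)=c$, then by uniqueness $\mathbf{y}(t,(c,c))=(z(t),z(t))$, where $z$ solves $\dot z = z(\gamma-1-\gamma z)$, and, as recalled right after~\eqref{eq:simplificado}, $z(t)\to 1-\frac{1}{\gamma}$ when $\gamma>1$ for any $c\in(0,1]$, while $z(t)\to 0$ when $\gamma\le 1$. The whole argument then reduces to squeezing $\mathbf{y}(t,\mathbf{y}_0)$ between two such diagonal trajectories and passing to the limit.

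First I would establish strict positivity: for $\mathbf{y}_0\in D$ with $\mathbf{y}_0\neq\mathbf{0}$, both components of $\mathbf{y}(t)$ are strictly positive for every $t>0$. This follows from the structure of the vector field on $\partial D$: on $\{y_1=0\}$ one has $\dot y_1=\gamma y_2\ge 0$, and symmetrically on $\{y_2=0\}$, so neither coordinate can decrease through $0$ while the other is positive; and both coordinates cannot vanish at a common time $t^*$, since $\mathbf{0}$ is an equilibrium and uniqueness (the field is Lipschitz on $D$, hence backward unique) would then force $\mathbf{y}_0=\mathbf{0}$. Consequently, fixing any $t_0>0$ and writing $\ell=\min\{y_1(t_0),y_2(t_0)\}>0$ and $u=\max\{y_1(t_0),y_2(t_0)\}>0$, we obtain the componentwise bracketing $(\ell,\ell)\le\mathbf{y}(t_0)\le(u,u)$.

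Next I would invoke the comparison principle. Because~\eqref{eq:bipartite1}--\eqref{eq:bipartite1b} is autonomous, for $t\ge t_0$ the trajectory satisfies $\mathbf{y}(t,\mathbf{y}_0)=\mathbf{y}(t-t_0,\mathbf{y}(t_0))$, so applying Theorem~\ref{th:invariant1} to the ordered data $(\ell,\ell)\le\mathbf{y}(t_0)\le(u,u)$ yields, for all $t\ge t_0$,
\[
\mathbf{y}(t-t_0,(\ell,\ell))\le \mathbf{y}(t,\mathbf{y}_0)\le \mathbf{y}(t-t_0,(u,u)).
\]
The outer terms are precisely the diagonal solutions $(z_\ell(\cdot),z_\ell(\cdot))$ and $(z_u(\cdot),z_u(\cdot))$ from the first paragraph, with $\ell,u\in(0,1]$.

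Finally I would pass to the limit. If $\gamma>1$, both $z_\ell$ and $z_u$ converge to $1-\frac{1}{\gamma}$, so the squeeze forces $\mathbf{y}(t,\mathbf{y}_0)\to\left(1-\frac{1}{\gamma},1-\frac{1}{\gamma}\right)$. If $\gamma\le 1$, the upper bound $z_u(t-t_0)\to 0$ together with $\mathbf{y}(t)\ge\mathbf{0}$ (invariance of $D$) forces $\mathbf{y}(t,\mathbf{y}_0)\to\mathbf{0}$. I expect the only genuinely delicate step to be the strict-positivity claim, specifically the initial conditions with a zero component (e.g.\ $\mathbf{y}_0=(a,0)$), for which the naive lower bracket $(\min,\min)=(0,0)$ is the useless equilibrium; the boundary analysis of the vector field together with backward uniqueness is exactly what repairs this, letting us restart the bracketing from a time $t_0>0$ at which both coordinates are already positive. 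Everything else is the autonomous time-shift plus the scalar analysis already recorded in the text.
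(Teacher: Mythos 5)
Your proposal is correct and follows essentially the same route as the paper's primary proof: sandwich the trajectory between two diagonal (symmetric) solutions via the monotonicity of Theorem~\ref{th:invariant1}, use the scalar reduction~\eqref{eq:simplificado} to identify their limits, and handle initial data with a zero component by noting the flow enters the open positive quadrant immediately and restarting from some $t_0>0$. Your treatment of strict positivity (boundary sign analysis plus backward uniqueness) is slightly more explicit than the paper's, but the argument is the same in substance.
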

\begin{proof}
First, assume $\mathbf{y}(0)=\mathbf{y}_0>\mathbf{0}$ and $\gamma>1$. Choose $\epsilon>0$ so that $\mathbf{y}_0>\epsilon \mathbf{1}_2>0$. From Theorem~\ref{th:invariant1}, $\mathbf{y}\left(t,\mathbf{y}_0\right)\geq \mathbf{y}\left(t,\epsilon \mathbf{1}_2\right)$, $\forall\,\,{t\geq 0}$. Thus,
\begin{equation}
\lim_{t\rightarrow\infty}\inf \mathbf{y}\left(t,\mathbf{y}_0\right)\geq
\lim_{t\rightarrow\infty} \mathbf{y}\left(t,\epsilon \mathbf{1}_2\right)=\left(1-\frac{1}{\gamma}\right)\mathbf{1}_2.\nonumber
\end{equation}
The last equality follows from the asymptotics of~\eqref{eq:simplificado}. Similarly, we upperbound the solution by $\mathbf{y}\left(t,\mathbf{y}_0\right)\leq \mathbf{y}\left(t,\mathbf{1}_2\right)$, $\forall\,\,{t\geq 0}$. Thus
\begin{equation}
\lim_{t\rightarrow\infty}\sup \mathbf{y}\left(t,\mathbf{y}_0\right)\leq \left(1-\frac{1}{\gamma}\right)\mathbf{1}_2.\nonumber
\end{equation}
Now, assume $y_1(0)=0$ and $y_2(0)>0$. Then, $\dot{y}_1(0)=\gamma y_2(0)>0$. Therefore, by the same argument as in the proof of Theorem~\ref{th:invariant1}, there exists $T>0$ so that $\left(\mathbf{y}\left(t,\mathbf{y}_0\right)\right)>0$, $\forall\,\,\,{0<t<T}$. Choose, $t_0\in\left(0,T\right)$. Then, $\mathbf{y}\left(t,\mathbf{y}_0\right)=\mathbf{y}\left(t-t_0,\mathbf{y}\left(t_0\right)\right)$, $\forall\,\,{t\geq t_0}$. Since $\mathbf{y}\left(t_0\right)>0$,
\begin{equation}
\lim_{t\rightarrow\infty} \mathbf{y}\left(t,\mathbf{y}_0\right)=\lim_{t\rightarrow\infty} \mathbf{y}\left(t-t_0,\mathbf{y}\left(t_0\right)\right)=\left(1-\frac{1}{\gamma}\right)\mathbf{1}_2.\nonumber
\end{equation}
The argument repeats for $\gamma\leq 1$.

Alternatively, we can prove Theorem~\ref{th:attractor} by defining the error function
\begin{equation}
w\left(\mathbf{y}\right):=\frac{1}{2}\left(y_1-y_2\right)^2\geq 0,
\end{equation}
which, for all time $t$, $t\geq 0$ and any solution $\left(\mathbf{y}(t)\right)$ of~\eqref{eq:bipartite1}-\eqref{eq:bipartite1b}, leads to
\begin{eqnarray*}
\frac{d}{dt}w\left(\mathbf{y}(t)\right) & = &\phantom{-}\left(y_1(t)-y_2(t)\right)\left(\gamma \left(y_2(t)-y_1(t)\right)-
\left(y_1(t)-y_2(t)\right)\right)\\
 & = & -\left(y_1(t)-y_2(t)\right)^2\left(\gamma+1\right)\leq 0,
\end{eqnarray*}
 In words, $w$ is a Lyapunov function for the attractor given by the set of configurations where islands are evenly infected, i.e., the straight line $r=\left\{\left(y_1,y_2\right)\in \mathbb{R}^2\,:\,y_1=y_2\right\}\cap D$. Since the set $D$ is compact and the singleton $\left\{\left(1-\frac{1}{\gamma},1-\frac{1}{\gamma}\right)\right\}$ is the maximally invariant subset of the straight line $r$ for the dynamics~\eqref{eq:bipartite1}-\eqref{eq:bipartite1b}, Theorem~\ref{th:attractor} follows.
\end{proof}
It is not clear how to extend this alternative proof to Theorem~\ref{th:attractor} to the more general cases of two virus or over multipartite networks. Therefore, in the following Sections, we explore the monotonicity property of the dynamical system to analyze these more general cases, starting in the next Subsection, by extending the analysis to bi-viral infection in bipartite networks.
\subsection{Bipartite Network: Bi-viral Epidemics}\label{subsec:bipartitenet-bi-virus}
Consider two viruses~$x$ and~$y$, and $x_i(t),\,y_i(t)$ be the fractions of $x$- and $y$-infected nodes at island $i$, $i=1,2$, at time $t\geq 0$ for the limiting dynamics. We consider the symmetric configuration in Definition~\ref{def:symmetricconfiguration} with micro infection parameters $\gamma^x$ and~$\gamma^y$ for the virus~$x$ and~$y$. We write~\eqref{eq:multi}-\eqref{eq:multi2} for the bi-virus epidemics for a bipartite symmetric configuration, $i,j=1,2, \:i\neq j$:
\begin{eqnarray}
\frac{d}{dt}y_i(t) \!\!\!\!& = &\!\!\!\! \gamma^y y_j(t) \left(1-y_i(t)-x_i(t)\right)-y_i(t)\label{eq:bipartite2}\\
\frac{d}{dt}x_i(t) \!\!\!\!& = &\!\!\!\!\gamma^x x_j(t)\left(1-y_i(t)-x_i(t)\right)-x_i(t)\label{eq:bipartite22}.
\end{eqnarray}
 From the exclusion principle, the sets of $x$- and $y$-infected nodes in island $i$ are disjoint with $0 \leq y_i(t)+x_i(t)\leq 1$, $\forall t\geq 0$ and $i=1,2$. The invariant domain is $\widehat{D}=\Delta_2\times\Delta_2$,
 \begin{equation}
 \Delta_2=\left\{\left(x,y\right)\in\mathbb{R}^2:x,y\geq 0\, \wedge\, x+y\leq1\right\},\nonumber
 \end{equation}
the simplex in $\mathbb{R}^{2}$. In Subsection~\ref{subsec:bipartitenet-singlevirus}, the solutions symmetrically initialized -- namely, $y_1(0)=y_2(0)$ -- were easily characterized, and any solution was appropriately lower/upper bounded by such \emph{easy} solutions, from which we determined the long term behavior of any solution. We extend this to bi-virus. We start by extending Theorem~\ref{th:invariant1}.
\begin{theorem}
\label{th:invariant2}
Let $\left(\mathbf{z}\left(t,\mathbf{z}(0)\right)\right)_{t\geq 0}$ solve~\eqref{eq:bipartite2}-\eqref{eq:bipartite22} where $\mathbf{z}(0)=\left(\mathbf{x}(0),\mathbf{y}(0)\right)\in\widehat{D}$, $\mathbf{y}(0)=\left(y_1(0),y_2(0)\right)$, $\mathbf{x}(0)=\left(x_1(0),x_2(0)\right)$.
Then, $\forall\,\,{t\geq 0}$, $\mathbf{x}_0=\left(x_{01},x_{02}\right)$, $\mathbf{y}_0=\left(y_{01},y_{02}\right)\in\Delta_2$, $\mathbf{x}(t)=\left(x_1(t),x_2(t)\right)$, $\mathbf{y}(t)=\left(y_1(t),y_2(t)\right)$, and $\mathbf{z_0}=\left(\mathbf{x}_0,\mathbf{y}_0\right)$:
\begin{align*}
\mathbf{y}(0)\leq \mathbf{y}_0 \mbox{  and  } \mathbf{x}(0)\geq \mathbf{x}_0
 \Rightarrow \mathbf{y}(t,\mathbf{z}(0))\leq \mathbf{y}\left(t,\mathbf{z}_0\right)\mbox{  and  }\mathbf{x}\left(t,\mathbf{z}(0)\right)\geq \mathbf{x}\left(t,\mathbf{z}_0\right).
 \blacksquare
\end{align*}
\end{theorem}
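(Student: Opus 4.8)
The plan is to reproduce the first-violation-time argument used for Theorem~\ref{th:invariant1}, but now with respect to the \emph{twisted} product order in which the $y$-coordinates are compared by $\le$ and the $x$-coordinates by $\ge$. After disposing of the case $\mathbf{z}(0)=\mathbf{z}_0$ by uniqueness, I would encode the desired conclusion through the gap vector $\mathbf{g}=\left(g^y_1,g^y_2,g^x_1,g^x_2\right)$,
\begin{align*}
g^y_i(t)=y_i(t,\mathbf{z}_0)-y_i(t,\mathbf{z}(0)),\qquad g^x_i(t)=x_i(t,\mathbf{z}(0))-x_i(t,\mathbf{z}_0),\quad i=1,2,
\end{align*}
so that the hypotheses read $\mathbf{g}(0)\ge \mathbf{0}$ and the assertion to prove is simply $\mathbf{g}(t)\ge \mathbf{0}$ for all $t\ge 0$.

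The key step is to show that $\mathbf{g}$ solves a linear variational system $\dot{\mathbf{g}}=A(t)\mathbf{g}$ whose coefficient matrix $A(t)$ is \emph{Metzler}, i.e.~has nonnegative off-diagonal entries. Subtracting the two copies of~\eqref{eq:bipartite2}--\eqref{eq:bipartite22} and adding and subtracting cross terms, the first coordinate becomes
\begin{align*}
\dot g^y_1=\gamma^y\bigl(1-y_1(\mathbf{z}_0)-x_1(\mathbf{z}_0)\bigr)g^y_2+\gamma^y y_2(\mathbf{z}(0))\,g^x_1-\bigl(\gamma^y y_2(\mathbf{z}(0))+1\bigr)g^y_1,
\end{align*}
and the remaining three coordinates follow by the obvious symmetry. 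Because the domain $\widehat{D}=\Delta_2\times\Delta_2$ is forward invariant, along both solutions every coordinate stays in $\left[0,1\right]$ and each saturation bracket $1-y_i-x_i$ stays nonnegative; hence every off-diagonal coefficient above is a product of such nonnegative quantities and is $\ge 0$, while the diagonal coefficients may have any sign. It is worth stressing that it is precisely the twisting of the order (the sign flip on the $x$-coordinates) that converts the \emph{competitive} cross-terms $-\gamma^y y_2$ and $-\gamma^x x_2$ into the \emph{nonnegative} entries $\gamma^y y_2$ and $\gamma^x x_2$; under the naive componentwise order these entries would carry the wrong sign and no comparison could hold.

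I would then conclude by the nonnegative-orthant invariance for Metzler linear systems, in the spirit of Theorem~\ref{th:invariant1}: set $T=\inf\left\{t\ge 0\,:\,g_k(t)<0\mbox{ for some }k\right\}$, assume $T<+\infty$, and let $g_k$ be a coordinate attaining $g_k(T)=0$ with $g_l(T)\ge 0$ for all $l$. Then $\dot g_k(T)=\sum_{l\ne k}A_{kl}(T)g_l(T)\ge 0$, so $g_k$ cannot dip below zero at $T$, contradicting the definition of $T$.

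The main obstacle is that this last inequality is only \emph{non-strict}: unlike in Theorem~\ref{th:invariant1}, the product structure and the possibility that several coordinates touch zero simultaneously (or that a saturation bracket vanishes) can force $\dot g_k(T)=0$ even when some driving coordinate is strictly positive, so one cannot immediately conclude that the order is strictly restored just after $T$. To make the contradiction airtight I would either invoke the standard fact that the transition matrix $\Phi(t,s)$ of a continuous Metzler system is entrywise nonnegative---whence $\mathbf{g}(t)=\Phi(t,0)\mathbf{g}(0)\ge \mathbf{0}$ directly---or run a vanishing-perturbation argument: replace $A(t)$ by $A(t)+\epsilon\left(\mathbf{1}\mathbf{1}^{\top}-I\right)$ to obtain strictly positive off-diagonal entries, recover the strict derivative inequality at $T$ exactly as in the proof of Theorem~\ref{th:invariant1}, deduce $\mathbf{g}_\epsilon(t)\ge \mathbf{0}$, and let $\epsilon\downarrow 0$.
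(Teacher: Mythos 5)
Your proof is correct, and it takes a genuinely different route from the paper's. The paper argues directly on the nonlinear system: it defines the first violation time $T$, asserts that at $T$ exactly one $y$-coordinate (or $x$-coordinate) is tight while another is strictly ordered, derives a strict inequality between the derivatives $\dot{y}_1\left(T,\mathbf{z}(0)\right)<\dot{y}_1\left(T,\mathbf{z}_0\right)$ from~\eqref{eq:bipartite2}--\eqref{eq:bipartite22}, and handles the boundary case $x_1+y_1=1$ separately to restore the order just past $T$ and reach a contradiction. You instead pass to the variational equation for the gap vector $\mathbf{g}$, observe that the twisted order makes the resulting time-varying coefficient matrix Metzler on the invariant domain $\widehat{D}$ (your computation of $\dot g^y_1$ is correct, and the symmetric coordinates check out the same way), and conclude by nonnegativity of the transition matrix of a cooperative linear system, or equivalently by the $\epsilon$-perturbation limit. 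This is the classical Kamke--M\"{u}ller comparison argument made explicit, and it buys two things the paper's proof does not: it cleanly disposes of the degenerate situations the paper glosses over (several coordinates becoming tight simultaneously at $T$, in which case the paper's claimed strict derivative inequality need not hold --- an issue you correctly flag), and it scales without modification to the regular multipartite multi-virus setting of Theorems~\ref{thm:invariant-bi-virus} and beyond, where the paper has to invoke higher-order derivative comparisons (Theorems~\ref{th:equal}--\ref{thm:multiunbalance} and~\ref{th:analytic}) to repair the same non-strictness problem. The only ingredient you use that deserves explicit proof rather than citation is the forward invariance of $\widehat{D}$, but the paper itself asserts this immediately before the theorem, so you are entitled to it.
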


\begin{proof}
Of course, if $\mathbf{y}(0)=\mathbf{y}_0$ and $\mathbf{x}(0)=\mathbf{x}_0$ (i.e., $\mathbf{z}(0)=\mathbf{z}_0$), then, by uniqueness, $\mathbf{z}\left(t,\mathbf{z}(0)\right)=\mathbf{z}\left(t,\mathbf{z}_0\right)$, $\forall\,\,{t\geq 0}$, and the Theorem holds. Let us further assume that $\mathbf{y}(0)\neq \mathbf{y}_0$. Similarly to the proof of Theorem $\ref{th:invariant1}$, define
\begin{equation}
T=\inf\left\{t\,:\,t\geq 0,\,  \mathbf{y}\left(t,\mathbf{y}(0)\right)\nleq \mathbf{y}\left(t,\mathbf{y}_0\right)\mbox{  or  } \mathbf{x}\left(t,\mathbf{x}(0)\right)\ngeq \mathbf{x}\left(t,\mathbf{x}_0\right)\right\}.\nonumber
\end{equation}

Assume $T<\infty$. Then, with $i\neq j$, $i,j\in\left\{1,2\right\}$, we have one of the two configurations below:
\begin{align}
\label{eq:first}
y_i\left(T,\mathbf{z}(0)\right)&=y_i\left(T,\mathbf{z}_0\right)\mbox{  and  }y_j\left(T,\mathbf{z}(0)\right)<y_j\left(T,\mathbf{z}_0\right)
\\
\label{eq:firstb}
x_i\left(T,\mathbf{z}(0)\right)&=x_i\left(T,\mathbf{z}_0\right)\mbox{  and  }x_j\left(T,\mathbf{z}(0)\right)>x_j\left(T,\mathbf{z}_0\right).
\end{align}
Without loss of generality, choose configuration~\eqref{eq:first} with $i=1$ and $j=2$.

\textit{Case~1}: If $x_1\left(T, \mathbf{z}(0)\right)+y_1\left(T,\mathbf{z}(0)\right)<1$, then, from \eqref{eq:bipartite2} and \eqref{eq:bipartite22}, we have
\begin{equation}
\dot{y}_1\left(T,\mathbf{z}(0)\right)<\dot{y}_1\left(T,\mathbf{z}_0\right).\nonumber
\end{equation}
Therefore,
\begin{equation}
\exists\,\,\epsilon_1>0:\,y_1\left(t,\mathbf{z}(0)\right)<y_1\left(t,\mathbf{z}_0\right),\,\,\,\:\forall\,\,\, T<t<T+\epsilon_1.\nonumber
\end{equation}
Also,
\begin{equation}
y_2\left(T,\mathbf{z}(0)\right)<y_2\left(T,\mathbf{z}_0\right)\Rightarrow \exists\,\,\epsilon_2>0:\,y_2\left(t,\mathbf{z}(0)\right)<y_2\left(t,\mathbf{z}_0\right),\,\,\,\:\forall\,\,\, T<t<T+\epsilon_2.\nonumber
\end{equation}
Thus,
\begin{equation}
y\left(t,\mathbf{z}(0)\right)\leq y\left(t,\mathbf{z}_0\right),\,\,\,\forall\,\,\,\: T<t<T+\epsilon\nonumber
\end{equation}
with $\epsilon=\epsilon_1\wedge \epsilon_2$. In the same way, we can conclude that for some $\alpha>0$:
\begin{equation}
x\left(t,\mathbf{z}(0)\right)\geq x\left(t,\mathbf{z}_0\right),\,\,\,\:\forall\,\,\, T<t<T+\alpha.\nonumber
\end{equation}

\textit{Case~2}: If $x_1\left(T,\mathbf{z}(0)\right)+y_1\left(T,\mathbf{z}(0)\right)=1$, then, for all $t\in\left(T,T+\epsilon\right)$:
\begin{equation}
\dot{x}_1\left(T,\mathbf{z}(0)\right)+\dot{y}_1\left(T,\mathbf{z}(0)\right)=-\left(x_1(T)+y_1(T)\right)<0\Rightarrow\exists\,\, \epsilon>0:\,x_1\left(t,\mathbf{z}(0)\right)+y_1\left(t,\mathbf{z}(0)\right)<1. \nonumber
\end{equation}
 From case $1$, we reach a contradiction on $T$, and the Theorem is proved.
\end{proof}
Figure~\ref{fig:invariants_main} depicts geometrically Theorem~\ref{th:invariant2} as the monotonous property in Theorem~\ref{th:invariant2} is equivalent to the invariance of the set given by the Cartesian product of the dark (colored) triangles in Figure~\ref{fig:invariants_main}.
\begin{figure} [hbt]
\begin{center}
\includegraphics[scale= 0.3]{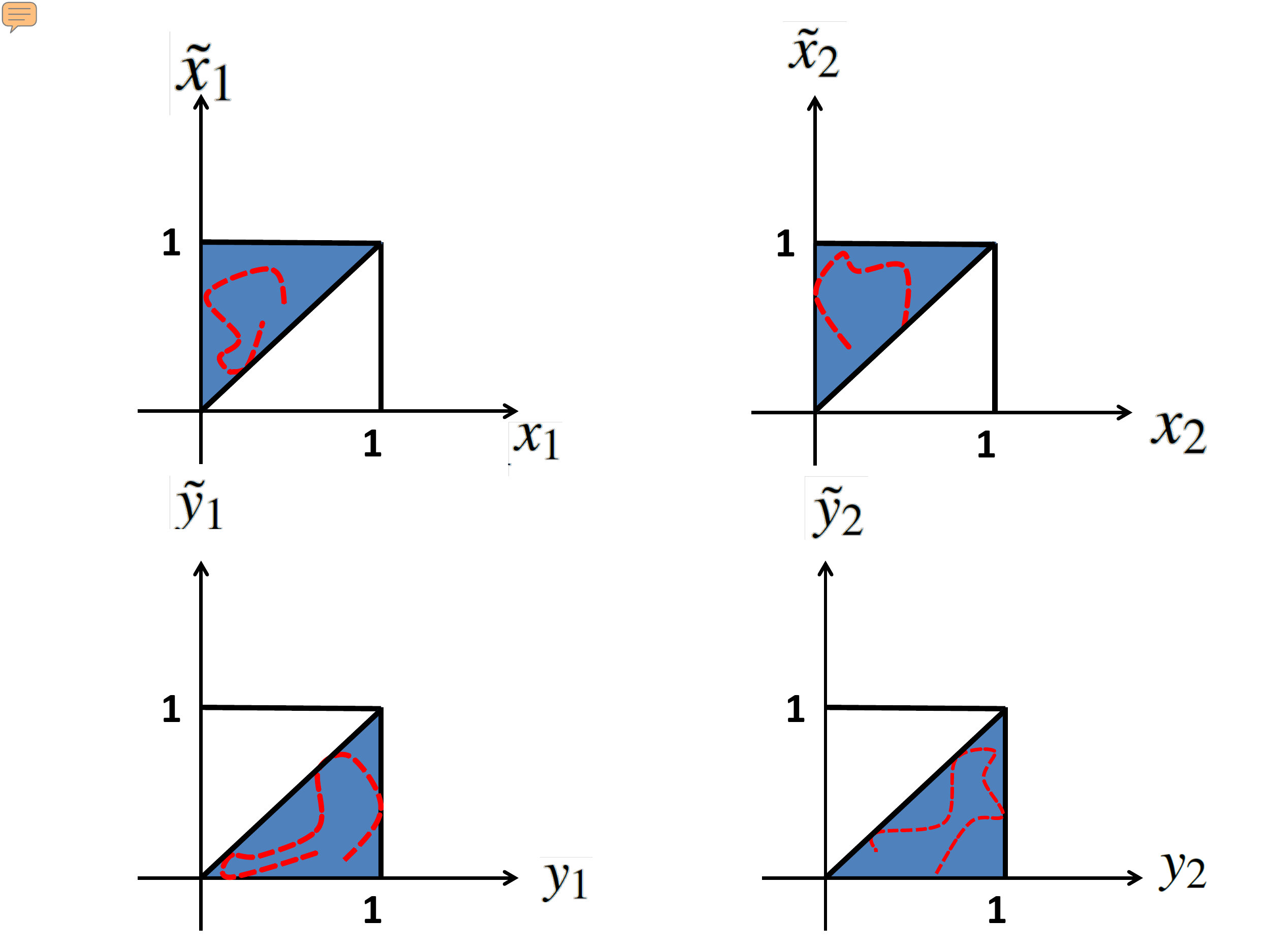}
\caption{Phase space of the augmented dynamical system $\left(\widetilde{x}_1(t),x_1(t),\widetilde{y}_1(t),y_1(t), \widetilde{x}_2(t),x_2(t),\widetilde{y}_2(t),y_2(t)\right)$, i.e., the  evolution of the viral evolution in two isomorphic bipartite networks with perhaps different initial degrees of infection. The red (dashed) curve captures the idea that a solution cannot escape the blue (dark) region in finite time.}\label{fig:invariants_main}
\end{center}
\end{figure}

Similarly to as done in the previous subsection, given any initial condition
\begin{equation}
\left(x_1(0),x_2(0),y_1(0),y_2(0)\right)>0,
\end{equation}
we may choose $\widetilde{x}_1(0)=\widetilde{x}_2(0)=\max\{x_1(0),x_2(0)\}$ and
$\widetilde{y}_1(0)=\widetilde{y}_2(0)=\min\{y_1(0),y_2(0)\}$. In this case, $\left(\widetilde{x}_1(t),\widetilde{x}_2(t),\widetilde{y}_1(t),\widetilde{y}_2(t)\right)$ is solution of the reduced system
\begin{eqnarray}
\dot{\widetilde{x}}(t){\bf 1}_2 &\!\!\! =\!\!\! & \left(\gamma^x \widetilde{x}(t)\left(1-\widetilde{x}(t)-\widetilde{y}(t)\right)-\widetilde{x}(t)\right){\bf 1}_2\label{eq:reduced}\\
\dot{\widetilde{y}}(t){\bf 1}_2 & \!\!\!=\!\!\! & \left(\gamma^y \widetilde{y}(t)\left(1-\widetilde{x}(t)-\widetilde{y}(t)\right)-\widetilde{y}(t)\right){\bf 1}_2\label{eq:reduced2}.
\end{eqnarray}
Remark that these equations~(\ref{eq:reduced}) and~(\ref{eq:reduced2}) represent the dynamics of bi-viral epidemics over a complete network as studied in~\cite{paper:CDC}. Therefore, if $\gamma^x>\gamma^y$
\begin{eqnarray}
\lim_{t\rightarrow \infty}\widetilde{x}_1(t) &=&\left(1-\frac{1}{\gamma^x}\right)\nonumber\\
\lim_{t\rightarrow \infty}\widetilde{x}_2(t) &= &\left(1-\frac{1}{\gamma^x}\right)\nonumber\\
\lim_{t\rightarrow \infty}\widetilde{y}_1(t) &=& 0\nonumber\\
\lim_{t\rightarrow \infty}\widetilde{y}_2(t) &=& 0,\nonumber
\end{eqnarray}
regardless of the initial conditions. Also, choosing $\overline{x}_1(0)=\overline{x}_2(0)=\min\{x_1(0),x_2(0)\}$ and $\overline{y}_1(0)=\overline{y}_2(0)=\max\{y_1(0),y_2(0)\}$, we have
\begin{eqnarray}
\lim_{t\rightarrow \infty}\overline{x}_1(t) &=&\left(1-\frac{1}{\gamma^x}\right)\nonumber\\
\lim_{t\rightarrow \infty}\overline{x}_2(t) &= &\left(1-\frac{1}{\gamma^x}\right)\nonumber\\
\lim_{t\rightarrow \infty}\overline{y}_1(t) &=& 0\nonumber\\
\lim_{t\rightarrow \infty}\overline{y}_2(t) &=& 0.\nonumber
\end{eqnarray}
Therefore, since (from Theorem~\ref{th:invariant2}) $\overline{x}_i(t)\leq x_i(t)\leq \widetilde{x}_i(t)$ and $\overline{y}_i(t)\leq y_i(t)\leq \widetilde{y}_i(t)$, $\forall\,_{t\geq 0}$, then
\begin{eqnarray}
\lim_{t\rightarrow \infty}x_1(t) &=&\left(1-\frac{1}{\gamma^x}\right)\nonumber\\
\lim_{t\rightarrow \infty}x_2(t) &= &\left(1-\frac{1}{\gamma^x}\right)\nonumber\\
\lim_{t\rightarrow \infty}y_1(t) &=& 0\nonumber\\
\lim_{t\rightarrow \infty}y_2(t) &=& 0.\nonumber
\end{eqnarray}
The case when $x_i(0)=0$ or $y_i(0)=0$ for some $i=1,2$ is treated similarly as in the proof of Theorem~\ref{th:invariant1}, in that there exists $\delta>0$ so that $x_i(t)>0$ for all $t\in \left(0,\delta\right)$ and $i=1,2$ as long as $x_1(0)>0$ or $x_2(0)>0$. Otherwise, $\left(x_1(0),x_2(0)\right)=0$ is an equilibrium point (no virus of type $x$ in the system) and $\left(x_1(t),x_2(t)\right)=0$ for all $t\geq 0$.
Figure~\ref{fig:bound} illustrates the possibility of bounding any configuration by simpler symmetric well-characterized configurations. Such bounds are preserved for all $t$, $t\geq 0$ as established in Theorem~\ref{th:invariant2}.
\begin{figure} [hbt]
\begin{center}
\includegraphics[scale= 0.3]{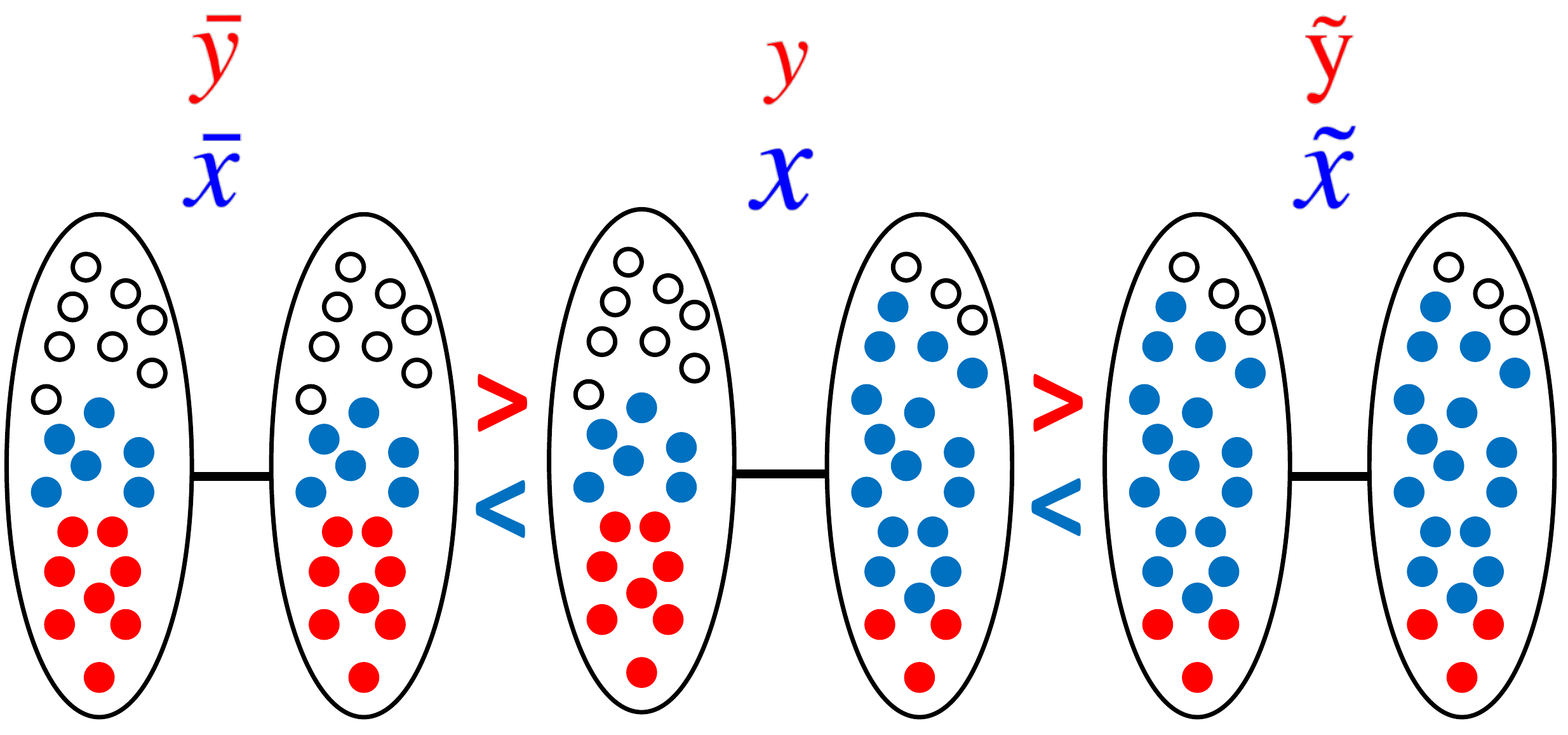}
\caption{Population of blue (lighter color) in the center bipartite network is lower and upper bounded by the corresponding populations in the left and right bipartite networks. The same goes, in the other way around, for the red (darker) population. The symmetric configurations in the left and right bipartite networks induce well-known solutions that force to the same equilibrium state the configuration over time of the middle bipartite network.}\label{fig:bound}
\end{center}
\end{figure}

\section{Macroscopic Behavior -- Regular Multipartite Networks}
\label{sec:multipartite}
This Section extends the results on the macroscopic behavior for bipartite networks in Section~\ref{sec:qualitativeanalysis} to arbitrary \emph{regular} multipartite networks. We recall that a multipartite network is regular if the superdegree is the same for every island in the supernetwork. Subsection~\ref{subsec:multipartitenet-singlevirus} considers single virus infection, while Subsection~\ref{subsec:multipartitenet-bi-virus} analyzes the epidemics of multiple virus strains. The focus is again on the qualitative dynamics of the vector process $\left(\mathbf{y}(t)\right)$ of the fractions of infected nodes in each island by each virus in the asymptotic limit of large multipartite networks.

\subsection{Regular Multipartite Network: Single Virus}\label{subsec:multipartitenet-singlevirus}
%
We study a single virus in a \textit{regular} multipartite network. Wlog, we consider the symmetric configuration in Definition~\ref{def:symmetricconfiguration} where all islands of the multipartite network have the same size and the inter rates of infection are equal, $\gamma_{ij}\equiv \gamma$, $1\leq i,j\leq M$, $i\neq j$. In a multipartite network, whenever a node from one island connects to a node from another island, then any node from the first island connects to any node in the second island. The mean field dynamics of a single virus epidemics over a large \textit{symmetric configuration} \textit{regular} multipartite network with~$M$ islands is obtained by specializing~\eqref{eq:single1} to the symmetric configuration. We get the~$M$ coupled nonlinear ordinary differential equations, $i=1,\cdots, M$:
\begin{equation}
\label{eq:multipartite}
\frac{d}{dt}y_i(t)=\underbrace{\left(\gamma\sum_{j\sim i}y_j(t)\right)\left(1-y_i(t)\right)-y_i(t)}_{F_i(y)}.
\end{equation}
  We define the vector field $\mathbf{F}(\mathbf{y})=\left(F_1(\mathbf{y}),\ldots,F_M(\mathbf{y})\right)$. The next two Theorems are crucial to establishing the main result of this subsection in Theorem~\ref{th:convergence} and, moreover, they reveal the qualitative impact of the super-topology on the regularity of the solutions. Namely, they state that the degree of infection at island $j$ has an impact on the fraction of infected nodes at island $i$, $n$-hops away from $j$, through perturbations of its $n$th- (or higher than~$n$) order derivative.
%
\begin{theorem}
\label{th:ordem}
Let $y_i(t)=0$ and $y_j(t)=0$, $\forall\,\,\,{j\in\cup_{l=1}^{n}\mathcal{N}^l(i)}$ for some time $t\geq0$. Then,
\begin{equation}
\label{eq:order2}
\stackrel{(\ell)}{y}_{\!\!i}\!\!(t)=0,\,\,\,\forall\,\,\,{0<\ell\leq n},
\end{equation}
that is, if there are no infected islands within a neighborhood up to order~$n$ of island~$i$, then all derivatives of $y_i(t)$ up to order~$n$ are zero.
\hfill$\small \blacksquare$
\end{theorem}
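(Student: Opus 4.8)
The plan is to argue by induction on the order of differentiation, exploiting the \emph{locality} of the vector field in~\eqref{eq:multipartite}: each $F_i(\mathbf{y})$ depends only on $y_i$ and on $\{y_j:j\sim i\}$, so every additional time-derivative of $y_i$ can reach at most one hop further into the super-network. Accordingly I would track a radius that shrinks by one per order and prove the following cumulative claim: for every $0\le \ell\le n$ and every island $m$ with $\mathrm{dist}(i,m)\le n-\ell$, one has $\stackrel{(\ell)}{y}_{\!\!m}\!\!(t)=0$. Specializing to $m=i$ (so $\mathrm{dist}(i,i)=0\le n-\ell$ for all $\ell\le n$) and $1\le\ell\le n$ gives precisely~\eqref{eq:order2}. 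Note that the derivatives in question are well defined to all orders: $\mathbf{F}$ is polynomial, hence $C^\infty$, so the solution is smooth.

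The base case $\ell=0$ is exactly the hypothesis, namely $y_m(t)=0$ for every $m\in\{i\}\cup\bigcup_{l=1}^{n}\mathcal{N}^l(i)$. For the inductive step I would assume the claim for all orders up to $\ell$ (with $\ell<n$) and establish order $\ell+1$. Fix $m$ with $\mathrm{dist}(i,m)\le n-\ell-1$. Writing $F_m=\gamma\sum_{k\sim m}y_k-\gamma\,y_m\sum_{k\sim m}y_k-y_m$ and applying the Leibniz rule, $\stackrel{(\ell+1)}{y}_{\!\!m}=\tfrac{d^\ell}{dt^\ell}F_m$ is a finite sum of three kinds of terms: (i) $\gamma\sum_{k\sim m}\stackrel{(\ell)}{y}_{\!\!k}$; (ii) products $\binom{\ell}{a}\stackrel{(a)}{y}_{\!\!m}\stackrel{(\ell-a)}{y}_{\!\!k}$ with $0\le a\le\ell$ and $k\sim m$; and (iii) $-\stackrel{(\ell)}{y}_{\!\!m}$.

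Each family vanishes at $t$. In (i) and (iii), any neighbor $k$ satisfies $\mathrm{dist}(i,k)\le\mathrm{dist}(i,m)+1\le n-\ell$, and $m$ itself satisfies $\mathrm{dist}(i,m)\le n-\ell$, so the order-$\ell$ derivatives appearing there are zero by the inductive hypothesis. In (ii), the first factor $\stackrel{(a)}{y}_{\!\!m}\!\!(t)$ already vanishes for \emph{every} $0\le a\le\ell$: since $a\le\ell$ we have $\mathrm{dist}(i,m)\le n-\ell-1\le n-a$, so the hypothesis at order $a$ applies. Hence $\stackrel{(\ell+1)}{y}_{\!\!m}\!\!(t)=0$, which closes the induction and yields the theorem.

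The one place I expect to spend the most care is the bookkeeping in family (ii): one must verify that in \emph{every} nonlinear cross-term at least one factor is a derivative of order $\le\ell$ evaluated at an island close enough to $i$ for the hypothesis to bite. Stating the induction cumulatively—asserting simultaneously that $\stackrel{(p)}{y}_{\!\!m}\!\!(t)=0$ for all $p\le\ell$ at every $m$ within distance $n-p$—is what makes this automatic, since it guarantees the first factor $\stackrel{(a)}{y}_{\!\!m}\!\!(t)$ is zero for each $a\le\ell$ and thereby prevents the products from spoiling the argument.
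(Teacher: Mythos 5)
Your proof is correct and follows essentially the same route as the paper's: an induction combined with the Leibniz expansion of $\tfrac{d^\ell}{dt^\ell}F_m$ into the linear neighbor term, the product terms, and the healing term, each of which vanishes by locality of the vector field. The only difference is presentational --- your cumulative claim over all islands within distance $n-\ell$ makes explicit the shrinking-radius bookkeeping that the paper handles implicitly by applying its induction hypothesis (at order $n-1$) to the neighbors of $i$.
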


\begin{proof}
We apply induction on the order $n$.

\textit{Step~1}: For $n=1$:
\begin{equation}
\dot{y}_i(t)=\left(\gamma \sum_{q\sim i}y_q(t)\right)\left(1-y_i(t)\right)-y_i(t)=0.\nonumber
\end{equation}

\textit{Step~2}: \textbf{Induction step}. We assume that Theorem~\ref{th:ordem} holds for $n-1$ and prove it holds for order~$n$. By algebraic and reordering manipulations, we can show:
\begin{equation}
\stackrel{(n)}{y}_{\!\!i}\!\!(t)=\underbrace{\left(\gamma\sum_{k\sim i}\stackrel{(n-1)}{y}_{\!\!\!\!\!\!k}\!\!(t) \right)}_{\bf A} -\underbrace{\sum_{\ell=0}^{n-1}\left(\begin{array} {c} n-1\\\ell\end{array}\right)\stackrel{(\ell)}{y}_{\!\!i}\!\!(t)\left(\gamma\sum_{q\sim i} \stackrel{(n-1-\ell)}{y}_{\!\!\!\!\!\!\!\!q}\!\!(t)\right)}_{\bf B}- \underbrace{\stackrel{(n-1)}{y}_{\!\!\!\!\!\!i}\!\!(t)}_{\bf C}\nonumber
\end{equation}
holds for all $n\in\mathbb{N}$. We analyze now each term. First, note that from the induction hypothesis $\stackrel{(\ell)}{y}_{\!\!i}\!\!(t)=0$ for all $l=1,\ldots,n-1$.

${\bf A}$: Since by assumption $y_j(t)=0$, $\forall\,\,\,{j\in\cup_{l=1}^{n} \mathcal{N}^{l}}(i)$, then, if $k\in\mathcal{N}(i)$, by induction, $\stackrel{(\ell)}{y}_{\!\!k}\!\!(t)=0$ for all $l=1,\ldots,n-1$. Therefore, $\gamma\sum_{k\sim j}\stackrel{(n-1)}{y}_{\!\!\!\!\!\!k}\!\!(t)=0$, i.e., term~${\bf A}$ is zero.

${\bf B}$: By assumption, $y_i(t)=0$, and by induction, for all $l=1,\ldots,n-1$, $\stackrel{(\ell)}{y}_{\!\!i}\!\!(t)=0$, hence term~${\bf B}$ is zero.

${\bf C}$: Term ${\bf C}$ is zero, since by induction $\stackrel{(n-1)}{y}_{\!\!\!\!\!\!i}\!\!(t)=0$.
\end{proof}

The next Theorem states that higher order moments are sensitive to further away infected islands--island $i$ located $n$-hops away from island $j$, affects only the $n$th-order derivative of~$j$.
\begin{theorem}
\label{th:order2}
Let $y_i(t)>0$ and $y_j(t)=0$, $\forall\,\,\,{j\neq i}$ for some time $t\geq0$. Then, $\forall {\ell<n}$
\begin{equation}
\label{eq:order22}
j\!\in\!\mathcal{N}^{n}(i)\!\Rightarrow\! \stackrel{(n)}{y}_{\!\!j}\!\!(t)>0\mbox{ and} \stackrel{(\ell)}{y}_{\!\!j}\!\!(t)=0.
\end{equation}
\hfill$\small \blacksquare$
\end{theorem}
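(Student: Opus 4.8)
My plan is to prove the statement by induction on the hop-order $n$, fixing once and for all the time $t$ at which $y_i(t)>0$ and $y_j(t)=0$ for all $j\neq i$. Write $y_m^{(n)}(t)$ for the $n$th derivative appearing in the statement, and let $d(i,m)$ denote the geodesic (hop) distance in the supernetwork, so that $m\in\mathcal{N}^{n}(i)$ means $d(i,m)=n$. The inductive claim I carry is the positivity alone: for every island $m$ with $d(i,m)=n$ one has $y_m^{(n)}(t)>0$. The companion vanishing statement, $y_m^{(\ell)}(t)=0$ for every $\ell<n$, I will \emph{not} reprove by hand but read off directly from Theorem~\ref{th:ordem}. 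Indeed, if $d(i,m)>\ell$ then every island $p$ within $\ell$ hops of $m$ satisfies $d(i,p)\geq d(i,m)-\ell>0$ by the triangle inequality, hence $p\neq i$ and $y_p(t)=0$; Theorem~\ref{th:ordem} applied with center $m$ then yields $y_m^{(\ell)}(t)=0$. This remark is used repeatedly below and also delivers the $\ell<n$ conclusion of the theorem at the end.

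For the base case $n=1$ and $m\sim i$, we have $y_m(t)=0$, so~\eqref{eq:multipartite} collapses to $\dot y_m(t)=\gamma\sum_{k\sim m}y_k(t)$; among the neighbors of $m$ only $i$ is infected, giving $\dot y_m(t)=\gamma\,y_i(t)>0$. For the induction step I substitute $d(i,m)=n$ into the derivative recurrence established in the proof of Theorem~\ref{th:ordem},
\begin{equation*}
y_m^{(n)}(t)=\underbrace{\gamma\sum_{k\sim m}y_k^{(n-1)}(t)}_{A}-\underbrace{\sum_{\ell=0}^{n-1}\binom{n-1}{\ell}y_m^{(\ell)}(t)\,\gamma\!\!\sum_{q\sim m}\!\!y_q^{(n-1-\ell)}(t)}_{B}-\underbrace{y_m^{(n-1)}(t)}_{C}.
\end{equation*}
Terms $B$ and $C$ vanish immediately: since $d(i,m)=n>\ell$ for every $\ell\leq n-1$, the remark above (together with $y_m(t)=0$ for the $\ell=0$ factor) forces every factor $y_m^{(\ell)}(t)$, $0\leq\ell\leq n-1$, to be zero, killing $B$, and likewise $C=y_m^{(n-1)}(t)=0$.

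The whole weight of the proof therefore rests on showing that the surviving term $A$ is strictly positive, and this is the step I expect to be the main obstacle, because it needs both a sign (no-cancellation) argument and an existence argument. I split the neighbors $k\sim m$ by their distance to $i$, which can only be $n-1$, $n$, or $n+1$ (a neighbor at distance $\leq n-2$ would force $d(i,m)\leq n-1$). For $d(i,k)\in\{n,n+1\}$ the remark above gives $y_k^{(n-1)}(t)=0$, so these contribute nothing; for $d(i,k)=n-1$ the induction hypothesis gives $y_k^{(n-1)}(t)>0$. Hence $A=\gamma\sum_{k\sim m,\,d(i,k)=n-1}y_k^{(n-1)}(t)$ is a sum of strictly positive terms with no canceling negative contributions, and it is \emph{nonempty}: the penultimate island on any geodesic from $i$ to $m$ is a neighbor of $m$ lying at distance exactly $n-1$ from $i$. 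Therefore $A>0$, which gives $y_m^{(n)}(t)=A>0$ and closes the induction; the monotone structure—every neighbor contribution being nonnegative—is precisely what makes the positivity robust, the only genuine risk being an empty sum, which the geodesic argument rules out. Finally, for $j\in\mathcal{N}^n(i)$ the induction yields $y_j^{(n)}(t)>0$, while for each $\ell<n$ the inequality $d(i,j)=n>\ell$ and the opening remark give $y_j^{(\ell)}(t)=0$, completing the proof.
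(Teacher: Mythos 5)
Your proof is correct and follows essentially the same route as the paper's: induction on the hop distance $n$, the same derivative recurrence with terms ${\bf A}$, ${\bf B}$, ${\bf C}$, Theorem~\ref{th:ordem} to annihilate ${\bf B}$ and ${\bf C}$, and the existence of a neighbor of $m$ at geodesic distance $n-1$ from $i$ to make ${\bf A}$ strictly positive. You are in fact slightly more careful than the paper on term ${\bf A}$, where you explicitly check that neighbors of $m$ at distance $n$ or $n+1$ contribute zero (so no cancellation can spoil positivity) --- a point the paper's proof passes over silently.
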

By Theorem~\ref{th:order2}, when island~$i$ is the only infected island in the network, infection at island~$j$ $n$-hops away from~$i$ is perturbed only through its $n$th-order derivative $\stackrel{(n)}{y}_{\!\!j}\!\!(t)$.
\begin{proof}
Again,
\begin{equation}
\stackrel{(n)}{y}_{\!\!j}\!\!(t)=\underbrace{\left(\gamma\sum_{k\sim j}\stackrel{(n-1)}{y}_{\!\!\!\!\!\!k}\!\!(t)\right)}_{\bf A}-\underbrace{\sum_{\ell=0}^{n-1}\left(\begin{array} {c} n-1\\\ell\end{array}\right)\stackrel{(\ell)}{y}_{\!\!j}\!\!(t)\left(\gamma\sum_{q\sim j}\stackrel{(n-1-\ell)}{y}_{\!\!\!\!\!\!\!\!q}\!\!(t)\right)}_{\bf B}-\underbrace{\stackrel{(n-1)}{y}_{\!\!\!\!\!\!j}\!\!(t)}_{\bf C}.\nonumber
\end{equation}
Now, we apply induction on the number of hops $n$.

\textit{Step~1}: For $n=1$, we have that $j\in\mathcal{N}(i)$ and
\begin{eqnarray}
\dot{y}_j(t) & = & \left(\gamma\sum_{k\sim j} y_k(t)\right)\left(1-y_j(t)\right)-y_j(t)=\gamma\sum_{k\sim j}y_k(t)\nonumber\\
 & = &\gamma y_i(t)+\gamma\underbrace{\sum_{k\sim j,k\neq i} y_k(t)}_{=0}= \gamma y_i(t)>0.\nonumber
\end{eqnarray}
That is, $y_j(t)=0$ and $\dot{y}_j(t)>0$.

\textit{Step~2}: \textbf{Induction step}. Assume assertion~\eqref{eq:order22} holds for $n-1$. We consider successively the terms~${\bf A}$, ${\bf B}$, and ${\bf C}$.

${\bf A}$: By definition, $j\in \mathcal{N}^{n}(i)\Rightarrow \exists\,\,{k\in \mathcal{N}^{n-1}\left(i\right)}:\,j\sim k$. From the induction hypothesis, $\stackrel{(n-1)}{y}_{\!\!\!\!\!\!k}\!\!(t)>0$. Therefore, $\gamma\sum_{k\sim j}\stackrel{(n-1)}{y}_{\!\!\!\!\!\!k}\!\!(t)>0$ and term ${\bf A}$ is strictly positive.

${\bf B}$: From Theorem~\ref{th:ordem}, for $j\in\mathcal{N}^{n}(i)$, $\stackrel{(\ell)}{y}_{\!\!j}\!\!(t)=0$, $\forall\,\,\,{\ell=1,\ldots,n-1}$, and, thus, term ${\bf B}$ is zero.

${\bf C}$: From Theorem~\ref{th:ordem}, for $j\in\mathcal{N}^{n}(i)$, $\stackrel{(n-1)}{y}_{\!\!\!\!\!\!j}\!\!(t)=0$, and term~${\bf C}$ is zero.

Therefore, $\stackrel{(n)}{y}_{\!\!j}\!\!(t)>0$ with $\stackrel{(\ell)}{y}_{\!\!j}\!\!(t)=0$, $\,\forall\,\,\,{\ell< n}$, and the Theorem is proved.
\end{proof}

 Theorems~\ref{th:ordem} and~\ref{th:order2} reveal the impact of the super-topology on the inter-dependence among the geometric aspects (e.g., derivative, curvature) of the infected populations across the islands, namely, a perturbation on the infected population of an island $i$ will perturb its immediate neighbors by perturbing their first derivatives. In general, for an $n$-hop geodesic connecting $i$ and $j$, we have that perturbations on $y_i(t)$ only affect the $n$th order curvature $\stackrel{(n)}{y}_{\!\!j}\!\!(t)$ in~$j$.

Next, we extend Theorem~\ref{th:invariant1} to regular multipartite networks, confirming that the state of infection of a regular multipartite network with a dominant initial degree of infection dominates the state of infection of other equivalent regular networks across the whole time $t$, $t\geq 0$.
\begin{theorem}
\label{th:invariant}
Let $\left(\mathbf{y}(t)\right)$ be the limiting macrostate in a regular multipartite network. Then,
\begin{equation}
\mathbf{y}(0)\leq \mathbf{y}_0 \Rightarrow \mathbf{y}\left(t,\mathbf{y}(0)\right)\leq \mathbf{y}\left(t,\mathbf{y}_0\right),\,\,\,\forall\,\,\,{t\geq 0}.\nonumber
\blacksquare
\end{equation}
\end{theorem}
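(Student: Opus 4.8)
The plan is to reduce the comparison to the invariance of the nonnegative orthant for the \emph{difference} process and to exploit that the vector field $\mathbf{F}$ in~\eqref{eq:multipartite} is cooperative. As in Theorem~\ref{th:invariant1}, the case $\mathbf{y}(0)=\mathbf{y}_0$ is immediate by uniqueness, so assume $\mathbf{y}(0)\leq\mathbf{y}_0$ with strict inequality somewhere. Write $\mathbf{u}(t)=\mathbf{y}(t,\mathbf{y}_0)$ and $\mathbf{v}(t)=\mathbf{y}(t,\mathbf{y}(0))$ for the two solutions, both of which remain in the invariant domain $D=[0,1]^M$, and set $\mathbf{w}(t)=\mathbf{u}(t)-\mathbf{v}(t)$, so that $\mathbf{w}(0)\geq\mathbf{0}$ and the claim is $\mathbf{w}(t)\geq\mathbf{0}$ for all $t\geq 0$. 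Mimicking Theorem~\ref{th:invariant1}, I would argue by contradiction: set $T=\inf\{t\geq 0:\mathbf{w}(t)\ngeq\mathbf{0}\}$, assume $T<\infty$, and use continuity of the flow to obtain $\mathbf{w}(T)\geq\mathbf{0}$ with at least one \emph{tight} coordinate $i$, i.e.\ $w_i(T)=0$.

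First I would make the coupling explicit. Subtracting $F_i(\mathbf{v})$ from $F_i(\mathbf{u})$ and regrouping gives the linear variational identity
\[
\dot{w}_i(t)=\gamma\bigl(1-u_i(t)\bigr)\sum_{j\sim i}w_j(t)-\Bigl(\gamma\sum_{j\sim i}v_j(t)+1\Bigr)w_i(t),
\]
so that $\dot{\mathbf{w}}=A(t)\mathbf{w}$ with $A_{ij}(t)=\gamma(1-u_i(t))$ for $j\sim i$ and $A_{ii}(t)=-(\gamma\sum_{j\sim i}v_j(t)+1)$. The off-diagonal entries are nonnegative precisely because $u_i\leq 1$ on $D$; that is, $A(t)$ is a Metzler matrix. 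Evaluating at the tight coordinate $i$ at time $T$, where $w_i(T)=0$ while every $w_j(T)\geq 0$, yields $\dot{w}_i(T)=\gamma(1-u_i(T))\sum_{j\sim i}w_j(T)\geq 0$, which is the nonnegativity that should forbid $w_i$ from crossing below zero.

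The step I expect to be the main obstacle is that this first-order comparison is only \emph{non-strict}: when a tight coordinate $i$ happens to have all of its supernetwork neighbors tight as well, $\dot{w}_i(T)=0$, and one cannot conclude directly, as in the bipartite proof of Theorem~\ref{th:invariant1}, that $w_i$ stays nonnegative just past $T$. I would remove this difficulty by the standard Metzler shift: choose $\kappa>0$ large enough that $B(t):=A(t)+\kappa I$ has nonnegative entries on $[0,T+1]$ (possible since $\mathbf{u},\mathbf{v}$ are bounded on $D$), and set $\mathbf{p}(t)=e^{\kappa t}\mathbf{w}(t)$, which solves $\dot{\mathbf{p}}=B(t)\mathbf{p}$. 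Because $B(t)$ is entrywise nonnegative, whenever $\mathbf{p}\geq\mathbf{0}$ with $p_i=0$ one has $\dot{p}_i=\sum_j B_{ij}p_j\geq 0$; this subtangential (Nagumo) condition makes the nonnegative orthant invariant for $\mathbf{p}$, so $\mathbf{p}(t)\geq\mathbf{0}$ and hence $\mathbf{w}(t)=e^{-\kappa t}\mathbf{p}(t)\geq\mathbf{0}$ for all $t$, contradicting $T<\infty$. Equivalently, one may invoke that the state-transition matrix of a Metzler system is entrywise nonnegative, which is the linear, Kamke-type comparison principle for cooperative systems.

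An alternative route, more in the spirit of Theorems~\ref{th:ordem} and~\ref{th:order2}, resolves the same degeneracy through higher-order derivatives: assuming the supernetwork is connected, any tight coordinate $i$ at $T$ lies finitely many hops from a strictly dominated coordinate, and a hop-by-hop induction identical to those theorems shows that the first nonvanishing derivative of $w_i$ at $T$ is strictly positive, again preventing a sign change. Either route closes the argument, but I would present the Metzler shift as the primary proof, since it avoids case analysis on the neighborhood structure and adapts to the competitive bi-viral setting after the order-reversal already used in Theorem~\ref{th:invariant2}.
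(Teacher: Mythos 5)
Your proof is correct, and its primary route is genuinely different from the one in the paper. The paper argues by showing the set $B=\{(\mathbf{y},\widetilde{\mathbf{y}}):y_i\geq\widetilde{y}_i\}\cap[0,1]^{2M}$ is invariant for the decoupled augmented field, checking the boundary case by case; the degenerate tangency (a tight coordinate whose neighbors are also tight) is resolved there exactly by your ``alternative route'': the hop-by-hop higher-order derivative analysis of Theorems~\ref{th:ordem} and~\ref{th:order2} combined with the analyticity lemma (Theorem~\ref{th:analytic}) to get a strict local ordering. Your main argument instead observes that, because the field in~\eqref{eq:multipartite} is quadratic, the difference $\mathbf{w}=\mathbf{u}-\mathbf{v}$ satisfies the \emph{exact} linear time-varying system $\dot{\mathbf{w}}=A(t)\mathbf{w}$ with $A(t)$ Metzler on $D$ (your identity $\dot{w}_i=\gamma(1-u_i)\sum_{j\sim i}w_j-(\gamma\sum_{j\sim i}v_j+1)w_i$ checks out, and $u_i\leq 1$ follows from the invariance of $D$), after which nonnegativity of $\mathbf{w}$ is the standard cooperative/Kamke comparison, made elementary by the shift $\mathbf{p}=e^{\kappa t}\mathbf{w}$ with $\kappa\geq\gamma d+1$ and the entrywise nonnegativity of the transition matrix (e.g.\ via the Peano--Baker series). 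What each approach buys: yours is shorter, needs no case analysis on the supernetwork neighborhood structure, does not rely on analyticity or even on regularity/symmetry of the rates, and handles the degenerate tangency uniformly; the paper's derivative-order argument is heavier but yields structural information (the correspondence between hop distance and the order of the first perturbed derivative) that is reused elsewhere, notably in the proof of Theorem~\ref{th:convergence} to bootstrap from initial conditions with some islands uninfected. Note also that once you have the Metzler identity, the contradiction scaffolding with the hitting time $T$ is redundant --- nonnegativity of the transition matrix gives $\mathbf{w}(t)\geq\mathbf{0}$ for all $t\geq 0$ directly from $\mathbf{w}(0)\geq\mathbf{0}$.
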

\begin{proof}
We show the invariance with respect to the dynamics~(\ref{eq:multipartite}) of the set
\begin{equation}
B\!=\!\left\{\!\left(\mathbf{y},\widetilde{\mathbf{y}}\right)\! \in\!\mathbb{R}^{2M}\!:y_i\!\geq \!\widetilde{y}_i,i=1,\ldots,M\!\right\}\cap \left[0,1\right]^{2M}.\nonumber
\end{equation}
Let $\left(\mathbf{y}(t,\mathbf{y_0}),\mathbf{y}(t,\mathbf{y}(0))\right)$ be the solution of
\begin{equation}
\frac{d}{dt}\left(\mathbf{y}(t),\mathbf{z}(t)\right)= \left(\mathbf{F}(\mathbf{y}(t)),\mathbf{F}(\mathbf{z}(t))\right).
\end{equation}
Then, it is enough to investigate the decoupled augmented vector field $\overline{\mathbf{F}}\left(\mathbf{y},\widetilde{\mathbf{y}}\right)=\left(\mathbf{F}(\mathbf{y}),\mathbf{F}(\widetilde{\mathbf{y}})\right)$ over the boundary of $B$ to establish that once started there, the solution $\left(\mathbf{y}(t,y_0),\mathbf{y}(t,\mathbf{y}(0))\right)$ never escapes the set $B$, i.e., $\mathbf{y}(t,\mathbf{y}_0)\geq \mathbf{y}(t,\mathbf{y}(0))$ for all $t$, $t\geq 0$, if $\mathbf{y}_0\geq \mathbf{y}(0)$. The set $B$ is depicted in Figure~\ref{fig:invariante2} as the Cartesian product of triangles and one has to assure that no solution components can leave the triangular regions. Let $t>0$ be such that:

\textit{Case~1}: $y_i(t)=1\,, 0< \widetilde{y}_i(t)< 1$:
\begin{equation}
F_i(\mathbf{y}(t))=\frac{d}{dt}y_i(t)=-y_i(t)=-1<0\nonumber
\end{equation}

\textit{Case~2}: $\widetilde{y}_i(t)=0\,, 0< y_i(t)< 1$:
\begin{equation}
F_i(\widetilde{\mathbf{y}}(t))=\frac{d}{dt}\widetilde{y}_i(t)=\gamma\sum_{j\sim i}y_j(t)>0\nonumber
\end{equation}

\textit{Case~3}: $0 \leq y_i(t)=\widetilde{y}_i(t)\leq 1$:
\begin{eqnarray}
F_i(\mathbf{y}(t))&=&\frac{d}{dt}y_i(t)=\left(\gamma\sum_{j\sim i}y_j(t)\right)\left(1-y_i(t)\right)-y_i(t)\nonumber\\
F_i(\widetilde{\mathbf{y}}(t))&=&\frac{d}{dt}\widetilde{y}_i(t)=\left(\gamma\sum_{j\sim i}\widetilde{y}_j(t)\right)\left(1-\widetilde{y}_i(t)\right)-\widetilde{y}_i(t)\nonumber\\
&=&\left(\gamma \sum_{j\sim i}\widetilde{y}_j(t)\right)\left(1-y_i(t)\right)-y_i(t).\label{eq:auxiliar}
\end{eqnarray}
If $y_j(0)\geq \widetilde{y}_j(0)$, $\forall\,\,\,{j\in\mathcal{N}(i)}$, with strict inequality for at least some $j\in\mathcal{N}(i)$, then $F_i(\mathbf{y}(0))=\dot{y}_i(0)>\dot{\widetilde{y}}_i(0)=F_i(\widetilde{\mathbf{y}}(0))$ and, therefore, from Theorem~\ref{th:analytic} and the analyticity of the vector field $\mathbf{F}$ (thus, the analyticity of the solutions), we have $y_j(t)> \widetilde{y}_j(t)$ for all $t\in\left(0,\epsilon\right)$ for some $\epsilon>0$ small enough. More generally, if $y_j(0)=\widetilde{y}_j(0)$, $\forall\,\,j\in\cup_{l=0}^{n-1}\mathcal{N}^l(i)$ for some $n\geq 2$ with $y_j(0)>\widetilde{y}_j(0)$ for some $j\in\mathcal{N}^l(i)$ with $l\geq n$ then, from Theorem~\ref{th:order2} it follows that  $\stackrel{(n)}{y}_{\!\!j}\!\!(0)>\stackrel{(n)}{\widetilde{y}}_{\!\!j}\!\!(0)$ and, thus, Theorem~\ref{th:analytic} yields $y_j(t)> \widetilde{y}_j(t)$ for all $t\in\left(0,\epsilon\right)$ for some $\epsilon>0$ small enough. Otherwise, if $y_j(0)=\widetilde{y}_j(0)$, $\forall j$, then, both $\left(\mathbf{y}(t)\right)$ and $\left(\widetilde{\mathbf{y}}(t)\right)$ obey the same differential equation with a Lipschitz continuous vector field over the compact domain $B$. The solution is thus unique and $y_j(t)=\widetilde{y}_j(t),\,\,\,\forall\,\,{t\geq 0}$. Figure~\ref{fig:invariante2} depicts the main idea of the proof.
\begin{figure}[hbt]
\begin{center}
\includegraphics[scale= 0.3]{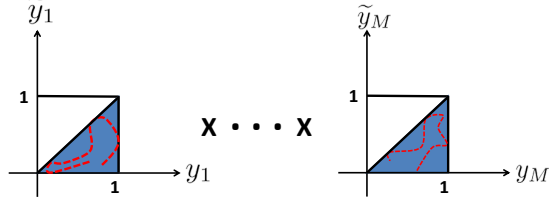}
\caption{Illustration of an orbit of the augmented system. The set $B$ is invariant, which implies that if $\mathbf{y}(0)\leq \widetilde{\mathbf{y}}(0)$ then $\mathbf{y}\left(t,\mathbf{y}(0)\right)\leq \widetilde{\mathbf{y}}\left(t,\widetilde{\mathbf{y}}(0)\right)$, $\forall\,\,{t\geq 0}$.}\label{fig:invariante2}
\end{center}
\end{figure}
\end{proof}

We state the main Theorem of the subsection on the ultimate condition on the microscopic parameter $\gamma$ that leads to the persistence of the virus in a $d$-regular multipartite network.
\begin{theorem}
\label{th:convergence}
Let the multipartite network be $d$-regular, i.e., each island is connected with $d$ other islands and let $\gamma$ be the inter-island transmission rate of the virus. If $\gamma>1$ and $\mathbf{y}(0)\neq 0$ then,
\begin{equation}
\left(y_i(t)\right)\longrightarrow \left(1-\frac{1}{d\gamma}\right),\,\,\,\forall\,\,\,{i=1,\ldots,M},\nonumber
\end{equation}
otherwise,
$\left(\mathbf{y}(t)\right)\longrightarrow \mathbf{0}$.
\hfill$\small \blacksquare$
\end{theorem}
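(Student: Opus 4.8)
The plan is to mirror the bipartite argument of Theorem~\ref{th:attractor}: exploit $d$-regularity to collapse the symmetric solutions of~\eqref{eq:multipartite} onto a single scalar logistic ODE, and then use the order-preserving property of the flow (Theorem~\ref{th:invariant}) to squeeze an arbitrary solution between two symmetric ones sharing the same limit. First I would check that the symmetric subspace $\left\{\mathbf{y}=z\mathbf{1}\right\}$ is invariant. If $y_i(0)=z_0$ for every $i$, then because every island has exactly $d$ neighbours, $\sum_{j\sim i}y_j=dz$ is the same for all $i$, so the ansatz $y_i(t)=z(t)$ solves~\eqref{eq:multipartite} as long as
\begin{equation}
\dot z(t)=z(t)\bigl(d\gamma\left(1-z(t)\right)-1\bigr),\nonumber
\end{equation}
and by uniqueness this is the only symmetric solution. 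This is the exact analogue of~\eqref{eq:simplificado} with $\gamma$ replaced by $d\gamma$.

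Next I would record the elementary phase-line analysis of this scalar equation on $[0,1]$. When $d\gamma>1$ the bracket vanishes at the interior point $z^{\star}=1-\tfrac{1}{d\gamma}\in(0,1)$, which is globally attracting for every $z_0\in(0,1]$ while the origin is repelling; when $d\gamma\le 1$, $z=0$ is the only equilibrium in $[0,1]$ and is globally attracting. Thus every symmetric solution started with $z_0>0$ converges to $1-\tfrac{1}{d\gamma}$ in the first regime and to $0$ in the second; for $d=1$ this recovers Theorem~\ref{th:attractor}, so the persistence threshold is $d\gamma>1$ and the announced limit $1-\tfrac{1}{d\gamma}$ is precisely $z^{\star}$.

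With the symmetric solutions in hand, for a strictly positive initial condition $\mathbf{y}(0)>\mathbf{0}$ I would set $\epsilon=\min_i y_i(0)>0$, so that $\epsilon\mathbf{1}\le\mathbf{y}(0)\le\mathbf{1}$, and invoke Theorem~\ref{th:invariant} twice to obtain
\begin{equation}
\mathbf{y}\left(t,\epsilon\mathbf{1}\right)\le\mathbf{y}\left(t,\mathbf{y}(0)\right)\le\mathbf{y}\left(t,\mathbf{1}\right),\qquad\forall\,t\ge 0.\nonumber
\end{equation}
Both bounding solutions are symmetric, hence converge to $\left(1-\tfrac{1}{d\gamma}\right)\mathbf{1}$ (respectively to $\mathbf{0}$) by the scalar analysis, and the squeeze forces the same limit for $\mathbf{y}(t,\mathbf{y}(0))$. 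The extinction case $d\gamma\le 1$ for arbitrary $\mathbf{y}(0)$ follows immediately from the upper bound $\mathbf{y}(t,\mathbf{y}(0))\le\mathbf{y}(t,\mathbf{1})\to\mathbf{0}$.

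The one genuinely delicate point, and the step I expect to be the main obstacle, is removing the strict-positivity assumption: a seed confined to a few islands must be shown to ignite every island in positive time before the sandwich applies. Here I would use Theorems~\ref{th:ordem} and~\ref{th:order2} together with the analyticity of the flow (Theorem~\ref{th:analytic}): if island $i$ is infected and $j$ lies $n$ hops away, then $\stackrel{(n)}{y}_{\!\!j}\!\!(0)>0$ while all lower derivatives vanish, so $y_j(t)>0$ for small $t>0$. Provided the supernetwork is connected, every island is reachable from the seed, whence $\mathbf{y}(t_0)>\mathbf{0}$ for some small $t_0>0$; restarting the argument at $t_0$ via the semigroup property of the flow and applying the previous paragraph completes the proof. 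The only extra hypothesis this requires is connectivity of the supernetwork---without it the conclusion can hold only component by component---so I would state it explicitly.
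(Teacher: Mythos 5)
Your proposal follows essentially the same route as the paper's own proof: collapse the symmetric initial data onto the scalar logistic equation $\dot z = d\gamma z(1-z)-z$ (the paper's equation~\eqref{eq:complete}), squeeze an arbitrary positive solution between two symmetric ones via Theorem~\ref{th:invariant}, and ignite islands with zero initial infection using Theorems~\ref{th:order2} and~\ref{th:analytic} before restarting the flow. Your two side remarks are both well taken and actually sharpen the paper: the persistence threshold should read $d\gamma>1$ (i.e., $\gamma>1/d$, as the concluding remarks state) rather than $\gamma>1$, and connectivity of the supernetwork must be assumed for the ignition step to reach every island.
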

\begin{proof}
Let $y_i(0)>0$ and $\widetilde{y}_j(0):=\min\left\{y_i(0)\,:\,i=1,\ldots,M\right\}$, $\forall\,{j=1,\ldots,M}$. From Theorem~\ref{th:invariant}, $\widetilde{\mathbf{y}}(t):=\mathbf{y} \left(t,\widetilde{\mathbf{y}}(0)\right)\leq \mathbf{y}(t,\mathbf{y}(0))=:\mathbf{y}(t)$, $\forall\,{t\geq 0}$. Moreover, the solution of
\begin{equation}
\label{eq:complete}
\dot{\widetilde{y}}(t){\bf 1}_{M}=\left[d\gamma\widetilde{y}(t) \left(1-\widetilde{y}(t)\right)-\widetilde{y}(t)\right]{\bf 1}_{M}
\end{equation}
with
$\widetilde{y}(0)=\min_{i=1,\ldots,M}\left\{\widetilde{y}_i(0)\right\}$, 
is a solution to (\ref{eq:multipartite}), where ${\bf 1}_M\in\mathbb{R}^{M}$ is the vector of ones and $d$ is the degree of the regular super-network. Now, note that equation~(\ref{eq:complete}) captures the dynamics of the complete network. Thus,
\begin{equation}
\widetilde{y}_i(t)\longrightarrow \left(1-\frac{1}{d\gamma}\right),\,\,\,\forall\,\,\,{i=1,\ldots,M}.\nonumber
\end{equation}
Therefore,
\begin{equation}
\lim_{t\rightarrow+\infty} y_j(t)\geq \lim_{t\rightarrow+\infty} \widetilde{y}_j(t)=\left(1-\frac{1}{d\gamma}\right),\,\,\,\forall\,\,j.\nonumber
\end{equation}

It is left to prove the case where $y_j(0)=0$ for some $j$. We assume the worst scenario where $y_i(0)>0$ and $y_j(0)=0$, $\forall\,\,{j\neq i}$. Let $j\in\mathcal{N}^{k}(i)$. Then, from Theorem~\ref{th:order2}, $\stackrel{(k)}{y}_{\!\!j}\!\!(0)>0$, $\stackrel{(m)}{y}_{\!\!j}\!\!(0)=0$, $\forall\,\,{m< k}$.  Theorem~\ref{th:analytic} yields $y_j(t)> \widetilde{y}_j(t)$ for all $t\in\left(0,\epsilon\right)$ for some $\epsilon>0$ small enough. Now, $y_j(T^{\star})>0$, $\forall\,\,j$ for some $T^{\star}\in\left(0,\epsilon\right)$. Then, from the previous case (where we assumed $y_i(0)>0$, $\forall\,\,i$) we obtain that $y_j(t)\longrightarrow\left(1-\frac{1}{\gamma d}\right),\,\,\,\forall\,\,j$.
\end{proof}

In this Subsection, we provided a qualitative analysis of the mean field dynamics of the vector process $\left(\mathbf{y}(t)\right)$ over a regular multipartite network with equal sized islands. We proved in Theorems~\ref{th:ordem} and~\ref{th:order2} that the population in island $j\in\mathcal{N}^{n}(i)$ affects the dynamics of $y_i$ via its $n$th derivative, which connects the geometry of solutions with the underlying super-topology of the network. Then, we proved that lower/upperbounds on the initial conditions are preserved by the flow of our dynamics, i.e., $\mathbf{y}(0)\geq \mathbf{y}_0\Rightarrow \mathbf{y}\left(t,\mathbf{y}(0)\right)\geq \mathbf{y}\left(t,\mathbf{y}_0\right)$ for all $t\geq 0$. Then, we can squeeze any solution by symmetric well-characterized solutions to conclude that the virus resilience equilibrium state is a global attractor if $\gamma>1$. Otherwise, if $\gamma\leq1$, then $0$ is a global attractor state. In the next Subsection, we extend the analysis for the bi-viral case.


\subsection{Regular Multipartite Network: Multi-virus}\label{subsec:multipartitenet-bi-virus}
In this Subsection, we study the limiting dynamics of the spread of multiple strains of virus in a regular multipartite network starting by a bi-virus epidemics. We assume the symmetric configuration in Definition~\ref{def:symmetricconfiguration} where all islands have the same size and, therefore, the inter-island infection rates for each virus~$x$ and~$y$ are the same across the network, i.e., $\gamma^x_{ij}\equiv \gamma^x$, $\gamma^y_{ij}\equiv \gamma^y$, $\forall i\sim j$. We are particularly interested in determining the conditions to obtain a survival of the fittest type of phenomenon. The mean field dynamics for a bi-viral epidemics in a symmetric regular multipartite network are obtained from~\eqref{eq:multi}-\eqref{eq:multi2} by specializing them to a symmetric regular supernetwork:
\begin{eqnarray}
\frac{d}{dt}y_i(t) \!\!\!& = & \!\!\!\left(\gamma^y\sum_{j\sim i}y_j(t)\right)\left(1-x_i(t)-y_i(t)\right)-y_i(t):= F^y_i\left(\mathbf{y}(t),\mathbf{x}(t)\right),\:\:i=1,\cdots,M \label{eq:dyn_sym1}\\
\frac{d}{dt}x_i(t) \!\!\!& = & \!\!\! \left(\gamma^x\sum_{j\sim i}x_j(t)\right)\left(1-x_i(t)-y_i(t)\right)-x_i(t):= F^x_i\left(\mathbf{y}(t),\mathbf{x}(t)\right), \!\:\:i=1,\cdots,M\label{eq:dyn_sym2}
\end{eqnarray}
where we defined the vector field $\mathbf{F}\,:\,\left[0,1\right]^{2M}\rightarrow \mathbb{R}^{2M}$ as $\mathbf{F}=\left(F^y_1,\ldots,F^y_M,F^x_1,\ldots,F^x_M\right)$. The next Theorem is in line with Theorem~\ref{th:ordem} for single-virus spread and states that if two isomorphic regular super-networks $B_1$ and $B_2$ are evenly infected in a $n$-neighborhood around a supernode $i$, then the derivatives of $y_i$ and $x_i$ for the network $B_1$ coincide with the corresponding derivatives of $\widetilde{y}_i$ and $\widetilde{x}_i$ for the network $B_2$ up to an order $n$.
\begin{theorem}
\label{th:equal}
Let $y_i(0)=\widetilde{y}_i(0)$ and $x_i(0)=\widetilde{x}_i(0)$. Let $\overline{\mathcal{N}}^{n}(i)=\bigcup_{\ell=1}^{n}\mathcal{N}^{\ell}(i)$. Then:
\begin{align*}
&\left\{\begin{array}{l} y_k(0)=\widetilde{y}_k(0) \\ x_k(0)=\widetilde{x}_k(0)\end{array}\right. \forall\,\,\,{k\in \overline{\mathcal{N}}^{n}(i)} \:\:\:\Longrightarrow\\
&  \stackrel{(\ell)}{y}_{\!\!i}\!\!(0)=\stackrel{(\ell)}{\widetilde{y}}_{\!\!i}\!\!(0)\,\,\,\mbox{ and }\stackrel{(\ell)}{x}_{\!\!i}\!\!(0)=\stackrel{(\ell)}{\widetilde{x}}_{\!\!i}\!\!(0),\,\,\,\forall\,\,\,{\ell\leq n}.
\hspace{1cm}
\blacksquare
\end{align*}

\begin{align*}
&\left\{\begin{array}{l} y_k(0)=\widetilde{y}_k(0) \\ x_k(0)=\widetilde{x}_k(0)\end{array}\right. \forall\,\,\,{k\in \overline{\mathcal{N}}^{n}(i)} \:\:\:\Longrightarrow\\
&  \stackrel{(\ell)}{y}_{\!\!i}\!\!(0)=\stackrel{(\ell)}{\widetilde{y}}_{\!\!i}\!\!(0)\,\,\,\mbox{ and }\stackrel{(\ell)}{x}_{\!\!i}\!\!(0)=\stackrel{(\ell)}{\widetilde{x}}_{\!\!i}\!\!(0),\,\,\,\forall\,\,\,{\ell\leq n}.
\hspace{2cm}\blacksquare
\end{align*}

\end{theorem}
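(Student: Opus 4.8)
The plan is to mirror the proof of Theorem~\ref{th:ordem}, but now for the \emph{coupled} system~\eqref{eq:dyn_sym1}-\eqref{eq:dyn_sym2} and tracking the \emph{agreement} of derivatives across the two isomorphic copies rather than their vanishing. The first step is a Leibniz-type recursion. Writing $\dot{y}_i=\gamma^y\bigl(\sum_{j\sim i}y_j\bigr)\bigl(1-x_i-y_i\bigr)-y_i$ and differentiating $n-1$ further times by the product rule gives
\begin{equation}
\stackrel{(n)}{y}_{\!\!i}\!\!(t)=\gamma^y\sum_{\ell=0}^{n-1}\binom{n-1}{\ell}\left(\sum_{j\sim i}\stackrel{(\ell)}{y}_{\!\!j}\!\!(t)\right)\frac{d^{\,n-1-\ell}}{dt^{\,n-1-\ell}}\bigl(1-x_i(t)-y_i(t)\bigr)-\stackrel{(n-1)}{y}_{\!\!i}\!\!(t),\nonumber
\end{equation}
together with the analogous identity for $\stackrel{(n)}{x}_{\!\!i}$ obtained by replacing $\gamma^y,y_j$ by $\gamma^x,x_j$. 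Here the factor $\frac{d^{\,k}}{dt^{\,k}}(1-x_i-y_i)$ equals $1-x_i-y_i$ when $k=0$ and $-\stackrel{(k)}{x}_{\!\!i}-\stackrel{(k)}{y}_{\!\!i}$ when $k\geq1$. The structural point I would stress is that $\stackrel{(n)}{y}_{\!\!i}$ (and $\stackrel{(n)}{x}_{\!\!i}$) is a polynomial in derivatives of order $\leq n-1$ of the \emph{neighbors} $y_j,\,j\sim i$, and of order $\leq n-1$ of $x_i,y_i$ at the \emph{same} island; the coupling term $1-x_i-y_i$ is purely local and does not reach out in the graph.

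The induction I would run is on the derivative order $m$, carrying a \emph{shrinking} neighborhood. Let $H(m)$ be the assertion that $\stackrel{(m)}{y}_{\!\!p}\!\!(0)=\stackrel{(m)}{\widetilde{y}}_{\!\!p}\!\!(0)$ and $\stackrel{(m)}{x}_{\!\!p}\!\!(0)=\stackrel{(m)}{\widetilde{x}}_{\!\!p}\!\!(0)$ for every island $p\in\{i\}\cup\overline{\mathcal{N}}^{\,n-m}(i)$, i.e.\ for every $p$ at most $n-m$ hops from $i$. The base case $H(0)$ is exactly the hypothesis of the Theorem (agreement of the values themselves on the $n$-hop ball). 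For the inductive step, fix $p\in\{i\}\cup\overline{\mathcal{N}}^{\,n-m-1}(i)$ and evaluate the recursion at $t=0$: the neighbors $j\sim p$ then lie in $\{i\}\cup\overline{\mathcal{N}}^{\,n-m}(i)$, so each required $\stackrel{(\ell)}{y}_{\!\!j}\!\!(0)$ with $\ell\leq m$ matches under $H(\ell)$ (since $n-m\leq n-\ell$ places $j$ inside the relevant ball), while the local quantities $\stackrel{(\ell)}{x}_{\!\!p},\stackrel{(\ell)}{y}_{\!\!p}$ with $\ell\leq m$ match under $H(\ell)$ applied at $p$. Hence every term of the recursion agrees between the two copies, and the same for the $x$-recursion, so $H(m+1)$ holds.

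Applying $H(\ell)$ at the island $p=i$ (which lies in $\{i\}\cup\overline{\mathcal{N}}^{\,n-\ell}(i)$) for each $\ell\leq n$ then yields $\stackrel{(\ell)}{y}_{\!\!i}\!\!(0)=\stackrel{(\ell)}{\widetilde{y}}_{\!\!i}\!\!(0)$ and $\stackrel{(\ell)}{x}_{\!\!i}\!\!(0)=\stackrel{(\ell)}{\widetilde{x}}_{\!\!i}\!\!(0)$ for all $\ell\leq n$, which is precisely the claim. The main obstacle is not the calculus but the bookkeeping: one must verify that a single differentiation expands the dependence by exactly one hop, that this expansion comes \emph{only} from the factor $\sum_{j\sim i}$ and never from the local coupling $1-x_i-y_i$, and hence that the reachable set after $m$ differentiations is exactly the $m$-hop ball, so the hypothesis on $\overline{\mathcal{N}}^{\,n}(i)$ is precisely what the induction consumes. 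Getting the index ranges right in this nested argument (derivative order versus hop distance, for the two coupled variables $x$ and $y$ simultaneously) is where care is needed; once the shrinking-ball invariant $H(m)$ is set up correctly, each step reduces to the routine Leibniz estimate already used in Theorem~\ref{th:ordem}.
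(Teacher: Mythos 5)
Your proposal is correct and follows essentially the same route as the paper: a Leibniz expansion of the higher derivatives showing that each differentiation reaches exactly one hop further through the neighbor sum while the coupling $1-x_i-y_i$ stays local, combined with induction. The only difference is organizational -- you make the shrinking-ball invariant $H(m)$ explicit, whereas the paper runs the induction on $n$ and applies the order-$(n-1)$ statement at each neighbor $j\sim i$, which amounts to the same bookkeeping.
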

\begin{proof}
We apply induction on $n$. For $n=1$,
\begin{align*}
\frac{d}{dt}y_i(t)&=\left(\gamma^y\sum_{j\sim i}y_j(t)\right)\left(1-x_i(t)-y_i(t)\right)-y_i(t)\\
\frac{d}{dt}\widetilde{y}_i(t)&=\left(\gamma^y\sum_{j\sim i} \widetilde{y}_j(t)\right)\left(1-\widetilde{x}_i(t)- \widetilde{y}_i(t)\right)-\widetilde{y}_i(t).
\end{align*}
Note that $\widetilde{y}_j(0)=y_j(0),\,\,\,\forall\,\,\,{j\in\mathcal{N}(i)}$ and $x_i(0)=\widetilde{x}_i(0)$. By inspection, $\dot{\widetilde{y}}_i(0)=\dot{y}_i(0)$ and (by assumption) $y_i(0)=\widetilde{y}_i(0)$. Also, $x_i(0)=\widetilde{x}_i(0)$.

Now, assume Theorem~\ref{th:equal} holds for~$n-1$. We establish that it holds for $n$. We have:
\begin{eqnarray}
\stackrel{(n)}{y}_{\!\!i}\!\!(0) & = & \underbrace{\left(\gamma^y\sum_{j\sim i} \stackrel{(n-1)}{y}_{\!\!\!\!\!\!j}\!\!(0)\right)\left(1-y_i(0)-x_i(0)\right)}_{\bf A}\label{eq:induc11}\\& &-\underbrace{\sum_{\ell=1}^{n-1}\left(\begin{array} {c} n-1\\\ell\end{array}\right)\stackrel{(\ell)}{y}_{\!\!i}\!\!(0) \left(\gamma^y\sum_{q\sim i}\stackrel{(n-1-\ell)}{y}_{\!\!\!\!\!\!\!\!q}\!\!(0)\right)}_{\bf B}
-\underbrace{\stackrel{(n-1)}{y}_{\!\!\!\!\!\!i}\!\!(0)}_{\bf C}\nonumber\\
& &-\underbrace{\sum_{\ell=1}^{n-1}\left(\begin{array} {c} n-1\\\ell\end{array}\right)\stackrel{(\ell)}{x}_{\!\!i}\!\!(0) \left(\gamma^y\sum_{j\sim i}\stackrel{(n-1-\ell)}{y}_{\!\!\!\!\!\!\!\!j}\!\!(0)\right)}_{\bf D},\nonumber\\
\stackrel{(n)}{\widetilde{y}}_{\!\!i}\!\!(0) & = & \underbrace{\left(\gamma^y\sum_{j\sim i}\stackrel{(n-1)}{\widetilde{y}}_{\!\!\!\!\!\!j}\!\!(0)\right)\left(1-\widetilde{y}_i(0)-\widetilde{x}_{i}(0)\right)}_{\bf A}\label{eq:induc21}\\& &-\underbrace{\sum_{\ell=1}^{n-1}\left(\begin{array} {c} n-1\\\ell\end{array}\right)\stackrel{(\ell)}{\widetilde{y}}_{\!\!i}\!\!(0) \left(\gamma^y\sum_{q\sim i}\stackrel{(n-1-\ell)}{\widetilde{y}}_{\!\!\!\!\!\!\!\!q}\!\!(0)\right)}_{\bf B}
-\underbrace{\stackrel{(n-1)}{\widetilde{y}}_{\!\!\!\!\!\!i}\!\!(0)}_{\bf C}\nonumber\\
& &-\underbrace{\sum_{\ell=1}^{n-1}\left(\begin{array} {c} n-1\\\ell\end{array}\right)\stackrel{(\ell)}{\widetilde{x}}_{\!\!i}\!\!(0) \left(\gamma^y\sum_{j\sim i} \stackrel{(n-1-\ell)}{\widetilde{y}}_{\!\!\!\!\!\!\!\!j}\!\!(0)\right)}_{\bf D}.\nonumber
\end{eqnarray}
Recall the assumption
\begin{eqnarray}
\left\{\begin{array}{l} y_k(0)=\widetilde{y}_k(0) \\ x_k(0)=\widetilde{x}_k(0)\end{array}\right. \forall\,\,\,{k\in \overline{\mathcal{N}}^{(n)}(i)}.
\end{eqnarray}
 By induction, $ \forall\,\,\,{j\in\mathcal{N}(i)}$, $\forall \ell=1,\ldots,n-1$,  ${\stackrel{(\ell)}{y}_{\!\!j}\!\!(0)}= \stackrel{(\ell)}{\widetilde{y}}_{\!\!j}\!\!(0)$, $\stackrel{(\ell)}{x}_{\!\!j}\!\!(0)= \stackrel{(\ell)}{\widetilde{x}}_{\!\!j}\!\!(0)$ and also $\stackrel{(\ell)}{y}_{\!\!i}\!\!(0)= {\stackrel{(\ell)}{\widetilde{y}}_{\!\!i}\!\!(0)},\,\,\,\forall\,\,\,{\ell\leq n-1}$. Therefore, by inspection, we conclude that the terms ${\bf A}$, ${\bf B}$, ${\bf C}$, and ${\bf D}$ for both equations~\eqref{eq:induc11} and~\eqref{eq:induc21} match together, and, thus, $\stackrel{(n)}{y}_{\!\!j}\!\!(0)= \stackrel{(n)}{\widetilde{y}}_{\!\!j}\!\!(0)$. By symmetry, we also have that $\stackrel{(n)}{x}_{\!\!j}\!\!(0)= \stackrel{(n)}{\widetilde{x}}_{\!\!j}\!\!(0)$, and we conclude the proof of the Theorem.
\end{proof}

The next Theorem states that, if two regular multipartite systems have the same degree of infection at each island, except at some island $j$, $n$-hops away from island $i$, then there will be a mismatch between the $n$th-order derivative of the fraction of infected nodes at island $i$, $y_i$ and $\widetilde{y}_i$, in the two networks.

\begin{theorem}
\label{thm:multiunbalance}
Let $y_i(0)=\widetilde{y}_i(0)$ and $x_i(0)=\widetilde{x}_i(0)$
\begin{eqnarray}
\label{eq:assums1}
\left\{\begin{array}{l} y_k(0)=\widetilde{y}_k(0) \\ x_k(0)=\widetilde{x}_k(0)\end{array}\right. \forall\,\,\,{k\in \overline{\mathcal{N}}^{n-1}(i)}.
\end{eqnarray}
Also, let
\begin{eqnarray}
\label{eq:assums2}
\left\{\begin{array}{l} y_k(0)\geq \widetilde{y}_k(0) \\ x_k(0)\leq\widetilde{x}_k(0)\end{array}\right. \forall\,\,\,{k\in \mathcal{N}}^{n}(i)\setminus \left\{m\right\}
\end{eqnarray}
with strict inequality $y_{m}(0)>\widetilde{y}_{m}(0)$ for some $m\in\mathcal{N}^{n}(i)$. Then, $\stackrel{(n)}{y}_{\!\!i}\!\!(0)>\stackrel{(n)}{\widetilde{y}}_{\!\!i}\!\!(0)$.
\hfill$\small \blacksquare$
\end{theorem}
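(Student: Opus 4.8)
The plan is to argue by induction on the hop-distance $n$, reusing the Leibniz-type identity for $\stackrel{(n)}{y}_i(0)$ already derived in the proof of Theorem~\ref{th:equal} (see the expansion~\eqref{eq:induc11}), namely that $\stackrel{(n)}{y}_i(0)$ equals the leading term $\mathbf{A}=\big(\gamma^y\sum_{j\sim i}\stackrel{(n-1)}{y}_j(0)\big)\,(1-y_i(0)-x_i(0))$ minus the three correction sums $\mathbf{B},\mathbf{C},\mathbf{D}$, together with the analogous identity for the tilde system. First I would isolate which of these four terms can possibly differ between the two copies, and reduce the whole comparison to the leading term.

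By hypothesis~\eqref{eq:assums1} the two systems agree on $\{i\}\cup\overline{\mathcal{N}}^{\,n-1}(i)$. Applying Theorem~\ref{th:equal} at island $i$ in order $n-1$ gives $\stackrel{(\ell)}{y}_i(0)=\stackrel{(\ell)}{\widetilde{y}}_i(0)$ and $\stackrel{(\ell)}{x}_i(0)=\stackrel{(\ell)}{\widetilde{x}}_i(0)$ for every $\ell\le n-1$; applying it at each neighbour $j\sim i$ in order $n-2$ (legitimate since $\overline{\mathcal{N}}^{\,n-2}(j)\subseteq\{i\}\cup\overline{\mathcal{N}}^{\,n-1}(i)$, where agreement holds) gives $\stackrel{(s)}{y}_j(0)=\stackrel{(s)}{\widetilde{y}}_j(0)$ and $\stackrel{(s)}{x}_j(0)=\stackrel{(s)}{\widetilde{x}}_j(0)$ for every $s\le n-2$. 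Inspecting $\mathbf{B},\mathbf{C},\mathbf{D}$, all of their factors are derivatives of $y_i,x_i$ of order $\le n-1$ or derivatives of $y_j,x_j$ ($j\sim i$) of order $\le n-2$; hence these three terms are identical in the two systems and cancel. Using also $y_i(0)=\widetilde{y}_i(0)$, $x_i(0)=\widetilde{x}_i(0)$, only the leading term survives and the comparison collapses to
\[
\stackrel{(n)}{y}_i(0)-\stackrel{(n)}{\widetilde{y}}_i(0)=\gamma^y\,(1-y_i(0)-x_i(0))\sum_{j\sim i}\Big(\stackrel{(n-1)}{y}_j(0)-\stackrel{(n-1)}{\widetilde{y}}_j(0)\Big).
\]
Since the prefactor is strictly positive in the interior case $y_i(0)+x_i(0)<1$ (the only case in which a strict conclusion can hold; the boundary $y_i(0)+x_i(0)=1$ is handled separately, exactly as Cases~1--2 in the proof of Theorem~\ref{th:invariant}), everything reduces to showing the neighbour sum is strictly positive.

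To finish I would feed the induction hypothesis into this sum. Pick a neighbour $j_0\sim i$ lying on a geodesic from $i$ to $m$, so that $m\in\mathcal{N}^{\,n-1}(j_0)$. A triangle-inequality check shows that, for any $j\sim i$, the set $\mathcal{N}^{\,n-1}(j)$ consists only of islands at distance $n-2$, $n-1$, or $n$ from $i$; on the first two distances \eqref{eq:assums1} gives equality, and on the last, \eqref{eq:assums2} gives $y_k(0)\ge\widetilde{y}_k(0)$ and $x_k(0)\le\widetilde{x}_k(0)$, with strict excess at $m$. Thus the order-$(n-1)$ hypotheses of the theorem hold at $j_0$ with $m$ as its distinguished strictly-larger node, and the induction hypothesis yields $\stackrel{(n-1)}{y}_{j_0}(0)>\stackrel{(n-1)}{\widetilde{y}}_{j_0}(0)$; for every other neighbour $j$ the same distance bookkeeping supplies only the weak inequalities on $\mathcal{N}^{\,n-1}(j)$. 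Each neighbour therefore contributes a non-negative amount and $j_0$ a strictly positive one, so the sum is positive and the theorem follows. The base case $n=1$ is the direct computation, where the sum equals $\sum_{j\sim i}(y_j(0)-\widetilde{y}_j(0))\ge y_m(0)-\widetilde{y}_m(0)>0$.

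The main obstacle is the ``every other neighbour contributes non-negatively'' step: the theorem as stated delivers only a strict comparison under a strict local excess, so I must prove in parallel, by the same induction, a non-strict companion --- if \eqref{eq:assums1} holds and $y_k(0)\ge\widetilde{y}_k(0)$, $x_k(0)\le\widetilde{x}_k(0)$ throughout $\mathcal{N}^{\,n}(i)$ with no strictness required, then $\stackrel{(n)}{y}_i(0)\ge\stackrel{(n)}{\widetilde{y}}_i(0)$. This companion runs through the identical reduction to the leading term and is exactly what furnishes the non-negative contributions of the side neighbours. The secondary delicate point is the distance bookkeeping that certifies the inductive hypotheses transfer cleanly from island $i$ to its neighbours --- in particular that $\overline{\mathcal{N}}^{\,n-2}(j)$ sits inside the agreement region of $i$, and that the distinguished node $m$ never accidentally lands where only equality is assumed --- together with the standing interior assumption $y_i(0)+x_i(0)<1$, without which strictness cannot survive.
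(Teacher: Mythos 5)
Your proof follows essentially the same route as the paper's: induction on $n$, the Leibniz-type expansion of $\stackrel{(n)}{y}_{\!\!i}\!\!(0)$ into the terms $\mathbf{A}$, $\mathbf{B}$, $\mathbf{C}$, $\mathbf{D}$ of~\eqref{eq:induc11}--\eqref{eq:induc21}, cancellation of $\mathbf{B}$, $\mathbf{C}$, $\mathbf{D}$ via Theorem~\ref{th:equal}, and the induction hypothesis applied to the leading term $\mathbf{A}$. You are in fact more careful than the paper on two points it glosses over --- the need for a non-strict companion inequality to certify that the neighbours of $i$ not lying on a geodesic to $m$ contribute non-negatively to the sum in $\mathbf{A}$, and the implicit interior assumption $y_i(0)+x_i(0)<1$ without which the prefactor of $\mathbf{A}$ vanishes and strictness fails (consistent with how the theorem is invoked only in Case~1 of Theorem~\ref{thm:invariant-bi-virus}) --- both of which you correctly identify and address.
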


\begin{proof}

We apply induction on the number of hops $n$.

\textit{Case~1}: For $n=1$, from the assumptions of the Theorem, we conclude $\dot{y}_i(0)>\dot{\widetilde{y}}_i(0)$ since
\begin{align*}
\dot{y}_i(0){}&=\left(\gamma^y\sum_{j\sim i} y_j(0)\right)\left(1-x_i(0)-y_i(0)\right)-y_i(0)
\\
{}&>\left(\gamma^y\sum_{j\sim i}\widetilde{y}_j(0)\right)\left(1-x_i(0)-y_i(0)\right)-y_i(0)
\\
{}&=\dot{\widetilde{y}}_i(0).
\end{align*}

\textit{Case~2}: \textbf{Induction step}. Assume that Theorem~\ref{thm:multiunbalance} holds for $n-1$ and let us prove that it holds for $n$. We consider successively the terms~${\bf A}$, ${\bf B}$, ${\bf C}$, and ${\bf D}$ in equations~\eqref{eq:induc11} and~\eqref{eq:induc21}.

${\bf A}$: Note that for some $j\in\mathcal{N}(i)$ we have that $m\in\mathcal{N}^{(n-1)}(j)$ where $m$ is defined in the assumptions of the Theorem. Thus, by the induction hypothesis, we have $\stackrel{(n-1)}{y}_{\!\!\!\!\!\!j}\!\!(0)> {\stackrel{(n-1)}{\widetilde{y}}_{\!\!\!\!\!\!j}\!\!(0)}$, and, hence, the term ${\bf A}$ in equation~\eqref{eq:induc11} is greater than its counterpart in equation~\eqref{eq:induc21}.

${\bf B}$ and ${\bf C}$: It should be now clear that these terms match together between equations~\eqref{eq:induc11} and~\eqref{eq:induc21}.

${\bf D}$: From Theorem~\ref{th:equal}, it follows that $\stackrel{(\ell)}{x}_{\!\!j}\!\!(0)=\stackrel{(\ell)}{\widetilde{x}}_{\!\!j}\!\!(0)$ for all $\ell=1,\ldots,n-1$ and thus, term ${\bf D}$ is the same for both equations.

Therefore, $\stackrel{(n)}{y}_{\!\!j}\!\!(0)>\stackrel{(n)}{\widetilde{y}}_{\!\!j}\!\!(0)$ and the Theorem is proved.
\end{proof}

The next Theorem is an extension of the monotonous property for a single virus spread established in Theorem~\ref{th:invariant} to the bi-viral epidemics case: appropriate bounds on the initial conditions are preserved by the flow of the dynamical system~(\ref{eq:dyn_sym1})-(\ref{eq:dyn_sym2}).
\begin{theorem}\label{thm:invariant-bi-virus}
If $\mathbf{y}(0)\leq \mathbf{y}_0\,,\mathbf{x}(0)\geq \mathbf{x}_0$ then, $\mathbf{y}(t,\mathbf{z}(0))\leq \mathbf{y}(t,\mathbf{z}_0)\,,\mathbf{x}(t,\mathbf{z}(0))\geq \mathbf{x}(t,\mathbf{z}_0)$, where we define
$\mathbf{z}(t):=\left(\mathbf{x}(t),\mathbf{y}(t)\right)$ and
$\mathbf{z}_0:=\left(\mathbf{x}_0,\mathbf{y}_0\right)$.
\hfill$\small \blacksquare$
\end{theorem}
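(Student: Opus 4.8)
The plan is to adapt the invariant-set argument of Theorem~\ref{th:invariant} to the competitive ordering of the two strains, exactly as Theorem~\ref{th:invariant2} adapts Theorem~\ref{th:invariant1} in the bipartite case. Writing $\left(\mathbf{x}(t),\mathbf{y}(t)\right)$ for the solution started at $\mathbf{z}(0)$ and $\left(\widetilde{\mathbf{x}}(t),\widetilde{\mathbf{y}}(t)\right)$ for the solution started at $\mathbf{z}_0$, I would consider the decoupled augmented field on $\widehat{D}\times\widehat{D}$ obtained by stacking two independent copies of~\eqref{eq:dyn_sym1}-\eqref{eq:dyn_sym2}, and establish forward invariance of
\begin{equation}
B=\left\{\left(\mathbf{x},\mathbf{y},\widetilde{\mathbf{x}},\widetilde{\mathbf{y}}\right)\in\widehat{D}\times\widehat{D}\,:\,y_i\leq\widetilde{y}_i\mbox{ and }x_i\geq\widetilde{x}_i,\,\,i=1,\ldots,M\right\}.\nonumber
\end{equation}
Since the hypotheses state precisely that the stacked initial condition lies in $B$, invariance of $B$ delivers the conclusion. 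If $\mathbf{z}(0)=\mathbf{z}_0$ the two solutions coincide by uniqueness, so I may assume a strict initial gap.

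To show $B$ is invariant I would inspect the field on $\partial B$ by a case analysis mirroring Cases~1 and~2 in the proof of Theorem~\ref{th:invariant2}. On a contact face $\{y_i=\widetilde{y}_i\}$ with $x_i+y_i<1$, using $x_i\geq\widetilde{x}_i$, $y_j\leq\widetilde{y}_j$ for $j\sim i$, and the nonnegativity on $\widehat{D}$ of the factor $1-x_i-y_i$, the two right-hand sides of~\eqref{eq:dyn_sym1} compare as $\dot{y}_i\leq\dot{\widetilde{y}}_i$, so the smaller copy grows no faster and the flow does not cross this face; symmetrically, on $\{x_i=\widetilde{x}_i\}$ the signs of $x_j\geq\widetilde{x}_j$ and $y_i\leq\widetilde{y}_i$ give $\dot{x}_i\geq\dot{\widetilde{x}}_i$ from~\eqref{eq:dyn_sym2}. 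On a simplex face $\{x_i+y_i=1\}$ the combined derivative is $\dot{x}_i+\dot{y}_i=-(x_i+y_i)<0$, pushing the trajectory back into $\widehat{D}$; this is the bi-virus analogue of Case~2 and keeps each copy inside its simplex.

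The hard part is the degenerate contact where the first-order comparison above is an equality, i.e.\ when at a contact face the neighboring sums and the competing-strain values also coincide so that $\dot{y}_i=\dot{\widetilde{y}}_i$. This is precisely where the higher-derivative lemmas enter. If the two configurations agree on the whole neighborhood $\overline{\mathcal{N}}^{n-1}(i)$ but differ strictly at some island $m\in\mathcal{N}^{n}(i)$, then Theorem~\ref{th:equal} forces all derivatives through order $n-1$ to match, while Theorem~\ref{thm:multiunbalance} yields the strict domination $\stackrel{(n)}{y}_{\!\!i}\!\!(0)>\stackrel{(n)}{\widetilde{y}}_{\!\!i}\!\!(0)$ (and, symmetrically, the reversed strict inequality for $x$) at the contact instant. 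Analyticity of the field, invoked through Theorem~\ref{th:analytic} as in the proofs of Theorems~\ref{th:invariant} and~\ref{th:convergence}, then propagates this into a strict ordering $y_i<\widetilde{y}_i$, $x_i>\widetilde{x}_i$ on a small interval $(0,\epsilon)$, so the trajectory re-enters the interior of $B$ and cannot escape.

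Finally, the boundary initial cases $x_i(0)=0$ or $y_i(0)=0$ for some $i$ are handled exactly as in the last paragraph of the proof of Theorem~\ref{th:invariant2}: either a strain is globally absent and remains an equilibrium component, or it is strictly positive somewhere, in which case positivity spreads to every island on a short interval and the preceding argument applies from a fully interior configuration. I expect the only real subtlety to be the simultaneous bookkeeping of the $y$-order and the reversed $x$-order through the order-$n$ derivative argument; once the competitive monotonicity is organized as invariance of $B$, the single-strain machinery of Theorems~\ref{th:equal}, \ref{thm:multiunbalance}, and~\ref{th:analytic} supplies everything required.
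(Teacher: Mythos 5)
Your proposal is correct and follows essentially the same route as the paper: the paper phrases it as a contradiction at the first exit time $T$ rather than as forward invariance of the stacked set $B$, but the substance is identical --- the same first-order comparison on contact faces, the same reduction of the simplex face $\{x_i+y_i=1\}$ via the negative combined derivative, and the same use of Theorems~\ref{th:equal}, \ref{thm:multiunbalance}, and~\ref{th:analytic} to resolve degenerate contacts where the nearest strictly ordered island is $n$ hops away. No gap relative to the paper's own argument.
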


\begin{proof}
Assume that $\mathbf{y}(0)\neq \mathbf{y}_0$ or $\mathbf{x}(0)\neq \mathbf{x}_0$, otherwise, from uniqueness, the solutions are equal. Define
\begin{equation}
T=\inf\left\{t\,:\,t\geq 0,\,  \mathbf{y}\left(t,\mathbf{z}(0)\right)\nleq \mathbf{y}\left(t,\mathbf{z}_0\right)\mbox{  or  } \mathbf{x}\left(t,\mathbf{z}(0)\right)\ngeq \mathbf{x}\left(t,\mathbf{z}_0\right)\right\}.\nonumber
\end{equation}
Assume that $T<\infty$. Then, for $i,j\in\left\{1,\ldots,M\right\}$ with $i\neq j$, we have one of the following:
\begin{align}
\label{eq:first-multipartite}
y_i\left(T,\mathbf{z}(0)\right){}&=y_i\left(T,\mathbf{z}_0\right)\mbox{  and  }y_j\left(T,\mathbf{z}(0)\right)<y_j\left(T,\mathbf{z}_0\right)
\\
\label{eq:firstb-multipartite}
x_i\left(T,\mathbf{z}(0)\right){}&=x_i\left(T,\mathbf{z}_0\right)\mbox{  and  }x_j\left(T,\mathbf{z}(0)\right)>x_j\left(T,\mathbf{z}_0\right).
\end{align}
\begin{figure} [hbt]
\begin{center}
\includegraphics[scale= 0.3]{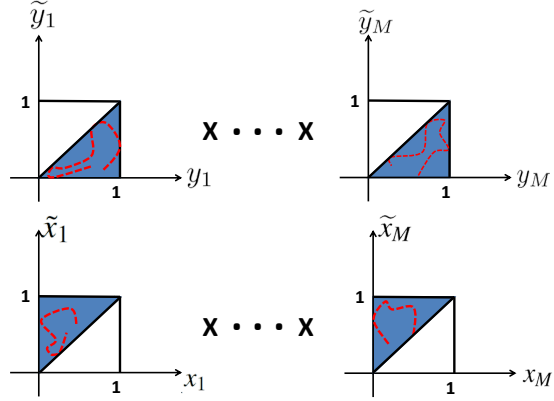}
\caption{Representation of the phase space of the state of the augmented dynamical system $\left(\!\widetilde{x}_1\!(t),x_1\!(t),\widetilde{y}_1\!(t),y_1\!(t),\!\ldots, \widetilde{x}_M\!(t),x_M\!(t),\widetilde{y}_M\!(t),y_M\!(t)\!\right)$, i.e., the system capturing the evolution of the two virus in isomorphic regular multipartite networks with perhaps different initial degrees of infection. The red (dashed) curve captures the idea that a solution cannot escape the blue (dark) region in finite time.}\label{fig:invariants}
\end{center}
\end{figure}
Wlog, choose configuration~\eqref{eq:first-multipartite} and assume $j\in\mathcal{N}^n(i)$ is the closest island to $i$ where we have strict inequality $y_j(T,\mathbf{z}(0))<y_j(T,\mathbf{z}_0)$.

\textit{Case~1}: If $x_1\left(T, \mathbf{z}(0)\right)+y_1\left(T,\mathbf{z}(0)\right)<1$, then, from equations~\eqref{eq:dyn_sym1} and~\eqref{eq:dyn_sym2}, and from Theorem~\ref{thm:multiunbalance} we have

\begin{equation}
\stackrel{(n)}{y}_{\!\!i}\!\!(T,\mathbf{z}(0))<\stackrel{(n)}{\widetilde{y}}_{\!\!i}\!\!(T,\mathbf{z}_0).\nonumber
\end{equation}

Therefore, from Theorem~\ref{th:analytic} in the Appendix, we have that
\begin{equation}
\exists\,\,\epsilon_1>0:\,y_1\left(t,\mathbf{z}(0)\right)<y_1\left(t,\mathbf{z}_0\right),\,\,\,\:\forall\,\,\, T<t<T+\epsilon_1.\nonumber
\end{equation}
Also,
\begin{equation}
y_j\left(T,\mathbf{z}(0)\right)<y_j\left(T,\mathbf{z}_0\right)\Rightarrow \exists\,\,\epsilon_2>0:\,y_j\left(t,\mathbf{z}(0)\right)<y_j\left(t,\mathbf{z}_0\right),\,\,\,\:\forall\,\,\, T<t<T+\epsilon_2.\nonumber
\end{equation}
Thus,
\begin{equation}
y\left(t,\mathbf{z}(0)\right)\leq y\left(t,\mathbf{z}_0\right),\,\,\,\forall\,\,\,\: T<t<T+\epsilon\nonumber
\end{equation}
with $\epsilon=\epsilon_1\wedge \epsilon_2$. Similarly, we have that
\begin{equation}
x\left(t,\mathbf{z}(0)\right)\geq x\left(t,\mathbf{z}_0\right),\,\,\,\:\forall\,\,\, T<t<T+\alpha\nonumber
\end{equation}
for some $\alpha>0$.

\textit{Case~2}: If $x_1\left(T,\mathbf{z}(0)\right)+y_1\left(T,\mathbf{z}(0)\right)=1$, then,
\begin{equation}
\dot{x}_1\left(T,\mathbf{z}(0)\right)+\dot{y}_1\left(T,\mathbf{z}(0)\right)=-\left(x_1(T)+y_1(T)\right)<0\Rightarrow\exists\,\, \epsilon>0:\,x_1\left(t,\mathbf{z}(0)\right)+y_1\left(t,\mathbf{z}(0)\right)<1,\nonumber
\end{equation}
for all $t\in\left(T,T+\epsilon\right)$. In any case, we reach a contradiction on the definition of $T$, and the Theorem is proved.
\end{proof}

Now, through similar arguments as in the previous Subsections, one can bound any solution by symmetrically initialized solutions, leading to the next Theorem.

\begin{theorem}\label{th:converg}
Let $\left(\mathbf{x}(t), \mathbf{y}(t)\right)$ be solution of the following bi-viral limiting dynamics over a regular multipartite network:
\begin{eqnarray}
\frac{d}{dt}y_i(t) & = & \left(\gamma^{y}\sum_{j\sim i}y_j(t)\right)\left(1-x_i(t)-y_i(t)\right)-y_i(t)\label{eq:dynam}\\
\frac{d}{dt}x_i(t) & = & \left(\gamma^{x}\sum_{j\sim i}x_j(t)\right)\left(1-x_i(t)-y_i(t)\right)-x_i(t).\label{eq:dynam2}
\end{eqnarray}
Let $\gamma^{x}>\gamma^{y}$. If $\gamma^{x}>\frac{1}{d}$ then,
\begin{eqnarray}
\mathbf{x}(t) & \longrightarrow & \left(1-\frac{1}{\gamma^x d}\right)\mathbf{1}_M\nonumber \\
\mathbf{y}(t) & \longrightarrow & \mathbf{0}\nonumber
\end{eqnarray}
otherwise
$\mathbf{x}(t) \longrightarrow \mathbf{0}$ 
and
$\mathbf{y}(t) \longrightarrow \mathbf{0}$.
\hfill$\small \blacksquare$
\end{theorem}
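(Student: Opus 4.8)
The plan is to reuse the sandwich strategy from the bipartite bi-viral case (Subsection~\ref{subsec:bipartitenet-bi-virus}), now squeezing the true solution between two \emph{symmetrically initialized} solutions whose dynamics collapse onto the already-understood complete-network bi-viral system. First I would assume $\mathbf{x}(0)\neq\mathbf{0}$ (if $\mathbf{x}(0)=\mathbf{0}$ then $x\equiv 0$ and the system degenerates to the single-virus dynamics of Subsection~\ref{subsec:multipartitenet-singlevirus}), postponing the treatment of vanishing components. Given a strictly positive start, I would build two reference solutions: $\widetilde{\mathbf{z}}$ with $\widetilde{x}_i(0)=\max_j x_j(0)$ and $\widetilde{y}_i(0)=\min_j y_j(0)$ for every island $i$, and $\overline{\mathbf{z}}$ with $\overline{x}_i(0)=\min_j x_j(0)$ and $\overline{y}_i(0)=\max_j y_j(0)$. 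Both reference data remain admissible, since $\max_j x_j(0)+\min_j y_j(0)\le 1$ and $\min_j x_j(0)+\max_j y_j(0)\le 1$.

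Because every island is evenly infected in $\widetilde{\mathbf{z}}$ and $\overline{\mathbf{z}}$ and the network is $d$-regular, $\sum_{j\sim i}\widetilde{x}_j=d\widetilde{x}$ (and likewise for the other quantities), so both reference solutions stay island-independent for all time and each reduces to the scalar pair
\begin{align*}
\dot{\widetilde{x}}&=d\gamma^x\,\widetilde{x}\,(1-\widetilde{x}-\widetilde{y})-\widetilde{x},\\
\dot{\widetilde{y}}&=d\gamma^y\,\widetilde{y}\,(1-\widetilde{x}-\widetilde{y})-\widetilde{y},
\end{align*}
which is exactly the complete-network bi-viral system of~\cite{paper:CDC} with effective rates $d\gamma^x,d\gamma^y$. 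Since $\gamma^x>\gamma^y$, the fittest-wins result for that system gives, from any positive start, $\widetilde{x}(t),\overline{x}(t)\to 1-\tfrac{1}{d\gamma^x}$ and $\widetilde{y}(t),\overline{y}(t)\to 0$ when $d\gamma^x>1$, and all four quantities $\to 0$ when $d\gamma^x\leq 1$; crucially, the limit is independent of the (positive) initial data, so the two outer solutions share a common limit.

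Next I would invoke the monotonicity Theorem~\ref{thm:invariant-bi-virus}. Comparing the true solution against $\overline{\mathbf{z}}$ (pointwise smaller $x$ and larger $y$ initial data) yields $x_i(t)\geq\overline{x}_i(t)$ and $y_i(t)\leq\overline{y}_i(t)$; comparing $\widetilde{\mathbf{z}}$ (pointwise larger $x$, smaller $y$) against the true solution yields $\widetilde{x}_i(t)\geq x_i(t)$ and $\widetilde{y}_i(t)\leq y_i(t)$. Chaining these gives $\overline{x}_i(t)\leq x_i(t)\leq\widetilde{x}_i(t)$ and $\widetilde{y}_i(t)\leq y_i(t)\leq\overline{y}_i(t)$ for every $i$ and all $t\geq 0$. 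Letting $t\to\infty$ and using that each pair of outer bounds shares the same limit, the squeeze delivers $x_i(t)\to 1-\tfrac{1}{d\gamma^x}$ and $y_i(t)\to 0$ in the supercritical case $\gamma^x>\tfrac1d$, and $x_i(t),y_i(t)\to 0$ otherwise.

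The main obstacle is the boundary case where some components of $\mathbf{x}(0)$ vanish: then $\min_j x_j(0)=0$ makes the lower reference $\overline{x}$ start at a fixed point, so that bound degenerates and the squeeze for $x$ fails. I would resolve this exactly as in the proof of Theorem~\ref{th:convergence}: assuming only $\mathbf{x}(0)\neq\mathbf{0}$, the derivative-propagation result (Theorem~\ref{thm:multiunbalance}, the bi-viral analogue of Theorem~\ref{th:order2}) together with analyticity of the flow (Theorem~\ref{th:analytic}) forces every $x_j(t)>0$ on a small interval $(0,\epsilon)$; the only delicate point is an island momentarily saturated by $y$, where $x_j+y_j=1$ forces $\dot{y}_j=-y_j<0$ and immediately reopens room for $x$ to grow. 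Choosing any $T^\star\in(0,\epsilon)$ and restarting from the now strictly positive state $\mathbf{x}(T^\star)$ recovers a nondegenerate lower bound and completes the argument, while the case $\mathbf{x}(0)=\mathbf{0}$ reduces to the single-virus Theorem~\ref{th:convergence} applied to $y$.
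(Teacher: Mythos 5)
Your proposal is correct and follows essentially the same route as the paper: reduce symmetrically initialized solutions to the complete-network bi-viral system of~\cite{paper:CDC}, then squeeze an arbitrary solution between a max/min-initialized upper and lower reference via the monotonicity of Theorem~\ref{thm:invariant-bi-virus}. In fact you spell out the sandwich and the degenerate boundary case ($\min_j x_j(0)=0$) more explicitly than the paper, which only sketches the general-initial-condition step.
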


\begin{proof}
We first consider the solutions symmetrically initialized, $\mathbf{y}(0)=\alpha_1\mathbf{1}_M$ and $\mathbf{x}(0)=\alpha_2\mathbf{1}_M$, which turn out to be also solutions for the reduced system:
\begin{eqnarray}
\frac{d}{dt}y(t)\mathbf{1}_M & = & \left(\gamma^{y}d y(t)\left(1-x(t)-y(t)\right)-y(t)\right)\mathbf{1}_M\nonumber\\
\frac{d}{dt}x(t)\mathbf{1}_M & = & \left(\gamma^{x}d x(t)\left(1-x(t)-y(t)\right)-x(t)\right)\mathbf{1}_M.\nonumber
\end{eqnarray}
The equations also describe the dynamics of diffusion of two virus in a complete network explored in Reference~\cite{paper:CDC}. Thus, if $\gamma^{x}>\gamma^{y}$ with $\gamma^{x}>\frac{1}{d}$, we have
\begin{eqnarray}
\mathbf{x}(t) & \longrightarrow & \left(1-\frac{1}{\gamma^x d}\right)\mathbf{1}_M\nonumber \\
\mathbf{y}(t) & \longrightarrow & \mathbf{0}\nonumber
\end{eqnarray}
otherwise, if $\gamma^{x}\leq\frac{1}{d}$,
$\mathbf{x}(t) \longrightarrow \mathbf{0}$ 
and
$\mathbf{y}(t)  \longrightarrow \mathbf{0}$.

For general solutions other than symmetrically initialized, a bound argument squeezes any solution by these simpler ones, resorting to Theorem~\ref{thm:invariant-bi-virus} similarly to as done for the bipartite network with two virus spread.
\end{proof}

The next Theorem finally states that among many distinct strains of virus in a symmetric regular multipartite network, only the strongest strain eventually survives and all the remaining weaker ones die out. The ODE~(\ref{eq:dynammultivi}) is the corresponding meanfield dynamics obtained from the peer-to-peer rules of infection in the limit of large networks,\cite{augusto_moura_emergent}. In what follows, we refer to $y_{ik}(t)$ as the fraction of $k$-infected nodes at island $i$ at time $t\geq 0$.

\begin{theorem}
Let $\left(\mathbf{y}(t)\right)$ be solution of the following multi-virus limiting dynamics over a symmetric $d$-regular multipartite network:
\begin{equation}
\frac{d}{dt}y_{ik}(t)  = \left(\gamma^{k}\sum_{j\sim i}y_{jk}(t)\right)\left(1-\sum_{\ell=1}^{K}y_{i\ell}(t)\right)-y_{ik}(t).\label{eq:dynammultivi}
\end{equation}
Let $k^{\star}$ be the most virulent strain, i.e., $\gamma^{k^{\star}}>\gamma^{k}$ for all $k\neq k^{\star}$. Define $\left(\mathbf{y}_k(t)\right)=\left(y_{1k}(t),\ldots,y_{Mk}(t)\right)$, as collecting the fraction of $k$-infected nodes across islands. If $\gamma^{k^{\star}}>\frac{1}{d}$ then, for all $k\neq k^{\star}$
\begin{eqnarray}
\mathbf{y}_{k^{\star}}(t) & \longrightarrow & \left(1-\frac{1}{\gamma^{k^{\star}} d}\right)\mathbf{1}_M\nonumber \\
\mathbf{y}_{k}(t) & \longrightarrow & \mathbf{0}\nonumber
\end{eqnarray}
otherwise, if $\gamma^{k^{\star}}\leq\frac{1}{d}$, then, 
$\mathbf{y}(t) \longrightarrow \mathbf{0}$ 
\hfill$\small \blacksquare$
\end{theorem}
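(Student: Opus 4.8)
The plan is to mirror the two-strain argument of Theorem~\ref{th:converg}, treating the $K-1$ weaker strains collectively as the competition against $k^{\star}$, and to squeeze each coordinate of a general solution between two symmetrically initialized solutions whose asymptotics reduce to competitive dynamics on a complete network. The invariant domain is now the product over islands of the $K$-simplices $\{\mathbf{v}\in\mathbb{R}^K_{+}:\sum_{\ell} v_\ell\le 1\}$; on this compact set the vector field in~\eqref{eq:dynammultivi} is Lipschitz and real-analytic, so solutions exist, are unique, remain in the domain for all $t\ge 0$, and are analytic in $t$. These are exactly the properties the derivative-chasing comparison arguments require (via Theorem~\ref{th:analytic}).

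The key ingredient is a $K$-strain competitive comparison principle generalizing Theorem~\ref{thm:invariant-bi-virus}: for two copies $(\mathbf{y}_k)$ and $(\widetilde{\mathbf{y}}_k)$ of the system with $\mathbf{y}_{k_0}(0)\le\widetilde{\mathbf{y}}_{k_0}(0)$ for a distinguished strain $k_0$ and $\mathbf{y}_k(0)\ge\widetilde{\mathbf{y}}_k(0)$ for every $k\neq k_0$, this ordering is preserved by the flow. I would prove it as in Theorem~\ref{thm:invariant-bi-virus}: let $T$ be the first time some coordinate achieves equality and is about to cross, assume $T<\infty$, and at $T$ invoke the multi-virus analog of Theorem~\ref{thm:multiunbalance} to identify the lowest-order derivative that distinguishes the two copies at the crossing island through an infected island $n$ hops away; this forces a strict derivative inequality of the correct sign, and analyticity then forbids the crossing on a right-neighborhood of $T$, contradicting the definition of $T$. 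Because the factor $(1-\sum_\ell y_{i\ell})$ couples all strains, the saturated case must be handled as Case~2 there, using $\frac{d}{dt}\sum_\ell y_{i\ell}=-\sum_\ell y_{i\ell}<0$ on the face $\sum_\ell y_{i\ell}=1$.

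Next comes the symmetric reduction: for symmetric data $\mathbf{y}_k(0)=\alpha_k\mathbf{1}_M$, uniqueness and $d$-regularity force $y_{ik}(t)\equiv y_k(t)$, and the $y_k$ solve $\dot{y}_k=\gamma^k d\, y_k(1-\sum_\ell y_\ell)-y_k$, the complete-network $K$-virus competition of~\cite{paper:CDC}. Competitive exclusion there gives, whenever $\gamma^{k^{\star}}d>1$ and the $k^{\star}$-coordinate is positive, $y_{k^{\star}}\to 1-\frac{1}{\gamma^{k^{\star}}d}$ and $y_k\to 0$ for $k\ne k^{\star}$, and all $y_k\to 0$ when $\gamma^{k^{\star}}d\le 1$. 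To squeeze a general solution with $\min_i y_{ik^{\star}}(0)>0$, I bound $y_{ik^{\star}}(t)$ below by the symmetric solution started at $\big(\min_i y_{ik^{\star}}(0)\big)\mathbf{1}_M$ for $k^{\star}$ and $\big(\max_i y_{ik}(0)\big)\mathbf{1}_M$ for each $k\ne k^{\star}$, and above by the symmetric solution with the min/max roles swapped; both limit to $1-\frac{1}{\gamma^{k^{\star}}d}$ since $k^{\star}$ stays fittest and positive. Each loser $y_{ik}(t)$ is bounded above by the symmetric solution in which strain $k$ is maxed and all others, including $k^{\star}$, minned; as $\gamma^{k^{\star}}$ remains largest and positive, competitive exclusion drives that coordinate to $0$.

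Finally, when $\min_i y_{ik^{\star}}(0)=0$ the lower comparison degenerates, so I first show positivity propagates: if $y_{ik^{\star}}(0)>0$ at some island, the multi-virus analog of Theorem~\ref{th:order2} makes the appropriate higher-order derivative of $y_{jk^{\star}}$ strictly positive at every island $j$ along the geodesic from the infected one, whence analyticity gives $y_{jk^{\star}}(t)>0$ on $(0,\epsilon)$ for all $j$; restarting at some $T^{\star}\in(0,\epsilon)$ reduces to the strictly-positive case. I expect the main obstacle to be the $K$-strain comparison principle itself: with more than two strains the sign bookkeeping at the first crossing time is delicate, since simultaneously raising $k^{\star}$ and lowering all $K-1$ weaker strains must be shown to produce a derivative mismatch of the correct sign for \emph{every} coordinate that could cross first, which needs the full $n$th-derivative perturbation analysis generalizing Theorems~\ref{th:equal} and~\ref{thm:multiunbalance} rather than the first-derivative test that suffices on a complete network.
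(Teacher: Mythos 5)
Your overall architecture (an order relation preserved by the flow, reduction of symmetrically initialized solutions to complete-network dynamics, and a squeeze) is the same as the paper's, but the key reduction is genuinely different. The paper never develops a full $K$-strain comparison principle nor invokes $K$-strain competitive exclusion on the complete network. Instead it \emph{aggregates} the losers: it compares the sum $\sum_{k\neq k^{\star}}\mathbf{y}_k$ against the single strain $\hat{k}$ (the second most virulent) in an auxiliary \emph{two}-strain system, argues that the pair of inequalities $\mathbf{y}_{k^{\star}}\geq\widetilde{\mathbf{y}}_{k^{\star}}$ and $\sum_{k\neq k^{\star}}\mathbf{y}_{k}\leq\widetilde{\mathbf{y}}_{\hat{k}}$ is preserved (this works because $\gamma^{k}\leq\gamma^{\hat{k}}$ for every loser, so the aggregate's growth is dominated by the $\hat{k}$-dynamics), and then quotes Theorem~\ref{th:converg} directly for the limits of $\left(\widetilde{\mathbf{y}}_{k^{\star}},\widetilde{\mathbf{y}}_{\hat{k}}\right)$; a reversed bound built from the weakest strain $w$ closes the squeeze. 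This buys two things your route must pay for separately. First, you need competitive exclusion for the $K$-strain complete-network ODE $\dot{y}_k=\gamma^k d\,y_k\left(1-\sum_{\ell}y_\ell\right)-y_k$, which is not what \cite{paper:CDC} (bi-viral) supplies; it is provable, e.g.\ via $\frac{d}{dt}\log\left(y_k/y_{k^{\star}}\right)=\left(\gamma^k-\gamma^{k^{\star}}\right)d\left(1-\sum_{\ell}y_\ell\right)\leq 0$, but it is an extra lemma. Second, and more substantively, your symmetric lower bound for the winner---$k^{\star}$ initialized at $\min_i y_{ik^{\star}}(0)$ with \emph{every} loser at $\max_i y_{ik}(0)$---need not lie in the simplex when $K\geq 3$, since the per-strain maxima are attained at different islands and can sum past $1$; that comparison point then sits outside the invariant domain where the reduced dynamics (and the factor $1-\sum_\ell y_\ell$) are meaningful. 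The paper's aggregation sidesteps this because only two quantities per island are min/maxed and their per-island sum is at most $1$. Your loser upper bounds (one strain maxed, the rest minned, which does stay in the simplex since $\max_i y_{ik}+\sum_{\ell\neq k}\min_i y_{i\ell}\leq\max_i\sum_{\ell}y_{i\ell}\leq 1$) and your handling of degenerate initial conditions via the higher-order derivative propagation are fine; to repair the winner's lower bound you would either aggregate the losers as the paper does or otherwise cap the comparison initial condition inside the simplex.
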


\begin{proof}
First, it is easy to check that if $\left(\mathbf{y}(t)\right)$ is solution of the ODE~(\ref{eq:dynammultivi}) and if $\mathbf{y}_k(0)=\mathbf{0}$ for some $k\in \left\{1,\ldots,K\right\}$ then, $\mathbf{y}_k(t)=\mathbf{0}$ for all time $t\geq 0$. In words, if a virus strain is not present in the network at time $\overline{t}\geq 0$ then, it will remain extinct for all future times $t\geq \overline{t}$. Now, let
\begin{eqnarray}
\left\{\begin{array}{lll} \mathbf{y}_{k^{\star}}(0) & \geq & \mathbf{\widetilde{y}}_{k^{\star}}(0)\\ \sum_{k\neq k^{\star}}\mathbf{y}_{k}(0) & \leq & \mathbf{\widetilde{y}}_{i\hat{k}}(0)\end{array}\right..\label{eq:multiupp}
\end{eqnarray}
The inequalities above are preserved by the dynamics
\begin{eqnarray}
\left\{\begin{array}{lll} \mathbf{y}_{k^{\star}}(t) & \geq &  \mathbf{\widetilde{y}}_{k^{\star}}(t) \\ \sum_{k\neq k^{\star}} \mathbf{y}_{k}(t) & \leq & \mathbf{\widetilde{y}}_{\hat{k}}(t)\end{array}\right.,\label{eq:multiupp2}
\end{eqnarray}
for all $t\geq 0$, where $\left(\mathbf{y}(t)\right)$ and $\left(\mathbf{\widetilde{y}}(t)\right)$ are solutions of~(\ref{eq:dynammultivi}) with initial conditions $\mathbf{y}(0)$ and $\mathbf{\widetilde{y}}(0)$ obeying inequalities~(\ref{eq:multiupp}). We can establish this fact through similar invariance type of arguments as, for instance, in the proof of Theorem~\ref{thm:invariant-bi-virus}: let $T$ be the hitting time to invalidate any of the inequalities in equation~(\ref{eq:multiupp2}), assume that $T<\infty$ and reach a contradiction (we do not repeat the steps here). Let $\hat{k}$ be the second strongest strain, i.e., $\gamma^{k}<\gamma^{\hat{k}}<\gamma^{k^{\star}}$ for all $k\neq \hat{k}$ and $k\neq k^{\star}$. For any initial condition $\mathbf{y}(0)=\mathbf{y}_0\in \left[0,1\right]^{M\times K}$, we can choose $\mathbf{\widetilde{y}}(0)\in \left[0,1\right]^{M\times K}$, with $\mathbf{\widetilde{y}}_{k}(0)\neq0$, if $k=k^{\star}$ or $k=\hat{k}$ and, $\mathbf{\widetilde{y}}_{k}(0)=0$ otherwise, so that $\mathbf{y}(0)$ and $\mathbf{\widetilde{y}}(0)$ obey inequalities~(\ref{eq:multiupp}). In this case,
\begin{eqnarray}
\left\{\begin{array}{lll} \mathbf{y}_{k^{\star}}(t) & \geq & \mathbf{\widetilde{y}}_{k^{\star}}(t)\rightarrow \left(1-\frac{1}{\gamma^{k^{\star}}d}\right)\mathbf{1}\\ \sum_{k\neq k^{\star}}\mathbf{y}_{k}(t) & \leq & \mathbf{\widetilde{y}}_{\hat{k}}(t)\rightarrow \mathbf{0}\end{array}\right.
\end{eqnarray}
from Theorem~\ref{th:converg} and since $\left(\mathbf{\widetilde{y}}_{k^{\star}}(t),\mathbf{\widetilde{y}}_{\hat{k}}(t)\right)$ is solution of~(\ref{eq:dynam})-(\ref{eq:dynam2}), that is, $\left(\mathbf{\widetilde{y}}_{k^{\star}}(t),\mathbf{\widetilde{y}}_{\hat{k}}(t)\right)$ refers to the evolution of two strains $k^{\star}$ and $\hat{k}$ and from Theorem~\ref{th:converg} the strongest $k^{\star}$ may survive and the weaker one $\hat{k}$ dies out.

Similarly, by considering the weakest strain $w$, i.e., $\gamma^{w}<\gamma^{k}$ for all $k\in \left\{1,\ldots,K\right\}$, the inequalities~(\ref{eq:multiupp})-(\ref{eq:multiupp2}) can be reverted,
\begin{eqnarray}
\left\{\begin{array}{lll} \mathbf{y}_{k^{\star}}(0) & \leq & \mathbf{\hat{y}}_{k^{\star}}(0)\\ \sum_{k\neq k^{\star}}\mathbf{y}_{k}(0) & \geq & \mathbf{\hat{y}}_{w}(0)\end{array}\right. & \Rightarrow &
\left\{\begin{array}{lll} \mathbf{y}_{k^{\star}}(t) & \leq &  \mathbf{\hat{y}}_{k^{\star}}(t) \rightarrow \left(1-\frac{1}{\gamma^{k^{\star}}d}\right)\mathbf{1}\\ \sum_{k\neq k^{\star}} \mathbf{y}_{k}(t) & \geq & \mathbf{\hat{y}}_{w}(t)\rightarrow \mathbf{0}\end{array}\right.. \nonumber
\end{eqnarray}
To sum up, for any initial condition $\mathbf{y}(0)$, we can choose $\mathbf{\widetilde{y}}(0),\mathbf{\hat{y}}(0)\in\left[0,1\right]^{M\times K}$ so that 
\begin{eqnarray}
\mathbf{\widetilde{y}}_{k^{\star}}(t)  \leq  \mathbf{y}_{k^{\star}}(t)  \leq  \mathbf{\hat{y}}_{k^{\star}}(t)\nonumber\\ 
\mathbf{\hat{y}}_{w}(t)  \leq  \sum_{k\neq k^{\star}} \mathbf{y}_{k}(t)\leq \mathbf{\widetilde{y}}_{\hat{k}}(t)\nonumber
\end{eqnarray}
for all $t\geq 0$ and thus,
\begin{eqnarray}
\mathbf{y}_{k^{\star}}(t) & \rightarrow & \left(1-\frac{1}{\gamma^{k^{\star}}d}\right)\mathbf{1}\nonumber\\
\sum_{k\neq k^{\star}} \mathbf{y}_{k}(t) & \rightarrow & \mathbf{0}.\nonumber
\end{eqnarray}
It is easy to check that the set $\left[0,1\right]^{M\times K}$ is invariant under the dynamics~(\ref{eq:dynammultivi}). The Theorem is now proved.
\end{proof}

\section{Concluding Remarks}\label{sec:asymmetry}
\hspace{0.43cm}
There are three issues in determining the macroscopic behavior in stochastic networks:
\begin{inparaenum}[1)]
\item finding a Markovian macrostate, i.e., low dimensional functionals of the microstate $\mathbf{X}^N(t)$ that are Markov;
\item deriving the equations for the dynamics of the macrostate in the limit of large networks--the mean field dynamics of the macrostate; and
\item studying the qualitative dynamics of the mean field. The first and second items are dealt with in~\cite{augusto_moura_emergent}; the third is our concern here.
  \end{inparaenum}

We analyzed the limiting (in the number of nodes) dynamics of a virus spreading in a regular multipartite network. Our method to derive the qualitative analysis of such coupled nonlinear dynamical system is not Lyapunov theory nor numerical simulations based. Instead, we explored a monotonous structure of the system, upper/lower bounding by simpler solutions any solution of the mean field equations. Our main conclusions for symmetric generic regular multipartite networks are:
\begin{enumerate}
\item \textit{Virus Resilience}: If $\gamma >\frac{1}{d}$, the virus persists in the network; otherwise, it dies out.
\item \textit{Natural Selection--Survival of the Fittest}: Only one strain (the most virulent one) survives, the remaining weaker ones die out; if $\gamma^{k^{\star}} >\gamma^k$ for all $k\neq k^{\star}$ with $\gamma^{k^{\star}}>\frac{1}{d}$, then virus~$k^{\star}$ persists in the network and all the remaining strains die out.
\end{enumerate}
 For general multipartite networks, the break of symmetry may defy natural selection; this is bing pursued in our current research.

\appendix
\section{Appendix}

\begin{theorem}
\label{th:analytic}
Let $f\,:\,\left(0,+\infty\right)\rightarrow \mathbb{R}$ be an analytic function. If for some $T\in\mathbb{R}$ we have $\stackrel{(k)}{f}{\!\!\!\!}(T)>0$, $\stackrel{(m)}{f}{\!\!\!\!}(T)=0$, $\forall\,\,{m=0,1,\ldots,k-1}$ and $k\geq 1$ then, there exists $\epsilon>0$ such that $f(t)>0$ for all $t\in\left(T,T+\epsilon\right)$.
\end{theorem}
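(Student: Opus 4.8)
The plan is to reduce the statement to the behaviour of the lowest-order nonvanishing Taylor coefficient of $f$ at $T$. Since $f$ is analytic on $(0,+\infty)$, it is in particular $C^{\infty}$ in a neighbourhood of $T$, which is all the regularity the argument requires; the full strength of analyticity is not essential, but it guarantees that every derivative used below exists and is continuous.

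First I would invoke Taylor's theorem with the Lagrange form of the remainder at order $k$. For every $t$ in a neighbourhood of $T$ there is a point $\xi$ strictly between $T$ and $t$ such that
\begin{equation}
f(t)=\sum_{m=0}^{k-1}\frac{\stackrel{(m)}{f}(T)}{m!}\,(t-T)^m+\frac{\stackrel{(k)}{f}(\xi)}{k!}\,(t-T)^k.\nonumber
\end{equation}
By hypothesis $\stackrel{(m)}{f}(T)=0$ for every $m=0,1,\ldots,k-1$, so the entire finite sum vanishes and the expression collapses to $f(t)=\tfrac{1}{k!}\stackrel{(k)}{f}(\xi)\,(t-T)^k$. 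This is the crux of the proof: the first $k-1$ vanishing derivatives kill all lower-order contributions, leaving a single term whose sign I can control.

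Next I would use continuity of $\stackrel{(k)}{f}$ together with the hypothesis $\stackrel{(k)}{f}(T)>0$ to pick $\epsilon>0$ so small that $\stackrel{(k)}{f}(s)>0$ for all $s\in(T-\epsilon,T+\epsilon)$. For any $t\in(T,T+\epsilon)$ the intermediate point $\xi$ lies between $T$ and $t$, hence also inside $(T-\epsilon,T+\epsilon)$, so $\stackrel{(k)}{f}(\xi)>0$; and because $t>T$ we have $t-T>0$, whence $(t-T)^k>0$ regardless of the parity of $k$. Multiplying these two positive quantities by the positive factor $1/k!$ gives $f(t)>0$, which is exactly the asserted conclusion.

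I do not expect a genuine obstacle here, as the result is a routine real-analysis fact; the only points deserving care are bookkeeping. One must note that the remainder point $\xi$ automatically stays inside the neighbourhood on which $\stackrel{(k)}{f}$ is positive, since it is trapped between $T$ and $t$, and that the one-sided character of the conclusion is consistent with $(t-T)^k$ being positive precisely because we restrict to $t>T$. An equally short alternative, which I would mention only as a remark, is to exploit the convergent power series directly: writing $f(t)=(t-T)^k\big[\tfrac{1}{k!}\stackrel{(k)}{f}(T)+o(1)\big]$ as $t\to T^{+}$, the bracket tends to the positive limit $\tfrac{1}{k!}\stackrel{(k)}{f}(T)$, so it stays positive on a one-sided neighbourhood and the sign of $f$ is read off immediately.
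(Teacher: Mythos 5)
Your proof is correct and follows essentially the same route as the paper: expand $f$ in a Taylor polynomial of order $k$ about $T$, observe that the hypotheses annihilate all terms below order $k$, and conclude from the sign of the single surviving term. The only difference is cosmetic---you control the remainder via the Lagrange form and continuity of $f^{(k)}$, whereas the paper bounds the Peano-type remainder $r(t)/(t-T)^{k}$ by $\tfrac{1}{2}f^{(k)}(T)$ near $T$; indeed, the alternative you sketch in your closing remark is precisely the paper's argument.
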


\begin{proof}
Without loss of generality, assume $T=0$. Since $f\in C^{\omega}(\mathbb{R})$ then,
\begin{equation}
f(t) = f(0)+\dot{f}(0)t+\ddot{f}(0)t^2+\ldots+\stackrel{(k)}{f}{\!\!}(0)t^k+r(t) =  f(0)+\left( \stackrel{(k)}{f}{\!\!}(0)+\frac{r(t)}{t^k}\right)t^k,\nonumber
\end{equation}
with $\frac{|r(t)|}{t^k}\rightarrow 0$ as $t\rightarrow 0$. Choose $\delta$ such that $\frac{|r(t)|}{t^k}< \frac{\stackrel{(k)}{f}{\!}(0)}{2}$, $\forall\,\,\,{t\in\left(0,\delta\right)}$. Then,
\begin{equation}
\stackrel{(k)}{f}{\!\!}(0)+\frac{r(t)}{t^k}>0,\,\,\, \forall\,\,\,{t\in\left(0,\delta\right)}.\nonumber
\end{equation}
Then,
\begin{equation}
f(t)=f(0)+\left(\stackrel{(k)}{f}{\!\!}(0)+\frac{r(t)}{t^k}\right)t^k>0,\,\,\, \forall\,\,\,{t\in\left(0,\delta\right)}.\nonumber
\end{equation}
\end{proof}



\small
\bibliographystyle{IEEEtran}
\bibliography{IEEEabrv,biblio}

\begin{thebibliography}{10}
\providecommand{\url}[1]{#1}
\csname url@samestyle\endcsname
\providecommand{\newblock}{\relax}
\providecommand{\bibinfo}[2]{#2}
\providecommand{\BIBentrySTDinterwordspacing}{\spaceskip=0pt\relax}
\providecommand{\BIBentryALTinterwordstretchfactor}{4}
\providecommand{\BIBentryALTinterwordspacing}{\spaceskip=\fontdimen2\font plus
\BIBentryALTinterwordstretchfactor\fontdimen3\font minus
  \fontdimen4\font\relax}
\providecommand{\BIBforeignlanguage}[2]{{%
\expandafter\ifx\csname l@#1\endcsname\relax
\typeout{** WARNING: IEEEtran.bst: No hyphenation pattern has been}%
\typeout{** loaded for the language `#1'. Using the pattern for}%
\typeout{** the default language instead.}%
\else
\language=\csname l@#1\endcsname
\fi
#2}}
\providecommand{\BIBdecl}{\relax}
\BIBdecl

\bibitem{augusto_moura_emergent}
A.~Santos, J.~M.~F. Moura, and J.~M.~F. Xavier, ``Emergent behavior in
  multipartite large networks: Multi-virus epidemics,'' 2013, submitted.
  http://arxiv.org/abs/1306.6198.

\bibitem{Daley}
D.~J. Daley and J.~Gani, \emph{Epidemic Modelling: An Introduction}.\hskip 1em
  plus 0.5em minus 0.4em\relax Cambridge, UK: Cambridge University Press, 2001.

\bibitem{Antunes}
N.~Antunes, C.~Fricker, P.~Robert, and D.~Tibbi, ``Analisys of loss networks
  with routing,'' \emph{The Annals of Applied Probability}, vol.~16, no.~4, pp.
  2007--2026, 2006.

\bibitem{Antunes2}
------, ``Stochastic networks with multiple stable points,'' \emph{The Annals
  of Applied Probability}, vol.~36, no.~1, pp. 255--278, 2008.

\bibitem{Mieghem}
P.~Van~Mieghem, J.~Omic, and R.~Kooij, ``Virus spread in networks,''
  \emph{Networking, IEEE/ACM Transactions on}, vol.~17, no.~1, pp. 1 --14, feb.
  2009.

\bibitem{paper:CDC}
A.~Santos and J.~M.~F. Moura, ``Emergent behavior in large scale networks,'' in
  \emph{2011 50th IEEE Conference on Decision and Control and European Control
  Conference (CDC-ECC)}, December 2011, pp. 4485 --4490.

\bibitem{Pastor-Satorras-Vespignani-2001}
\BIBentryALTinterwordspacing
R.~Pastor-Satorras and A.~Vespignani, ``Epidemic spreading in scale-free
  networks,'' \emph{Phys. Rev. Lett.}, vol.~86, pp. 3200--3203, Apr 2001.
  [Online]. Available:
  \url{http://link.aps.org/doi/10.1103/PhysRevLett.86.3200}
\BIBentrySTDinterwordspacing

\bibitem{particle}
P.~Donnellya and D.~Welsha, ``Finite particle systems and infection models,''
  \emph{Mathematical Proceedings of the Cambridge Philosophical Society},
  vol.~94, no.~1, pp. 167--182, July 1983.

\bibitem{Jackson}
M.~O. Jackson, \emph{Social and Economic Networks}.\hskip 1em plus 0.5em minus
  0.4em\relax Princeton University Press, 2008.

\bibitem{Paper:CDC-2012}
A.~Santos and J.~M.~F. Moura, ``Diffusion and topology: Large densely connected
  bipartite networks,'' in \emph{2012 51th IEEE Conference on Decision and
  Control and European Control Conference (CDC-ECC)}, December 2012, pp. 4485
  --4490.

\end{thebibliography}
%


\end{document}